\documentclass[acmsmall,screen]{acmart}\settopmatter{}

\newif\ifhideproofs
\hideproofstrue

\acmPrice{}
\acmDOI{10.1145/3485475}
\acmYear{2021}
\acmSubmissionID{oopsla21main-p9-p}
\acmJournal{PACMPL}
\acmVolume{5}
\acmNumber{OOPSLA}
\acmArticle{98}
\acmMonth{10}

\startPage{1}

\setcopyright{none}
\copyrightyear{2021}

\bibliographystyle{ACM-Reference-Format}
\citestyle{acmauthoryear}   %

\usepackage{booktabs}   %
\usepackage{subcaption} %

\usepackage{booktabs}
\usepackage{siunitx}
\usepackage{microtype}
\usepackage{amsmath}
\usepackage{stmaryrd}
\usepackage{thm-restate}
\usepackage{multicol}
\usepackage{color}
\usepackage{algorithm}
\usepackage{environ}
\usepackage{float}
\usepackage{multirow}
\usepackage[noend]{algpseudocode}
\usepackage{graphicx}
\usepackage{tikz}
\usepackage{thm-restate}
\usepackage{mathpartir}
\usepackage{xspace}
\usepackage{paralist}
\usepackage[inline]{enumitem}
\usepackage[capitalize]{cleveref}
\usetikzlibrary{fadings,decorations.pathmorphing,decorations.pathreplacing}
\usetikzlibrary{calc}
\usetikzlibrary{fit}
\usetikzlibrary{arrows,automata}
\usetikzlibrary{positioning}
\usepackage{ifthen}
\usepackage{xargs}
\usepackage[makeroom]{cancel}
\usepackage{fontawesome}

\theoremstyle{acmdefinition}
\newtheorem{remark}{Remark}

\AtEndPreamble{%
\theoremstyle{acmplain}
\newtheorem{notation}[theorem]{Notation}
\newtheorem*{notation*}{Notation}
}

\ifhideproofs

\toks1000={}
\newcounter{proofcount}

\NewEnviron{toappendix}{%
  \global\toks\numexpr\the\toks1000+\value{proofcount}\relax=\expandafter{\BODY}
  \stepcounter{proofcount}}

\NewEnviron{proofof}[1]{%
  \edef\next{%
    \noexpand\begin{proof}[Proof of \noexpand\cref{\unexpanded\expandafter{#1}}]%
    \unexpanded\expandafter{\BODY}}%
  \global\toks\numexpr\the\toks1000+\value{proofcount}\relax=\expandafter{\next\end{proof}}
  \stepcounter{proofcount}}

\makeatletter
\def\printproofs{%
  \count@=\z@
  \loop
    \the\toks\numexpr\the\toks1000+\count@\relax
    \ifnum\count@<\value{proofcount}%
    \advance\count@\@ne
  \repeat}
\makeatother

\else %

\NewEnviron{toappendix}{}
\NewEnviron{proofof}[1]{\begin{proof}\BODY\end{proof}}
\newcommand\printproofs{}

\fi %

\crefformat{section}{#2\S{}#1#3}
\Crefname{section}{Section}{Sections}
\Crefformat{section}{Section #2#1#3}
\Crefname{figure}{\text{Figure}}{\text{Figures}}
\crefname{corollary}{\text{Corollary}}{\text{corollaries}}
\Crefname{corollary}{\text{Corollary}}{\text{Corollaries}}
\crefname{lemma}{\text{Lemma}}{\text{Lemmas}}
\Crefname{lemma}{\text{Lemma}}{\text{Lemmas}}
\crefname{proposition}{\text{Prop.}}{\text{Propositions}}
\Crefname{proposition}{\text{Proposition}}{\text{Propositions}}
\crefname{definition}{\text{Def.}}{\text{Definitions}}
\Crefname{definition}{\text{Definition}}{\text{Definitions}}
\crefname{notation}{\text{Notation}}{\text{Notations}}
\Crefname{notation}{\text{Notation}}{\text{Notations}}
\crefname{theorem}{\text{Theorem}}{\text{Theorems}}
\Crefname{theorem}{\text{Theorem}}{\text{Theorems}}
\crefname{figure}{\text{Fig.}}{\text{Figures}}
\Crefname{figure}{\text{Figure}}{\text{Figures}}

\makeatletter

\makeatother

\AtEndPreamble{%
\newcounter{claimcounter}
\numberwithin{claimcounter}{theorem}

\crefname{claimcounter}{\text{Claim}}{\text{Claims}}
\Crefname{claimcounter}{\text{Claim}}{\text{Claims}}
}

\let\originalparagraph\paragraph
\renewcommand{\paragraph}[1]{\originalparagraph{\textbf{#1}}}

\newcommand{\textcode}[1]{\texorpdfstring{\texttt{#1}}{#1}}
\newcommand{\kw}[1]{\textbf{\textcode{#1}}}
\newcommand{\mln}[1]{\textit{#1}}

\newcommand{\iteml}[3]{
  \kw{if} \; #1\\
  \begin{array}[t]{@{}l@{}l}
    \kw{then}& \begin{array}[t]{l} #2 \end{array} \\
    \kw{else}& \begin{array}[t]{l} #3 \end{array} \\
  \end{array}
}

\newcommand{\itneml}[2]{\kw{if}\;#1\\
\begin{array}[t]{@{}l@{}l}
\kw{then}& \begin{array}[t]{l} #2 \end{array} \\
\end{array}}

\newcommand{\ie}{\emph{i.e.,} }
\newcommand{\eg}{\emph{e.g.,} }

\newcommand{\wrt}{w.r.t.~}
\newcommand{\aka}{a.k.a.~}

\newcommand{\inarrT}[1]{\begin{array}[t]{@{}l@{}}#1\end{array}}

\newcommand{\inarr}[1]{\begin{array}{@{}l@{}}#1\end{array}}
\newcommand{\inarrII}[2]{\begin{array}{@{}l@{~~}||@{~~}l@{}}\inarr{#1}&\inarr{#2}\end{array}}

\newcommand{\inarrIV}[4]{\begin{array}{@{}l@{~~}||@{~~}l@{~~}||@{~~}l@{~~}||@{~~}l@{}}\inarr{#1}&\inarr{#2}&\inarr{#3}&\inarr{#4}\end{array}}

\newcommand{\inarrTwo}[2]{\begin{array}{@{}l@{~~}m{2em}@{~~}l@{}}\inarr{#1}&\qquad&\inarr{#2}\end{array}}

\renewcommand{\comment}[1]{\color{teal}{~~\texttt{/\!\!/}\textit{#1}}}

\newcommand{\set}[1]{\{{#1}\}}

\newcommand{\st}{\; | \;}
\newcommand{\N}{{\mathbb{N}}}
\newcommand{\dom}[1]{\textit{dom}{({#1})}}

\newcommand{\codom}[1]{\textit{codom}{({#1})}}

\newcommand{\tup}[1]{{\langle{#1}\rangle}}
\newcommand{\nin}{\not\in}
\newcommand{\suq}{\subseteq}

\newcommand{\maketil}[1]{{#1}\ldots{#1}}
\newcommand{\til}{\maketil{,}}

\newcommand{\cdottil}{\maketil{\cdot}}

\newcommand{\rst}[1]{|_{#1}}

\newcommand{\defiff}{\mathrel{\stackrel{\mathsf{\triangle}}{\Leftrightarrow}}}
\newcommand{\defeq}{\triangleq}
\newcommand{\powerset}[1]{\mathcal{P}({#1})}

\makeatletter
\newcommand{\raisemath}[1]{\mathpalette{\raisem@th{#1}}}
\newcommand{\raisem@th}[3]{\raisebox{#1}{$#2#3$}}
\makeatother

\newcommandx{\yaHelper}[2][1=\empty]{%
\ifthenelse{\equal{#1}{\empty}}%
  { \ensuremath{ \scriptstyle{ #2 } } } %
  { \raisebox{ #1 }[0pt][0pt]{ \ensuremath{ \scriptstyle{ #2 } } } }  %
}

\newcommandx{\yrightarrow}[4][1=\empty, 2=\empty, 4=\empty, usedefault=@]{%
  \ifthenelse{\equal{#2}{\empty}}
  { \xrightarrow{ \protect{ \yaHelper[ #4 ]{ #3 } } } } %
  { \xrightarrow[ \protect{ \yaHelper[ #2 ]{ #1 } } ]{ \protect{ \yaHelper[ #4 ]{ #3 } } } } %
}

\newcommand{\astep}[1]{
\ifthenelse{\equal{#1}{\empty}}{\xrightarrow{}}
{\mathrel{\raisebox{-0.8pt}{\ensuremath{\xrightarrow{#1}}}}}}

\colorlet{colorPO}{gray!60!black}
\colorlet{colorRF}{green!60!black}
\colorlet{colorMO}{orange}
\colorlet{colorFR}{purple}
\colorlet{colorECO}{red!80!black}
\colorlet{colorSYN}{green!40!black}
\colorlet{colorHB}{blue}
\colorlet{colorPPO}{magenta}
\colorlet{colorPB}{olive}
\colorlet{colorSBRF}{olive}
\colorlet{colorRMW}{olive!70!black}
\colorlet{colorRSEQ}{blue}
\colorlet{colorSC}{violet}
\colorlet{colorPSC}{violet}
\colorlet{colorREL}{olive}
\colorlet{colorCONFLICT}{olive}
\colorlet{colorRACE}{olive}
\colorlet{colorWB}{orange!70!black}
\colorlet{colorPSC}{violet}
\colorlet{colorSCB}{violet}
\colorlet{colorDEPS}{violet}
\colorlet{colorBF}{green!40!black}

\tikzset{
   every path/.style={>=stealth},
   po/.style={->,color=colorPO,thin,shorten >=-0.5mm,shorten <=-0.5mm},
   sw/.style={->,color=colorSYN,shorten >=-0.5mm,shorten <=-0.5mm},
   rf/.style={->,color=colorRF,dashed,,shorten >=-0.5mm,shorten <=-0.5mm},
   hb/.style={->,color=colorHB,thick,shorten >=-0.5mm,shorten <=-0.5mm},
   mo/.style={->,color=colorMO,dotted,very thick,shorten >=-0.5mm,shorten <=-0.5mm},
   no/.style={->,dotted,thick,shorten >=-0.5mm,shorten <=-0.5mm},
   fr/.style={->,color=colorFR,dotted,thick,shorten >=-0.5mm,shorten <=-0.5mm},
   deps/.style={->,color=colorDEPS,dotted,thick,shorten >=-0.5mm,shorten <=-0.5mm},
   rmw/.style={->,color=colorRMW,thick,shorten >=-0.5mm,shorten <=-0.5mm},
}

\newcommand{\evlab}[4]{{#1}^{#2}({#3},{#4})}

\newcommand{\evulab}[4]{{\lU}^{#1}({#2},{#3},{#4})}
\newcommand{\rlab}[3]{{\lR}^{#1}({#2},{#3})}

\newcommand{\wlab}[3]{{\lW}^{#1}({#2},{#3})}

\newcommand{\ulab}[4]{{\lU}^{#1}({#2},{#3},{#4})}

\newcommand{\proplab}{{\mathtt{prop}}}

\newcommand{\lE}{{\mathtt{E}}}

\newcommand{\lR}{{\mathtt{R}}}
\newcommand{\lW}{{\mathtt{W}}}

\newcommand{\lQ}{{\mathtt{Q}}}

\newcommand{\lU}{{\mathtt{RMW}}}
\newcommand{\lF}{{\mathtt{F}}}

\newcommand{\linit}{{\mathtt{init}}}
\newcommand{\lSigma}{{\mathbf{\Sigma}}}
\newcommand{\lTheta}{{\mathbf{\Theta}}}

\newcommand{\lTLAB}{{\mathtt{tlab}}}
\newcommand{\lLAB}{{\mathtt{elab}}}
\newcommand{\lTID}{{\mathtt{tid}}}
\newcommand{\lSN}{{\mathtt{sn}}}
\newcommand{\lTYP}{{\mathtt{typ}}}
\newcommand{\lLOC}{{\mathtt{loc}}}

\newcommand{\lVALR}{{\mathtt{val_r}}}
\newcommand{\lVALW}{{\mathtt{val_w}}}

\newcommand{\lSRC}{{\mathtt{src}}}
\newcommand{\lTGT}{{\mathtt{tgt}}}

\newcommand{\lTS}{{\mathtt{ts}}}
\newcommand{\lVIEW}{{\mathtt{view}}}
\newcommand{\lVAL}{{\mathtt{val}}}

\newcommand{\rf}{{\color{colorRF}\mathit{rf}}}
\newcommand{\mo}{{\color{colorMO}\mathit{mo}}}

\newcommand{\lX}{\mathtt{X}}
\newcommand{\lPO}{{\color{colorPO}\mathtt{po}}}
\newcommand{\lRF}{{\color{colorRF} \mathtt{rf}}}

\newcommand{\lMO}{{\color{colorMO} \mathtt{mo}}}

\newcommand{\lFR}{{\color{colorFR} \mathtt{fr}}}

\newcommand{\lHB}{{\color{colorHB}\mathtt{hb}}}
\newcommand{\lHBSC}{{\color{colorHB}\mathtt{hb}}_{\SC}}
\newcommand{\lHBTSO}{{\color{colorHB}\mathtt{hb}}_{\TSO}}
\newcommand{\lHBRA}{{\color{colorHB}\mathtt{hb}}_{\RA}}

\newcommand{\lPPO}{{\color{colorPPO}\mathtt{ppo}}}

\newcommand{\lSC}{{\mathtt{sc}}}
\newcommand{\lRA}{{\mathtt{ra}}}

\newcommand{\lmakeE}[1]{#1\mathtt{e}}
\newcommand{\lRFE}{\lmakeE{\lRF}}

\newcommand{\lFRE}{\lmakeE{\lFR}}
\newcommand{\lMOE}{\lmakeE{\lMO}}

\newcommand{\dpo}[2]{\draw[po] (#1) edge (#2);}
\newcommand{\dfr}[3][below]{\draw[fr] (#2) edge node[#1] {\smaller$\lFR$} (#3);}
\newcommandx{\drf}[4][1=, 2=below]{\draw[rf, #1] (#3) edge node[#2] {\smaller$\lRF$} (#4);}
\newcommandx{\dmo}[4][1=, 2=right]{\draw[mo, #1] (#3) edge node[#2] {\smaller$\lMO$} (#4);}

\newcommand{\Event}{\mathsf{Event}}
\newcommand{\Init}{\mathsf{Init}}
\newcommand{\Tid}{\mathsf{Tid}}
\newcommand{\Loc}{\mathsf{Loc}}
\newcommand{\Val}{\mathsf{Val}}
\newcommand{\Lab}{\mathsf{ELab}}

\newcommand{\Exec}{\mathsf{EGraph}}

\newcommand{\readInst }[2]{#1 \;{:=}\;#2}

\newcommand{\ifGotoInst}[2]{\kw{if} \; #1 \; \kw{goto} \; #2}
\newcommand{\gotoInst}[1]{\kw{goto} \; #1}
\newcommand{\writeInst}[2]{#1\;{:=}\;#2}
\newcommand{\assignInst}[2]{#1\;{:=}\;#2}
\newcommand{\incInstP}[2]{\faddInstn({#1},{#2})}
\newcommand{\incInst}[3]{#1 \;{:=}\;\faddInstn({#2},{#3})}

\newcommand{\casInstn}{\kw{CAS}}
\newcommand{\faddInstn}{\kw{FADD}}
\newcommand{\funcDecl}[4]{
    #1 \; #2(#3) \; \{ \;
    \begin{array}[t]{@{}l@{}}
    #4
    \; \}
    \end{array}  \\
}
\newcommand{\funcDeclml}[4]{
    #1 \; #2(#3) \; \{ \\
    \quad \begin{array}[t]{@{}l@{}}
    #4
    \end{array}  \\
    \} \\
}
\newcommand{\funcDeclsl}[4]{
    #1 \; #2(#3) \; \{ \; #4 \; \}
}

\newcommand{\repeatInst}[2]{\kw{repeat}\; \{\; #1 \;\}\;\kw{until}\;#2}
\newcommand{\while}[2]{\kw{while}\;#1\;\{\; #2 \; \}}

\newcommand{\repeatml}[2]{
\kw{repeat} \; \{ \;
\begin{array}[t]{@{}l@{}}
#1 \; \}
\end{array} \\
\quad \kw{until} \, #2
}

\newcommand{\RC}{\ensuremath{\mathsf{RC11}}\xspace}

\newcommand{\RA}{\ensuremath{\mathsf{RA}}\xspace}
\newcommand{\SCOH}{\ensuremath{\mathsf{StrongCOH}}\xspace}
\newcommand{\SC}{\ensuremath{\mathsf{SC}}\xspace}
\newcommand{\TSO}{\ensuremath{\mathsf{TSO}}\xspace}

\newcounter{mylabelcounter}

\makeatletter
\newcommand{\labelAxiom}[2]{%
\hfill{\normalfont\textsc{(#1)}}\refstepcounter{mylabelcounter}
\immediate\write\@auxout{%
  \string\newlabel{#2}{{\unexpanded{\normalfont\textsc{#1}}}{\thepage}{{\unexpanded{\normalfont\textsc{#1}}}}{mylabelcounter.\number\value{mylabelcounter}}{}}
}%
}
\makeatother

\newcommand{\squishlist}[1][$\bullet$]{%
 \begin{list}{#1}
  { \setlength{\itemsep}{0pt}
     \setlength{\parsep}{0pt}
     \setlength{\topsep}{1pt}
     \setlength{\partopsep}{0pt}
     \setlength{\leftmargin}{1.2em}
     \setlength{\labelwidth}{0.5em}
     \setlength{\labelsep}{0.4em} } }
\newcommand{\squishend}{
  \end{list}  }

\newcommand{\m}{{M}}
\renewcommand{\b}{{B}}
\newcommand{\buff}{{b}}

\newcommand{\A}{{A}}
\newcommand{\init}{\mathit{init}}
\newcommand{\M}{{\mathcal{M}}}

\newcommand{\prog}{{P}}
\newcommand{\loc}{{x}}
\newcommand{\loca}{{y}}

\newcommand{\tid}{{\tau}}
\newcommand{\tida}{{\pi}}
\newcommand{\lab}{{l}}
\newcommand{\val}{v}
\newcommand{\valr}{\val_\lR}
\newcommand{\valw}{\val_\lW}

\newcommand{\beh}{\beta}

\newcommand{\Beh}[1]{\mathsf{B}({#1})}
\newcommand{\Behtf}[1]{\mathsf{B^{tf}}({#1})}
\newcommand{\Behmf}[1]{\mathsf{B^{mf}}({#1})}
\newcommand{\Behfin}[1]{\mathsf{B^{fin}}({#1})}

\newcommand{\OTraces}[1]{\mathsf{OTr}({#1})}
\newcommand{\OTracesfin}[1]{\mathsf{OTr^{fin}}({#1})}

\newcommand{\en}{\nu}
\providecommand{\G}{}
\renewcommand{\G}{\mathcal{G}} %

\newcommand{\fair}{\texttt{fair}}

\newcommand{\Time}{\mathsf{Time}}
\newcommand{\View}{\mathsf{View}}
\newcommand{\Msg}{\mathsf{Msg}}
\newcommand{\msga}{m}

\newcommand{\ts}{t}
\newcommand{\msg}[4]{\tup{#1:#2@#3,#4}}
\newcommand{\view}{\mathit{V}}
\newcommand{\viewinit}{\mathtt{V}_{\text{init}}}
\newcommand{\memoryinit}{\mathtt{memory}_{\text{init}}}
\newcommand{\Tview}{\mathit{T}}

\newcommand{\ev}[3]{\tup{#1, #2 : #3}}

\newcommand{\progstate}{\overline{p}}

\newcommand{\seq}{\mathbin{;}}

\newcommand{\run}{\mu}
\newcommand{\trace}{\rho}
\newcommand{\traces}{S}

\newcommand{\toeventsgen}[2]{\overline{#1}^{#2}}
\newcommand{\toevents}[1]{\toeventsgen{\trace}{#1}}
\newcommand{\evorder}{<_{\trace}}
\newcommand{\tstep}[2]{#1 : #2}

\newcommand{\tmap}{\mathsf{tmap}}

\newcommand{\smap}{\mathsf{smap}}
\newcommand{\vmap}{\mathsf{vmap}}
\newcommand{\tslot}{\mathsf{tslot}}

\newcommand{\safepoints}{\mathsf{safepoints}}

\newcommand{\etom}{\mathsf{e2m}}
\newcommand{\vmapprop}{\mathsf{vmap}_\proplab}
\newcommand{\vmapfull}{\mathsf{vmap}_{\mathsf{full}}}
\newcommand{\settoview}{\mathsf{set2view}}

\newcommand{\lRs}{(\lR \setminus \lU)}
\newcommand{\lWs}{(\lW \setminus \lU)}

\pgfdeclarelayer{background}
\pgfsetlayers{background,main}

\begin{document}

\title{Making Weak Memory Models Fair}

\author{Ori Lahav}
\affiliation{
  \institution{Tel Aviv University}            %
  \country{Israel}                    %
}
\email{orilahav@tau.ac.il}          %

\author{Egor Namakonov}
\affiliation{
  \institution{St. Petersburg University}           %
  \country{Russia}                   %
}
\affiliation{
  \institution{JetBrains Research}           %
  \country{Russia}                   %
}
\email{egor.namakonov@jetbrains.com}         %

\author{Jonas Oberhauser}
\affiliation{
  \institution{Huawei Dresden Research Center} %
  \country{Germany}                    %
}
\affiliation{
  \institution{Huawei OS Kernel Lab}            %
  \country{Germany}                    %
}
\email{jonas.oberhauser@huawei.com}          %

\author{Anton Podkopaev}
\affiliation{
  \institution{HSE University}            %
  \country{Russia}                    %
}
\affiliation{
  \institution{JetBrains Research}            %
  \country{Russia}                    %
}
\email{apodkopaev@hse.ru}

\author{Viktor Vafeiadis}
\affiliation{
  \institution{MPI-SWS}            %
  \country{Germany} %
}
\email{viktor@mpi-sws.org}

\begin{abstract}
Liveness properties, such as termination, of even the simplest shared-memory concurrent programs under
sequential consistency typically require some fairness assumptions about the
scheduler.  Under weak memory models, we observe that the
standard notions of \emph{thread fairness} are insufficient, 
and an additional fairness property, which we call \emph{memory fairness},
is needed.

In this paper, we propose a uniform definition for memory fairness
that can be integrated into any declarative memory model enforcing
acyclicity of the union of the program order and the reads-from relation.
For the well-known models, SC, x86-TSO, RA, and StrongCOH, that have equivalent operational and
declarative presentations, we show that our declarative memory fairness condition
is equivalent to an intuitive model-specific operational notion of memory fairness, 
which requires the memory system to fairly execute its internal propagation steps.
Our fairness condition preserves the correctness of local
transformations and the compilation scheme from RC11 to x86-TSO,
and also enables the first formal proofs of termination of mutual exclusion
lock implementations under declarative weak memory models.

\end{abstract}

\begin{CCSXML}
<ccs2012>
   <concept>
       <concept_id>10003752.10003753.10003761.10003762</concept_id>
       <concept_desc>Theory of computation~Parallel computing models</concept_desc>
       <concept_significance>500</concept_significance>
       </concept>
   <concept>
       <concept_id>10003752.10010124.10010131</concept_id>
       <concept_desc>Theory of computation~Program semantics</concept_desc>
       <concept_significance>500</concept_significance>
       </concept>
   <concept>
       <concept_id>10003752.10003790.10002990</concept_id>
       <concept_desc>Theory of computation~Logic and verification</concept_desc>
       <concept_significance>300</concept_significance>
       </concept>
 </ccs2012>
\end{CCSXML}

\ccsdesc[500]{Theory of computation~Parallel computing models}
\ccsdesc[500]{Theory of computation~Program semantics}
\ccsdesc[300]{Theory of computation~Logic and verification}

\keywords{Formal semantics, weak memory models, concurrency, verification}

\maketitle

\begin{toappendix}
\section{Proofs of Propositions and Lemmas in the Paper}
\end{toappendix}

\section{Introduction}
\label{sec:intro}

Suppose we want to prove termination of a concurrent program under a
full-featured weak memory model, such as RC11~\cite{scfix}.
Sadly, this is not currently possible because RC11 does not support reasoning about liveness.
Extending its formal definition to enable reasoning about liveness properties
is very important because, as shown by \citet[Table 2]{vsync},
multiple existing mutual exclusion lock implementations hang if too few fences are used.
This is also the case for the published version of the HMCS algorithm~\cite{hmcs}:
it contains such a termination bug, a simplified version of which we describe in~\cref{sec:mcslock}.

Termination of concurrent programs typically relies on some fairness assumptions about
concurrency as illustrated by the following program, whose variables are initialized with $0$.
\begin{equation*}
\tag{SpinLoop}\label{ex:spin-loop}
\inarrII{ x := 1 }{ \repeatInst{a := x}{(a \neq 0)} }
\end{equation*}
Under \emph{sequential consistency} (SC), the program can diverge if, \eg thread 2 is always scheduled and thread 1 never gets a chance to run.
This run is considered unfair because although thread 1 is always available to be scheduled, it is never selected.
A standard assumption is \emph{thread fairness}
(which is typically simply called fairness in the literature \cite{Lamport77,Park-ASS80,Lehmann1981,Francez86}),
namely that every (unblocked) non-terminated thread is eventually scheduled.
With a fair scheduler, \ref{ex:spin-loop} is guaranteed to terminate.

Under weak memory consistency,
thread fairness alone does not suffice to ensure termination of \ref{ex:spin-loop}
because merely executing the $x := 1$ write does not mean that
its effect is propagated to the other threads.
Take, for example, the operational TSO model \cite{x86-tso},
where writes are appended to a thread-local buffer
and are later asynchronously applied to the shared memory.
With such a model,
it is possible that the $x := 1$ write is forever stuck in the first thread's buffer
and so thread 2 never gets a chance to read $x=1$.
To rule out such behaviors, we introduce another property, \emph{memory fairness} (MF),
that ensures that threads do not indefinitely observe the same stale memory state.

Operational models can easily be extended to support MF by requiring fairness of the
internal transitions of the model, 
which correspond to the propagation of writes to the different threads.
For the standard interleaving semantics of SC~\cite{lamport-sc},
MF holds vacuously (because the model does not have any internal transitions).
For the usual TSO operational model \cite{x86-tso}, MF requires that every buffered write eventually propagates to the main memory.
For the operational characterization of \emph{release-acquire} (RA)
following \citet{Kang-al:POPL17}, more adaptations are necessary:
(1) we constrain the timestamp ordering
so that no write can overtake infinitely many other writes;
and (2) add a transition that forcefully updates the views of threads
so that all executed writes eventually become globally visible.
The same criteria are required for MF  in the model of \emph{strong coherence} (StrongCOH),
which is essentially a restriction of the promise-free fragment of \citet{Kang-al:POPL17}'s model (as well as of RC11)
to relaxed accesses.

In contrast, it is quite challenging to support MF in declarative (a.k.a.\ axiomatic) models,
which have become the norm for hardware architectures
(x86-TSO~\cite{x86-tso}, Power~\cite{herding-cats}, Arm~\cite{Pulte-al:POPL18}) and
programming languages (e.g., RC11~\cite{scfix}, OCaml~\cite{OCAMLraces},
JavaAtomics~\cite{Bender-al:OOPSLA19}, Javascript~\cite{Watt-al:PLDI20},
WebAssembly~\cite{Watt-al:OOPSLA19}) alike.
In these models, there are no explicit write propagation transitions so that MF could require them to eventually take place.
Further, the memory accesses of different threads are not even totally ordered,
so even the concept of an event eventually happening is not immediate.
We observe, however, neither internal transitions nor a total order are necessary for defining fairness;
what is important is that every event is preceded by only a finite number of other events,
and this can be defined on the execution graphs used by declarative models.

Specifically, for declarative models satisfying $(\lPO\cup\lRF)$-acyclicity 
(\ie acyclicity of the union of the program order and the reads-from relation),
such as RC11, SC, TSO, RA, and StrongCOH,
we show that MF can be defined in a uniform fashion
as prefix-finiteness of the extended coherence order.
The latter is a relation used in declarative models to order accesses to the same location
for guaranteeing SC-per-location~\cite{herding-cats}.
Requiring this relation to be prefix-finite 
means that in a fair execution no write can be preceded by an infinite number
of other events in this order (\eg reads that have not yet observed the write).

We justify the uniform declarative definition of memory fairness in three ways.
First,
we show that our declarative MF condition is equivalent to operational MF
for models that have equivalent declarative and operational presentations
(\ie SC, TSO, RA, and StrongCOH).
This requires extending the existing equivalence results between operational
and declarative models to \emph{infinite} executions, and involves more
advanced constructions that make use of memory fairness.
Second, we show that including our MF condition in the RC11 declarative language model,
which currently lacks any fairness guarantees,
incurs no performance overhead: the correctness of local program
transformations and the compilation scheme to TSO are unaffected.
Third, we show that memory fairness allows lifting robustness theorems about
finite executions to infinite ones.

We finally demonstrate that our declarative MF condition enables verification
of liveness properties of concurrent programs under RC11
by verifying termination and/or fairness of multiple lock implementations
(see \cref{sec:lockexamples}),
including the MCS lock once the fence missing in the presentation of \citet{hmcs}
is added.
Key to those proofs is a reduction theorem we show for the termination of
spinloops.
Under certain conditions about the program, which hold for multiple standard implementations, a spinloop terminates
under a fair model if and only if it exits whenever an iteration reads
only the latest writes in the coherence order.  For example, the loop in
\ref{ex:spin-loop} terminates because reading the latest write ($x := 1$)
exits the loop.

\paragraph{Outline}
In~\cref{sec:op_sem} we define fairness operationally
and incorporate it in the operational definitions of SC, x86-TSO, RA, and StrongCOH\@.
In~\cref{sec:dec_sem} we recap the declarative framework for defining memory models.
In~\cref{sec:fair-dec-sem} we present our declarative MF condition;
we establish its equivalence to the operational MF notions
and show that it preserves the existing compilation and optimization results for RC11
and that it allows lifting of robustness theorems to infinite executions.
In~\cref{sec:lockexamples} we show that the declarative fairness characterization yields an effective method for proving (non-)termination of spinloops
and illustrate it to prove deadlock-freedom and/or fairness of three lock implementations.
We conclude with a discussion of fairness in other models in~\cref{sec:discussion}.

\paragraph{Supplementary Material}
Our technical appendix \cite{appendix} contains typeset proofs for the lemmas and propositions of the article.
We also provide a Coq development \cite{artifact} containing:
\begin{itemize}
\item a formalization of operational and declarative fairness for \SC, \TSO, \RA, and \SCOH;
\item proofs of the aforementioned definitions' equivalence (\cref{thm:equiv_proof});
\item a proof of \cref{thm:spinloop termination basic} stating a sufficient loop termination condition;
\item proofs of termination of the spinlock client and of progress of the ticket lock client for all models satisfying "SC per location" property (which generalizes \cref{thm:spinlock-termination,thm:tickelock}) and of termination of the MCS lock client for \SC, \TSO and \RA (\cref{thm:hmcs-termination} without the RC11 part); and 
\item a proof of infinite robustness property (\cref{cor:robustness}, excluding the RC11 case).
\end{itemize}

\section{What is a Fair Operational Semantics?}
\label{sec:op_sem}

\begin{toappendix}
\subsection{Proofs for \Cref{sec:op_sem}}
\end{toappendix}

In this section, we define our operational framework and its fairness constraints.
We initially demonstrate our terminology for sequential consistency (SC).
In \cref{sec:op_sem_TSO,sec:op_sem_RA,sec:op_sem_SCOH}, we instantiate our framework
to the total store order (TSO), release/acquire (RA), and strong coherence (StrongCOH) models,
and discuss memory fairness in each of these models.

\paragraph{Labeled Transition Systems.}

Our formal development is based on \emph{labeled transition systems} (LTSs),
which we use to represent both programs and operational memory models.
We assume that the transition labels of these systems are split between
\emph{(externally) observable transition labels} and \emph{silent transition labels}.
Using transition labels we define a
\emph{trace} to be a (finite or infinite) sequence of transition labels (of any kind);
whereas an \emph{observable trace} is a
(finite or infinite) sequence of observable transition labels.
Then, LTSs capture sets of traces and observable traces
in the standard way, which is formulated below.

Formally, we define an LTS $\A$ to be a tuple
$\tup{Q,\Sigma,\Theta,\init,\astep{}}$, where $Q$ is a set of \emph{states},
$\Sigma$ is a set of \emph{observable transition labels},
$\Theta$ is a set of \emph{silent transition labels},
$\init\in Q$ is the \emph{initial state},
and ${\astep{}} \suq Q\times (\Sigma\uplus\Theta) \times Q$ is a set of \emph{transitions}.
We denote by $\A.\lQ$, $\A.\lSigma$, $\A.\lTheta$, $\A.\linit$, and $\astep{}_{\A}$ the components of an LTS $\A$.

We denote by $\lSRC(t)$, $\lTLAB(t)$, and $\lTGT(t)$ the three components of a transition $t\in\,\astep{}$.
For $\sigma\in \Sigma\uplus\Theta$,
we write $\astep{\sigma}$ for the relation $\set{\tup{\lSRC(t),\lTGT(t)} \st t\in{} \astep{}, \lTLAB(t)=\sigma}$.
We use $\astep{}$ for the relation $\bigcup_{\sigma\in \Sigma\uplus\Theta} \astep{\sigma}$.
We say that a transition label $\sigma\in \Sigma\uplus\Theta$ is \emph{enabled} in some state $q\in Q$
if $q \astep{\sigma} q'$ for some $q'\in Q$.

A \emph{run} of $\A$ is a (finite or infinite) sequence $\run$ of transitions in $\astep{}_{\A}$ such that
$\lSRC(\run(0))=\A.\linit$ and $\lTGT(\run(k-1))=\lSRC(\run(k))$ for every $k \geq 1$ in $\dom{\run}$.
A run $\run$ of $\A$ \emph{induces} the trace $\trace$ if $\trace(k) = \lTLAB(\run(k))$ for every $k \in \dom{\run}$.
Also, $\run$ induces the \emph{observable} trace $\trace'$ if  $\trace'$ is the restriction
to $\Sigma$ of some trace $\trace$ that is induced by $\run$.

An (observable) trace $\trace$ is called an (observable) trace of $\A$ if it is induced by some run of $\A$.
We write $\OTraces{\A}$ for the set of all observable traces of $\A$
and $\OTracesfin{\A}$ for the set of all finite observable traces of $\A$.

\paragraph{Domains and Event Labels.}

To define programs and their semantics, we fix sets
$\Loc$, $\Tid$, and $\Val$ of \emph{(shared) locations}, \emph{thread identifiers}, and \emph{values} (respectively).
We assume that $\Val$ contains a distinguished value $0$, which serves as the initial value for all locations.
In addition, we assume that $\Tid$ is finite, given by $\Tid=\set{1,2\til N}$ for some $N\geq 1$.
(Our main result below requires $\Tid$ to be finite, see \cref{rem:inf}.) 
We use $\loc,\loca$ to range over $\Loc$;
$\tid,\tida$ to range over $\Tid$;
and $\val$ to range over $\Val$.
Programs interact with the memory using \emph{event labels}, defined as follows.

\begin{definition}
\label{def:event_label}
An \emph{event label} $\lab$ is one of the following:
\begin{itemize}[leftmargin=15pt]
\item Read event label: $\rlab{}{\loc}{\valr}$ where $\loc\in \Loc$ and $\valr\in \Val$.
\item Write event label: $\wlab{}{\loc}{\valw}$ where $\loc\in \Loc$ and $\valw\in \Val$.
\item Read-modify-write label: $\ulab{}{\loc}{\valr}{\valw}$ where $\loc\in \Loc$ and $\valr,\valw\in \Val$.
\end{itemize}
The functions $\lTYP$, $\lLOC$, $\lVALR$, and $\lVALW$
return (when applicable) the type ($\lR/\lW/\lU$), location ($\loc$),
read value ($\val_\lR$), and written value ($\val_\lW$) of a given event label $\lab$.
We denote by $\Lab$ the set of all event labels.
\end{definition}

\begin{remark}
For conciseness, we have not included \emph{fences} in the set of event labels.
In TSO \cite{x86-tso} and RA \cite{Lahav-al:POPL16},
fences can be modeled as read-modify-writes
to an otherwise-unused distinguished location $f$.
\end{remark}

\begin{remark}
Rich programming languages like C/C++ \cite{Batty-al:POPL11}
and Java \cite{Bender-al:OOPSLA19}
as well as the Armv8 multiprocessor \cite{Pulte-al:POPL18}
have multiple kinds of accesses.
This requires us to extend our event labels with additional modifiers.
However, simple event labels as defined above suffice for the purpose of this paper.
\end{remark}

\paragraph{Sequential Programs.}

To keep the presentation abstract, we do not fix a particular programming language,
but rather represent sequential (thread-local) programs as LTSs
with $\Lab$, the set of all event labels, serving as the set of observable transition labels.
For simplicity, we assume that sequential programs do not have silent transitions.\footnote{This
assumption serves us merely to simplify the presentation,
since silent program transitions can be always attached to the next memory access.}
For an example of a toy programming language syntax
and its reading as an LTS, see~\cite{imm}.
In our code snippets throughout the paper, we implicitly assume such a standard interpretation.

We refer to observable traces of sequential programs (\ie sequences over $\Lab$)
as \emph{sequential traces}.

\begin{example}
The simple sequential program 
$\repeatInst{a := x}{(a \neq 0)}$
is formally captured as an LTS with an initial state $\init$ and a state $\mathit{final}$,
and transitions $\tup{\init,\rlab{}{x}{\val},\init}$ for every $\val\in\Val\setminus \set{0}$
and $\tup{\init,\rlab{}{x}{0},\mathit{final}}$. The sequential traces
$\rlab{}{x}{0},\rlab{}{x}{0},\rlab{}{x}{0},\rlab{}{x}{42}$ is an (observable) trace of this program.
The infinite sequential trace
$\rlab{}{x}{0},\rlab{}{x}{0},\dots$ is another (observable) trace of this program.
\end{example}

\paragraph{Concurrent Programs.}

A \emph{concurrent program}, which we also simply call a \emph{program},
is a top-level parallel composition of sequential programs,
defined as a \emph{finite} mapping assigning a sequential program to each thread $\tid\in\Tid$.
A concurrent program $\prog$ induces an LTS
with $\Tid \times \Lab$ serving as the set of observable transition labels
(and no silent transition labels).
This LTS follows the interleaving semantics of $\prog$:
its states are tuples in $\prod_{\tid\in\Tid} \prog(\tid).\lQ$;
the initial state is $\lambda \tid.\; \prog(\tid).\linit$;
and the transitions are given by:
\begin{mathpar}
\inferrule*{
\progstate(\tid) \astep{\lab}_{\prog(\tid)} p \\
}{\progstate \astep{\tstep{\tid}{\lab}}_\prog \progstate[\tid\mapsto p]}
\end{mathpar}
In the sequel, we identify concurrent programs with their induced LTSs.

We refer to observable traces of concurrent programs (\ie sequences over $\Tid \times \Lab$)
as \emph{concurrent traces}.
We denote the two components of a pair $\sigma\in\Tid\times\Lab$
by $\lTID(\sigma)$ and $\lLAB(\sigma)$ respectively.

\paragraph{Behaviors.}

We define a \emph{behavior} to be a function $\beh$ assigning a sequential trace to every thread,
since the events executed by each thread capture precisely what it has observed
about the memory system.

\begin{notation}
The restriction of a concurrent trace $\trace$ to thread $\tid\in\Tid$,
denoted by $\trace\rst{\tid}$, is the sequence obtained from $\trace$
by keeping only the transition labels of the form $\tstep{\tid}{\_}$.
\end{notation}

\begin{definition}
The \emph{behavior induced by a concurrent trace} $\trace$, denoted by $\beh(\trace)$,
is given by
$$\beh(\trace) \defeq \lambda \tid\in\Tid.\; \lambda k \in \dom{\trace\rst{\tid}}.\; \lLAB(\trace\rst{\tid}(k)).$$
This notation is extended to sets of concurrent traces in the obvious way
($\beh(\traces) \defeq \set{\beh(\trace)\st\trace\in\traces}$).
\end{definition}

\begin{notation}
For an LTS $\A$ with $\A.\lSigma=\Tid\times\Lab$,
we denote by $\Beh{\A}$ the set of behaviors induced by observable traces of $\A$
(\ie $\Beh{\A} \defeq \beh(\OTraces{\A})$)
and by $\Behfin{\A}$ the set of behaviors induced by finite observable traces of $\A$
(\ie $\Behfin{\A} \defeq \beh(\OTracesfin{\A})$).
\end{notation}

Since operations of different threads commute in the program semantics, 
the following property easily follows from our definitions.

\begin{proposition}
\label{prop:prog}
For every program $\prog$,
if $\beh(\trace_1)=\beh(\trace_2)$, then
$\trace_1\in \OTraces{\prog}$ iff $\trace_2\in \OTraces{\prog}$.
\end{proposition}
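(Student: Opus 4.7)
The plan is to establish the following stronger projection lemma, from which the proposition is immediate: a concurrent trace $\trace$ lies in $\OTraces{\prog}$ if and only if, for every thread $\tid \in \Tid$, the sequential trace $\beh(\trace)(\tid)$ lies in $\OTraces{\prog(\tid)}$. Since this condition depends only on $\beh(\trace)$, if $\beh(\trace_1) = \beh(\trace_2)$ then $\trace_1$ and $\trace_2$ satisfy it simultaneously, and the proposition follows.

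For the ``only if'' direction, I take a run $\run$ of $\prog$ inducing $\trace$ and, for each $\tid$, extract the subsequence of transitions whose label has $\lTID = \tid$. By the transition rule defining $\astep{}_{\prog}$, every such transition is witnessed by a premise $\progstate(\tid) \astep{\lab}_{\prog(\tid)} p$ and updates only the $\tid$-component of the product state; moreover, consecutive transitions of thread $\tid$ (in $\run$) leave the $\tid$-component untouched in between, so the successive sources and targets match up. The extracted subsequence is therefore a run of $\prog(\tid)$, and by construction it induces $\beh(\trace)(\tid)$. This works for both finite and infinite runs without any compactness argument.

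For the ``if'' direction, I assume a run $\run_\tid$ of $\prog(\tid)$ inducing $\beh(\trace)(\tid)$ for each $\tid$, and build a run $\run$ of $\prog$ inducing $\trace$ by interleaving. I process $\trace$ position by position: at step $k$ with $\sigma = \trace(k)$, I let $\tid = \lTID(\sigma)$, consume the next (as yet unused) transition of $\run_\tid$, and use it to fire the $\prog$-rule on the current product state, leaving all components other than $\tid$ unchanged. The maintained invariant is that after processing the first $k$ positions of $\trace$, the $\tid$-component of the current product state coincides with the state of $\run_\tid$ after it has contributed exactly $|\trace\rst{\tid}\text{ restricted to positions }<k|$ transitions. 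The invariant together with the fact that $\beh(\trace)(\tid)$ is precisely the trace induced by $\run_\tid$ guarantees that each local transition has the right label, so the premise of the $\prog$-rule is satisfied at every step.

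The only real subtlety is bookkeeping for infinite traces: every $\run_\tid$ must be fully exhausted in the order dictated by $\trace$. This is immediate because, by construction, the $k$-th transition of $\run_\tid$ is consumed exactly when the $k$-th occurrence of a thread-$\tid$ label appears in $\trace$, and $\trace\rst{\tid}$ has the same domain as the trace induced by $\run_\tid$ (since $\beh(\trace)(\tid)$ is that trace). No further argument is needed, because the product structure of the state space makes transitions of distinct threads commute by construction, so any interleaving consistent with the per-thread orders is a valid run of $\prog$.
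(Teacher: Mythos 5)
Your proof is correct, and it reaches the result by a slightly different route than the paper. The paper's proof works in place: it takes the run $\run_1$ inducing $\trace_1$ and asserts that, because transitions of distinct threads are independent, $\run_1$ can be reordered into a run inducing $\trace_2$ without permuting any same-thread transitions. You instead factor the argument through a projection lemma — $\trace\in\OTraces{\prog}$ iff $\beh(\trace)(\tid)\in\OTraces{\prog(\tid)}$ for every $\tid$ — proved by decomposing a product run into per-thread runs and recomposing per-thread runs into a product run along the order dictated by $\trace$. Both arguments rest on the same underlying fact (a product transition touches only one component of the state), but your packaging has two advantages: it gives a reusable characterization of $\OTraces{\prog}$ purely in terms of behaviors, and it avoids having to make precise what an order-preserving ``reordering'' of an \emph{infinite} run is — a point the paper's one-line argument glosses over, whereas your step-by-step interleaving with the explicit counting invariant handles the infinite case directly. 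The only implicit assumption you rely on (as does the paper) is that every transition of $\prog$ labelled $\tstep{\tid}{\lab}$ arises from the displayed inference rule, so that the local premise can be recovered; that is the intended reading of the definition, so there is no gap.
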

\begin{proofof}{prop:prog}
  W.l.o.g. suppose that $\trace_1\in \OTraces{\prog}$.
  Note that there are no silent transition labels in traces of $P$, so $\trace_1$ is a trace of $\prog$.
  It means that there exists a run $\run_1$ of $P$ inducing $\trace_1$.

  Note that transitions of different sequential programs  are independent of each other.
  That is, they can be reordered in $\run_1$ in such a way that the resulting run $\run_2$ will induce $\trace_2$.
  Since $\beh(\trace_1)=\beh(\trace_2)$, no same thread's transitions reordering are needed to do that.
\end{proofof}

\paragraph{Thread Fairness.}

Not all program behaviors are fair.

\begin{example}
\label{ex:Rloop}
Consider the following program:
\begin{equation}
\tag{\textsf{Rloop}}\label{prog:Rloop}
\inarrII{\assignInst{\loc}{1}}
{L\colon \assignInst{a}{\loc} \\\phantom{L\colon} \ifGotoInst{a=0}{L}}
\end{equation}
The behaviors of this program include the behavior
assigning $\wlab{}{\loc}{1}$ to the first thread
and $\rlab{}{\loc}{1}$ to the second,
but also the (infinite) behavior assigning the empty sequence to the first thread
and the infinite sequence $\rlab{}{\loc}{0},\rlab{}{\loc}{0},\ldots$ to the second.
This behavior occurs if an unfair scheduler only schedules the second thread to run even though the first thread is always
available to execute.\footnote{On this level, without considering a particular memory system (as defined below),
the read values are not restricted whatsoever. Thus,
the behaviors of this program include also any behavior
assigning $\wlab{}{\loc}{1}$ to the first thread
and either $\rlab{}{\loc}{v}$ for some $\val\in\Val\setminus \set{0}$
or the infinite sequence $\rlab{}{\loc}{0},\rlab{}{\loc}{0},\ldots$
to the second thread.
Nonsensical behaviors (with $v\nin\set{0,1}$) are overruled when the program is linked with 
any of the memory systems defined below, with or without ``memory fairness''.}
\end{example}

A natural constraint, which in particular excludes the infinite behavior in the example above,
requires a fair scheduler.
Since our formalism assumes no blocking operations (in particular, locks are implemented using spinloops),
such a scheduler has to ensure that every non-terminated thread is eventually scheduled,
which we formally define as follows.

\begin{definition}
\label{def:thread_fair}
Let $\prog$ be a program.
\begin{itemize}[leftmargin=15pt]
\item A thread $\tid\in\Tid$ is \emph{enabled} in $\progstate \in \prog.\lQ$
if $\tup{\tid,\lab}$ is enabled in $\progstate$ for some $\lab\in\Lab$.
\item A thread $\tid\in\Tid$ is \emph{continuously enabled at index $k$} in an infinite run $\run$ of $\prog$
if it is enabled in $\lSRC(\run(j))$ for every index $j\geq k$.
Thread $\tid$ is \emph{continuously enabled} in $\run$ if it is continuously enabled in $\run$ at some index $k$.
\item A run $\run$ of $\prog$ is \emph{thread-fair} if $\run$ is finite or
for every thread $\tid\in\Tid$ and index $k$ such that $\tid$ is continuously enabled in $\run$ at $k$, there exists $j \geq k$ such that $\lTID(\lTLAB(\run(j)))=\tid$.
\item A \emph{thread-fair observable trace of $\prog$} is any concurrent trace induced
by a thread-fair run of $\prog$.
\item A \emph{thread-fair behavior of $\prog$} is any behavior induced
by a thread-fair observable trace of $\prog$.
We denote by $\Behtf{\prog}$ the set of all thread-fair behaviors of $\prog$.
\end{itemize}
\end{definition}

Returning to \cref{ex:Rloop}, thread-fair behaviors of \ref{prog:Rloop}
are either finite or must assign $\wlab{}{\loc}{1}$ to the first thread.

Again, since operations of different threads commute in the program semantics,
the following property easily follows from our definitions.

\begin{proposition}
\label{prop:prog_tf}
For every program $\prog$,
if $\beh(\trace_1)=\beh(\trace_2)$, then
$\trace_1$ is a thread-fair observable trace of $\prog$
iff
$\trace_2$ is a thread-fair observable trace of $\prog$.
\end{proposition}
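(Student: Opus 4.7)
The plan is to refine the reordering argument used in the proof of \cref{prop:prog}, showing that the reordering additionally preserves thread-fairness. Without loss of generality, assume $\trace_1$ is a thread-fair observable trace of $\prog$, induced by a thread-fair run $\run_1$ of $\prog$. Permuting independent transitions of different threads in $\run_1$ as in \cref{prop:prog} yields a run $\run_2$ of $\prog$ that induces $\trace_2$; it suffices to show that $\run_2$ is thread-fair.

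First observe that $\beh(\trace_1) = \beh(\trace_2)$ implies that for each thread $\tid$ the sequences $\trace_1\rst{\tid}$ and $\trace_2\rst{\tid}$ agree as sequences of event labels, so both traces have identical per-thread lengths and are simultaneously finite or infinite. If they are finite, $\run_2$ is finite and thread-fairness is immediate; otherwise fix any thread $\tid$ and any index $k$ at which $\tid$ is continuously enabled in $\run_2$.

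The key ingredient is a local-state invariant: from the inference rule defining $\astep{}_\prog$, only $\tid$-transitions alter the $\tid$-component of the concurrent program state, so the local state of thread $\tid$ after any prefix of a run is a pure function of the subsequence of $\tid$-transition labels in that prefix. Since $\trace_1\rst{\tid}$ and $\trace_2\rst{\tid}$ coincide as label sequences, $\tid$ passes through exactly the same sequence of local states in $\run_1$ as in $\run_2$, and its enabledness at any index depends only on how many $\tid$-events have already been taken.

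I would finish by case analysis on the length of $\beh(\trace_2)(\tid)$. If it is infinite, then $\tid$-transitions occur at arbitrarily late indices in $\run_2$, giving some $j \geq k$ with $\lTID(\lTLAB(\run_2(j))) = \tid$. Otherwise $\tid$ executes only finitely many transitions in each run, after which its local state stabilizes at some terminal state $q$; thread-fairness of $\run_1$ forces $\tid$ to be disabled at $q$, since otherwise $\tid$ would be continuously enabled from its stabilization index in $\run_1$ yet never scheduled again. By the local-state invariant, $\tid$ is then also disabled in $\run_2$ beyond its stabilization index, contradicting continuous enabledness at $k$. The only subtlety in the argument is stating the local-state invariant precisely; this is routine given the shape of the interleaving rule, so no substantial obstacle arises.
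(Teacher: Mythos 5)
Your proposal is correct and follows essentially the same route as the paper's proof: both rest on the observation that a thread's local state (and hence its enabledness) is a function of the per-thread label sequence, which is identical in the two runs since the behaviors coincide. Your case split on whether $\beh(\trace_2)(\tid)$ is finite or infinite just makes explicit what the paper's terser argument leaves implicit, so no new idea is involved.
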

\begin{proofof}{prop:prog_tf}
  W.l.o.g. suppose that $\trace_1$ is a thread-fair observable trace of $\prog$ induced by some $\mu_1$.
  By \cref{prop:prog} $\trace_2$ is an observable trace of $\prog$ induced by some $\mu_2$.

  Note that, since $\beh(\trace_1)=\beh(\trace_2)$, every thread's LTS goes via the same sequence of states.
  That is, for each thread $\mu_2$ has the same set of continuously enabled states as $\mu_1$ does.
  Also, since $\beh(\trace_1)=\beh(\trace_2)$, every continuously enabled state is succeeded by the same sequence of labels both in $\mu_1$ and $\mu_2$.
  Thus, if $\trace_1$ is thread-fair then $\trace_2$ should also be.
\end{proofof}

\paragraph{Memory Systems.}

To give operational semantics to programs,
we synchronize them with \emph{memory systems},
which, like programs, are LTSs with $\Tid\times\Lab$ serving as the set of observable transition labels.
In addition, memory systems have silent transition labels, which vary from one system to another.
Intuitively, the set of silent transition labels $\M.\lTheta$ of a memory system $\M$
consists of internal actions that the program cannot observe
(\eg cache-related operations).

The most well-known memory system is that of \emph{sequential consistency}~\cite{lamport-sc},
denoted here by $\M_\SC$,
in which writes by each thread are made immediately visible to all other threads.
$\M_\SC$ tracks the most recent value written to each location.
Its initial state maps each location to zero.
That is, $\M_\SC.\lQ\defeq \Loc \to \Val$ and $\M_\SC.\linit \defeq \lambda \loc.\; 0$.
The system $\M_\SC$ has no silent transitions ($\M_\SC.\lTheta=\emptyset$)
and its transition relation $\astep{}_{\M_\SC}$ is defined as follows:
\begin{mathpar}
\inferrule*{
\m'=\m[\loc \mapsto \val]
}{{\m} \astep{\tstep{\tid}{\wlab{}{\loc}{\val}}}_{\M_\SC} {\m'} }
\and
\inferrule*{
\m(\loc)=v
}{{\m} \astep{\tstep{\tid}{\rlab{}{\loc}{\val}}}_{\M_\SC} {\m} }
\and
\inferrule*{
{\m} \astep{\tstep{\tid}{\rlab{}{\loc}{\val_\lR}}}_{\M_\SC}
\astep{\tstep{\tid}{\wlab{}{\loc}{\val_\lW}}}_{\M_\SC} {\m'}
}{{\m} \astep{\tstep{\tid}{\ulab{}{\loc}{\val_\lR}{\val_\lW}}}_{\M_\SC} {\m'} }
\end{mathpar}
Writing $\val$ to $\loc$ simply updates the value of $\loc$ stored in $\m$.
($\m[\loc \mapsto \val]$ is the function that maps $\loc$ to $\val$
and all other locations $\loca$ to $\m(\loca)$.)
Reading $\val$ from $\loc$ succeeds iff the value stored for $\loc$ in memory is $v$.
The atomic read-modify-write $\ulab{}{\loc}{\val_\lR}{\val_\lW}$
reads location $\loc$ yielding value $\val_\lR$ and immediately writes $\val_\lW$ to it.
Note that $\M_\SC$ is oblivious to the thread that takes the action
($\astep{\tstep{\tid}{\lab}}_{\M_\SC}=\astep{\tida:\lab}_{\M_\SC}$).
The other memory systems below do not have this property.

\paragraph{Linking Programs and Memory Systems.}

By linking programs and memory systems, we
can talk about the behavior of a program $\prog$ under a memory system $\M$.
We say that a certain behavior $\beh$ is a \emph{behavior of a program $\prog$ under a memory system $\M$}
if $\beh$ is both a behavior of $\prog$ and a behavior of $\M$
(\ie $\beh\in \Beh{\prog}\cap\Beh{\M}$).
Similarly, $\beh$ is called a \emph{thread-fair behavior of $\prog$ under $\M$}
if $\beh\in \Behtf{\prog}\cap\Beh{\M}$.

\begin{proposition}
\label{prop:beh_tf_op}
Let $\prog$ be a program, $\M$ be a memory system,
and $\beh$ be a behavior.
\begin{itemize}[leftmargin=15pt]
\item $\beh$ is a behavior of $\prog$ under $\M$ iff
$\beh=\beh(\trace)$ for some $\trace\in\OTraces{\prog}\cap\OTraces{\M}$.
\item $\beh$ is a thread-fair behavior of $\prog$ under $\M$ iff
$\beh=\beh(\trace)$ for some $\trace\in\OTraces{\M}$
that is also a thread-fair observable trace of $\prog$.
\end{itemize}
\end{proposition}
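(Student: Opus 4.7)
The plan is to unfold the definitions and observe that the only nontrivial content is the ability to pick a single trace $\trace$ that witnesses membership in both $\OTraces{\prog}$ and $\OTraces{\M}$ simultaneously. The right-to-left directions of both bullets are immediate from the definitions of $\Beh{-}$ and $\Behtf{-}$: a trace in $\OTraces{\prog} \cap \OTraces{\M}$ witnesses $\beh \in \Beh{\prog} \cap \Beh{\M}$, and a trace that is both in $\OTraces{\M}$ and thread-fair for $\prog$ witnesses $\beh \in \Behtf{\prog} \cap \Beh{\M}$. So the real work is the left-to-right directions, and the key tool will be \cref{prop:prog,prop:prog_tf}, which let us ``transport'' membership in $\OTraces{\prog}$ (respectively, thread-fairness) along the equivalence relation on traces induced by $\beh(-)$.

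For the first bullet, suppose $\beh$ is a behavior of $\prog$ under $\M$, that is, $\beh \in \Beh{\prog} \cap \Beh{\M}$. By definition of $\Beh{-}$, there exist $\trace_\prog \in \OTraces{\prog}$ and $\trace_\M \in \OTraces{\M}$ with $\beh(\trace_\prog) = \beh(\trace_\M) = \beh$. Applying \cref{prop:prog} to the pair $\trace_\prog$ and $\trace_\M$ yields $\trace_\M \in \OTraces{\prog}$, so $\trace_\M$ lies in the desired intersection and induces the behavior $\beh$.

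For the second bullet, suppose $\beh \in \Behtf{\prog} \cap \Beh{\M}$. By definition, there is a thread-fair observable trace $\trace_\prog$ of $\prog$ with $\beh(\trace_\prog) = \beh$ and some $\trace_\M \in \OTraces{\M}$ with $\beh(\trace_\M) = \beh$. Applying \cref{prop:prog_tf} to $\trace_\prog$ and $\trace_\M$ transfers thread-fairness for $\prog$ from $\trace_\prog$ to $\trace_\M$, so $\trace_\M$ is itself a thread-fair observable trace of $\prog$ as well as an observable trace of $\M$, completing the argument.

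I do not anticipate any genuine obstacle: the proof is essentially bookkeeping on top of \cref{prop:prog,prop:prog_tf}, and the only subtlety is noticing that one must use those propositions (rather than trying to align two a priori unrelated traces directly), since nothing in the bare definitions of $\Beh{\prog}$ and $\Beh{\M}$ forces a common witness.
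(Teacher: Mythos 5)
Your proof is correct and follows essentially the same route as the paper's: the right-to-left directions are immediate from the definitions, and the left-to-right directions are handled by applying \cref{prop:prog} (resp.\ \cref{prop:prog_tf}) to transfer membership in $\OTraces{\prog}$ (resp.\ thread-fairness) from the program-side witness to the memory-side trace $\trace_\M$, which then serves as the common witness. Nothing to add.
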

\begin{proofof}{prop:beh_tf_op}
  In the right-to-left implications of both claims the trace $\trace$ belongs both to $\Beh{\M}$ and $\Beh{\prog}$ ($\Behtf{\prog}$ correspondingly) thus satisfying needed conditions.

  Note that for a (thread-fair) behavior $\beh$ under $\M$ there are two traces: one from $\Beh{\prog}$ ($\Behtf{\prog}$) and another from $\Beh{\M}$.
  But since these traces have the same behavior, left-to-right implications can be proved by applying \cref{prop:prog} (\cref{prop:prog_tf} correspondingly) to the trace from $\Beh{\M}$.
\end{proofof}

\begin{example}
Thread-fair behaviors of the program \ref{prog:Rloop} under $\M_\SC$ must be finite.
Indeed, in observable traces of $\M_\SC$, after the first thread performs $\wlab{}{\loc}{1}$,
the second thread will perform $\rlab{}{\loc}{1}$ and terminate its execution.
The behavior $\beh_\text{inf}$ that assigns the empty sequence to the first thread
and the infinite sequence consisting of $\rlab{}{\loc}{0}$ event labels to the second thread
cannot be obtained from a thread-fair run of \ref{prog:Rloop}.
\end{example}

\paragraph{Memory Fairness.}

As we have already discussed, thread-fairness alone is often insufficient
to reason about termination under weak memory models.
For this reason, we introduce \emph{memory fairness} (MF),
which ensures that a thread cannot be lagging behind indefinitely
because the memory system did not propagate certain updates to it. 
We formalize this intuition by having MF require that the memory silent transitions
(responsible for such propagation steps) are scheduled infinitely often.

\begin{definition}
\label{def:memory_fair}
Let $\M$ be a memory system.
\begin{itemize}[leftmargin=15pt]
\item A silent transition label $\theta\in\M.\lTheta$ is \emph{continuously enabled at index $k$} in an infinite run $\run$ of $\M$
if it is enabled in $\lSRC(\run(j))$ for every index $j\geq k$.
The label $\theta$ is \emph{continuously enabled} in $\run$ if it is continuously enabled in $\run$ at some index $k$.
\item A run $\run$ of $\M$ is \emph{memory-fair} if $\run$ is finite or
for every silent memory transition label $\theta\in \M.\lTheta$ and index $k$
such that $\theta$ is continuously enabled in $\run$ at $k$,
there exists $j \geq k$ such that $\lTLAB(\run(j))=\theta$.
\item A \emph{memory-fair observable trace of $\M$} is any concurrent trace induced
by a memory-fair run of $\M$.
\item A \emph{memory-fair behavior of $\M$} is any behavior induced
by a memory-fair observable trace of $\M$.
We denote by $\Behmf{\M}$ the set of all memory-fair behaviors of $\M$.
\end{itemize}
\end{definition}

Linking this definition with programs,
we say that a certain behavior $\beh$ is a \emph{memory-fair behavior of a program $\prog$ under a memory system $\M$}
if $\beh\in \Beh{\prog}\cap\Behmf{\M}$.
Similarly, $\beh$ is called a \emph{thread\&memory-fair behavior of $\prog$ under $\M$}
if $\beh\in \Behtf{\prog}\cap\Behmf{\M}$.

\begin{proposition}
\label{prop:beh_mf_op}
Let $\prog$ be a program, $\M$ be a memory system,
and $\beh$ be a behavior.
\begin{itemize}[leftmargin=15pt]
\item $\beh$ is a memory-fair behavior of $\prog$ under $\M$ iff
$\beh=\beh(\trace)$ for some observable trace $\trace$ of $\prog$
that is also a memory-fair observable trace of $\M$.
\item $\beh$ is a thread\&memory-fair behavior of $\prog$ under $\M$ iff
$\beh=\beh(\trace)$ for some thread-fair observable trace $\trace$ of $\prog$
that is also a memory-fair observable trace of $\M$.
\end{itemize}
\end{proposition}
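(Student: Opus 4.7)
The plan is to mirror the proof structure of \cref{prop:beh_tf_op}, since the two statements differ only by strengthening the memory-side condition from unrestricted to memory-fair, while the program side is unchanged (resp.\ already thread-fair). The key observation is that memory-fairness depends only on the trace as a trace of $\M$, and thread-fairness depends only on the trace as a trace of $\prog$, so we may start from two separate witness traces and combine them using the reordering lemmas \cref{prop:prog,prop:prog_tf}.

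For the right-to-left directions of both bullets: the given $\trace$ satisfies $\trace \in \OTraces{\prog}$ and is a memory-fair observable trace of $\M$, so $\beh(\trace) \in \Beh{\prog}$ and $\beh(\trace) \in \Behmf{\M}$ directly, yielding a memory-fair behavior of $\prog$ under $\M$. In the thread-fair variant, we additionally assume $\trace$ is a thread-fair observable trace of $\prog$, so $\beh(\trace) \in \Behtf{\prog}$.

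For the left-to-right direction of the first bullet, I would unfold the definition of ``memory-fair behavior of $\prog$ under $\M$'' to obtain some $\trace_1 \in \OTraces{\prog}$ and some memory-fair observable trace $\trace_2$ of $\M$ with $\beh(\trace_1) = \beh(\trace_2) = \beh$. Then I apply \cref{prop:prog} to $\trace_2$ (using $\beh(\trace_1) = \beh(\trace_2)$ and $\trace_1 \in \OTraces{\prog}$) to conclude $\trace_2 \in \OTraces{\prog}$, so $\trace_2$ is the desired witness. For the second bullet I proceed analogously, starting from a thread-fair $\trace_1 \in \OTraces{\prog}$ and a memory-fair $\trace_2 \in \OTraces{\M}$ with the same behavior, and then invoke \cref{prop:prog_tf} to transfer thread-fairness from $\trace_1$ to $\trace_2$; the resulting $\trace_2$ is simultaneously a thread-fair observable trace of $\prog$ and a memory-fair observable trace of $\M$.

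There is no real obstacle here: the argument is a direct application of the commutation of thread transitions already encapsulated in \cref{prop:prog,prop:prog_tf}. The only subtle point worth spelling out, if challenged, is that memory-fairness is preserved under the identity — we do not need to reorder $\trace_2$ itself; it is already the required trace once we observe that its behavior coincides with one coming from $\prog$ (or a thread-fair run of $\prog$).
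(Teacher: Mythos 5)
Your proposal is correct and matches the paper's own argument: the right-to-left directions are immediate from the definitions, and for the left-to-right directions the paper likewise takes the two witness traces (one from $\Beh{\prog}$ or $\Behtf{\prog}$, one from $\Behmf{\M}$) and applies \cref{prop:prog} (resp.\ \cref{prop:prog_tf}) to the memory-fair trace to transfer the program-side property onto it. No gaps.
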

\begin{proofof}{prop:beh_mf_op}
  In the right-to-left implications of both claims the trace $\trace$ belongs both to $\Behmf{\M}$ and $\Beh{\prog}$ ($\Behtf{\prog}$ correspondingly) thus satisfying needed conditions.

  Note that for a memory-fair (thread\&memory-fair) behavior $\beh$ under $\M$ there are two traces: one from $\Beh{\prog}$ ($\Behtf{\prog}$) and another from $\Behmf{\M}$.
  But since these traces have the same behavior, left-to-right implications can be proved by applying \cref{prop:prog} (\cref{prop:prog_tf} correspondingly) to the trace from $\Behmf{\M}$.
\end{proofof}

Since $\M_\SC.\lTheta=\emptyset$,
every behavior of a program $\prog$ under ${\M_\SC}$ is (vacuously) memory-fair.

\begin{example}
\label{ex:Rloop'}
Consider the following program (assuming that $\loc$ is initialized to $0$):
\begin{equation}
\tag{WWRloop}\label{prog:Rloop'}
\inarrII{L_1\colon \assignInst{\loc}{1} \\
\phantom{L_1\colon} \assignInst{\loc}{0} \\
\phantom{L_1\colon} \gotoInst{L_1}}
{L_2\colon \assignInst{a}{\loc} \\
\phantom{L_2\colon} \ifGotoInst{a=0}{L_2} \\~ }
\end{equation}
The infinite behavior that assigns the infinite sequences
$\wlab{}{\loc}{1},\wlab{}{\loc}{0},\wlab{}{\loc}{1},\wlab{}{\loc}{0},\ldots$,
and  $\rlab{}{\loc}{0},\rlab{}{\loc}{0},\ldots$
to the first and second threads (respectively)
is a thread\&memory-fair behavior of this program under ${\M_\SC}$:
in a corresponding run both threads
are executed infinitely often.
In particular, note that our definitions require that transitions that are \emph{continuously} enabled
are eventually taken, and while the transition $\rlab{}{\loc}{1}$ is infinitely often enabled for the second thread,
it is not continuously enabled.
\end{example}

Next, we demonstrate three weaker memory systems with non-empty sets of silent transitions
that have non-memory-fair traces.
In these systems, whether a program terminates or deadlocks may crucially depend on memory fairness.

\subsection{The Total Store Order Memory System}
\label{sec:op_sem_TSO}

We instantiate memory fairness to the ``Total Store Order'' (TSO) model~\cite{Sewell-al:CACM10,x86-tso}
of the x86 architecture.
This memory system, denoted by $\M_\TSO$, is defined by:
\begin{enumerate}
\item $\M_\TSO.\lQ \defeq (\Loc \to \Val) \times (\Tid \to (\Loc\times \Val)^*)$
\\(Each state consists of a memory and a per-thread store buffer.)
\item $\M_\TSO.\lTheta \defeq \set{\proplab(\tid) \st \tid\in\Tid}$
\\(Silent transitions consist of a propagation label for every thread.)
\item $\M_\TSO.\linit \defeq \tup{\m_0,\b_0}$, where
$\m_0 \defeq \lambda \loc.\; 0$ and $\b_0 \defeq \lambda \tid.\; \epsilon$
(Initially, all buffers are empty.)
\item $\astep{}_{\M_\TSO}$ is given in \cref{fig:tso_transitions}.
\end{enumerate}

\begin{figure*}
\begin{mathpar}
\inferrule*{
\b'=\b[\tid \mapsto \tup{\loc,\val} \cdot \b(\tid)]
}{{\m,\b} \astep{\tid:\wlab{}{\loc}{\val}}_{\M_\TSO} {\m,\b'} }

\inferrule*{
\b(\tid)=\tup{\loc_n,\val_n} \cdottil \tup{\loc_1,\val_1} \\\\
\m[\loc_1 \mapsto \val_1]
{\cdot}\mkern1mu{\cdot}\mkern1mu{\cdot} %
[\loc_n \mapsto \val_n](\loc)=\val
}{{\m,\b} \astep{\tid:\rlab{}{\loc}{\val}}_{\M_\TSO} {\m,\b} }

\inferrule*{
\b(\tid)=\epsilon \\
\m(\loc) = \val_\lR
}{{\m,\b} \astep{\tid:\ulab{}{\loc}{\val_\lR}{\val_\lW}}_{\M_\TSO} {\m[\loc \mapsto \val_\lW],\b} }

\inferrule*{
\b(\tid)= \buff \cdot \tup{\loc,\val}
}{{\m,\b} \astep{\proplab(\tid)}_{\M_\TSO} {\m[\loc \mapsto \val],\b[\tid \mapsto \buff]} }
\end{mathpar}
\caption{Transitions of $\M_\TSO$}
\label{fig:tso_transitions}
\end{figure*}

In addition to the global memory $\m$, states of $\M_\TSO$
include a mapping $\b$ assigning a FIFO \emph{store buffer} to every thread.
Writes are first written to the local buffer and later non-deterministically propagate to memory (in the order in which they were issued).
Reads read the most recent value of the relevant location in the thread's buffer and refer to the memory if such value
does not exist.
RMWs can only execute when the thread's buffer is empty and write their result
in the memory directly.

\begin{example}[Store Buffering]
\label{ex:SB}
The following annotated behavior is \emph{allowed} under $\M_\TSO$ (but not under $\M_\SC$):
\begin{equation}
\tag{SB}\label{prog:SB}
\inarrII{\assignInst{\loc}{1} \\ \assignInst{a}{\loca} \comment{reads 0} }
{\assignInst{\loca}{1} \\ \assignInst{a}{\loc} \comment{reads 0} }
\end{equation}
Indeed, the first thread may run first, but the write of $1$ to $\loc$
may remain in its store buffer.
Then, when the second thread runs, it reads the initial value ($0$) of $\loc$ from the memory.
\end{example}

\begin{example}
Revisiting the \ref{prog:Rloop} program from \cref{sec:op_sem},
unlike under ${\M_\SC}$,
thread-fair behaviors of $\ref{prog:Rloop}$ under ${\M_\TSO}$ include
the (infinite) behavior assigning the $\wlab{}{\loc}{1}$ to the first thread
and the infinite sequence $\rlab{}{\loc}{0},\rlab{}{\loc}{0},\ldots$ to the second.
Indeed, the entry $\tup{\loc,1}$ may indefinitely remain in the first thread's buffer,
so that $\wlab{}{\loc}{1}$ is never executed from the point of view of the second thread.
To disqualify this behavior, we need to further require \emph{memory} fairness.
Indeed, in runs inducing this infinite behavior, the silent memory transition
$\proplab(1)$ is necessarily continuously enabled.
Memory fairness requires that $\proplab(1)$ will be eventually executed,
and from that point on $\M_\TSO$ prohibits the second thread from executing $\rlab{}{\loc}{0}$.
\end{example}

We note that the notion of memory fairness is sensitive to the choice of
silent memory transitions.
For example, consider an alternative memory system, denoted by $\M_\TSO'$,
with less informative silent transition labels that do not record the thread identifier of the propagated write.
(Formally $\M_\TSO'$ is defined just like $\M_\TSO$ except for
 $\M_\TSO'.\lTheta \defeq \set{\proplab}$,
 and the label of the propagation step is $\proplab$ rather than $\proplab(\tid)$.)
Then, $\M_\TSO'$ induces the same set of behaviors as $\M_\TSO$,
but not the same set of \emph{memory fair} behaviors.
In particular, we can extend the \ref{prog:Rloop} program with an additional thread that constantly writes
to some unrelated location $\loca$,
and obtain a memory fair run of $\M_\TSO'$
by infinitely often propagating a write to $\loca$,
but never propagating the $\wlab{}{\loc}{1}$ entry.

\subsection{The Release/Acquire Memory System}
\label{sec:op_sem_RA}

We instantiate our operational framework with a memory system for Release/Acquire (RA),
enriched with silent memory transitions for capturing fair behaviors.
Here we follow an operational formulation of RA from \citet{kaiser17},
based on the Promising Semantics of \citet{Kang-al:POPL17}.

The memory of the RA system records a (finite) set of \emph{messages},
each of which corresponds to some write that was previously executed.
Messages (of the same location) are ordered using \emph{timestamps},
and carry a \emph{view}---a mapping from locations to timestamps.
In turn, the states of this memory system also keep track of the current view of each thread,
and use these views to confine the set of messages that threads may read and write.
In particular, if a thread has observed (either by reading or by writing itself)
a message whose view $\view$ has $\view(\loc)=\ts$,
then it can only read messages of $\loc$ whose timestamp is greater than or equal to $\ts$.

To formally define this system, we let $\Time\defeq\N$ (using natural numbers as {timestamps}),
$\View\defeq\Loc \to \Time$ (the set of {views}),
and $\Msg\defeq\Loc \times \Val\times\Time\times\View$ (the set of messages).
We denote a message $\msga$ as a tuple of the form $\msg{\loc}{\val}{\ts}{\view}$,
where $\loc\in\Loc$, $\val\in\Val$, $\ts\in\Time$, and $\view\in\View$.
We write $\lLOC(\msga)$, $\lVAL(\msga)$, $\lTS(\msga)$, and $\lVIEW(\msga)$
to refer to the components of a message $\msga$.
The usual order $<$ on natural numbers is lifted pointwise to a partial order on views;
 $\sqcup$ denotes the pointwise maximum on views;
 and $\view_0$ is the minimum view
($\view_0 \defeq \lambda \loc.\; 0$).

With these definitions and notations, the RA memory system, denoted here by $\M_\RA$, is defined as follows
(additional silent memory transitions are discussed below):
\begin{enumerate}
\item $\M_\RA.\lQ  \defeq \powerset{\Msg} \times (\Tid \to \View)$.
\item $\M_\RA.\linit \defeq \tup{\m_0,\lambda \tid.\; \view_0}$, where
  the initial memory is $\m_0 \defeq \set{\msg{\loc}{0}{0}{\view_0} \st \loc\in\Loc}$.
\item $\astep{}_{\M_\RA}$ is given in \cref{fig:ra_transitions}.
\end{enumerate}

\begin{figure*}
\begin{mathpar}
\inferrule*{
\nexists \msga \in\m.\; \lLOC(\msga)=\loc \land \lTS(\msga)=\ts \\\\
\Tview(\tid)(\loc) < \ts \\\\
\Tview'=\Tview[\tid \mapsto \Tview(\tid)[\loc \mapsto \ts]] \\\\
\m'=\m \cup \set{\msg{\loc}{\val}{\ts}{\Tview'(\tid)}}
}{\tup{\m,\Tview} \astep{\tstep{\tid}{\wlab{}{\loc}{\val}}}_{\M_\RA} \tup{\m',\Tview'}}
\and
\inferrule*{
\msg{\loc}{\val}{\ts}{\view} \in \m \\\\
\Tview(\tid)(\loc) \leq \ts \\\\
\Tview'=\Tview[\tid \mapsto \Tview(\tid) \sqcup \view]
}{\tup{\m,\Tview} \astep{\tstep{\tid}{\rlab{}{\loc}{\val}}}_{\M_\RA} \tup{\m,\Tview'}}
\and
\inferrule*{
\tup{\m,\Tview} \astep{\tstep{\tid}{\rlab{}{\loc}{\val_\lR}}}_{\M_\RA} \tup{\m,\Tview'}
  \astep{\tstep{\tid}{\wlab{}{\loc}{\val_\lW}}}_{\M_\RA} \tup{\m',\Tview''}
\\ \Tview''(\tid)(\loc)=\Tview'(\tid)(\loc)+1
}{\tup{\m,\Tview} \astep{\tstep{\tid}{\ulab{}{\loc}{\val_\lR}{\val_\lW}}}_{\M_\RA} \tup{\m',\Tview''}}
\end{mathpar}
\caption{Transitions of $\M_\RA$}
\label{fig:ra_transitions}
\end{figure*}

The states of $\M_\RA$ consist of a set $\m$ of all messages added to the memory so far
and a mapping $\Tview$ assigning a view to each thread.
Write steps of thread $\tid$ writing to location $\loc$ pick a timestamp $\ts$ that is fresh
($\nexists \msga \in\m.\; \lLOC(\msga)=\loc \land \lTS(\msga)=\ts$)
and greater than the latest timestamp that $\tid$ has observed for $\loc$
($\Tview(\tid)(\loc) < \ts$);
update the thread's view to include this timestamp
($\Tview'=\Tview[\tid \mapsto \Tview(\tid)[\loc \mapsto \ts]]$);
and add a corresponding message to the memory carrying the (updated) thread view
($\m'=\m \cup \set{\msg{\loc}{\val}{\ts}{\Tview'(\tid)}}$).
Read steps of thread $\tid$ reading from location $\loc$ pick a message from the current memory
($\msg{\loc}{\val}{\ts}{\view} \in \m$) whose timestamp
is greater than or equal to the latest timestamp that $\tid$ has observed for $\loc$
($\Tview(\tid)(\loc) \leq \ts$);
and incorporate the message's view in the thread view
($\Tview'=\Tview[\tid \mapsto \Tview(\tid) \sqcup \view]$).
RMW steps are defined as atomic sequencing of a read step followed by a write step,
with the restriction that the new message's (fresh) timestamp is the successor of the timestamp of the read message
($\Tview''(\tid)(\loc)=\Tview'(\tid)(\loc)+1$).
The latter condition is needed to ensure the atomicity of RMWs: no other write can intervene
between the read part and the write part of the RMW (\ie no message can be placed between
the read and the written messages in the timestamp order).

\begin{example}[Message passing]
\label{ex:MP}
The following annotated behavior is \emph{disallowed} under $\M_\RA$:
\begin{equation}
\tag{MP}\label{prog:MP}
\inarrII{\assignInst{\loc}{1} \\ \assignInst{\loca}{1}}
{\assignInst{a}{\loca} \comment{reads 1} \\ \assignInst{b}{\loc} \comment{reads 0}}
\end{equation}
Indeed, the second thread can read $1$ for $\loca$,
only after the first thread added two messages
$m_\loc=\msg{\loc}{1}{\ts _\loc}{[\loc \mapsto \ts _\loc]}$
and
$m_\loca=\msg{\loca}{1}{\ts _\loca}{[\loc \mapsto \ts _\loc, \loca \mapsto \ts _\loca]}$ to the memory
with $\ts _\loc, \ts _\loca >0$.
When reading $m_\loca$, the second thread increases its view of $\loc$ to be $\ts _\loc$.
Since $\ts _\loc >0$, it is then unable to read the initial message of $\loc$,
and must read $m_\loc$.
\end{example}

\begin{example}
\label{ex:2RMW}
By forcing RMWs to use the successor of the read message as the timestamp of the written message,
$\M_\RA$ forbids different RMWs to read the same message.
To see this, consider the following example (where $\faddInstn$ denotes an atomic fetch-and-add instruction
that returns its read value):
\begin{equation}
\tag{2RMW}\label{prog:2RMW}
\inarrII{\incInst{a}{\loc}{1} \comment{reads 0}}{\incInst{b}{\loc}{1} \comment{reads 0}}
\end{equation}
W.l.o.g., if  the first runs first, it reads from the initialization message $\msg{\loc}{0}{0}{\view_0}$
(it is the only message of $\loc$ in $\m_0$),
and it is forced to add a message \emph{with timestamp $1$}.
When the second thread runs, it may \emph{not} read from the initialization message: that would again require adding a message of $\loc$ with timestamp $1$, but that timestamp is no longer available.
Thus, it may only read from the message that was added by the first thread.
\end{example}

\begin{example}
\label{ex:SB+RMWs}
Fences (modeled as RMWs to an otherwise unused distinguished location $f$)
can be used to recover sequential consistency when needed.
The following outcome is forbidden by \RA.
\begin{equation}
\tag{SB+RMWs}\label{prog:SB+RMWs}
\inarrII{\assignInst{\loc}{1} \\ \incInstP{f}{0} \\ \assignInst{a}{\loca} \comment{reads 0}}
{\assignInst{\loca}{1} \\ \incInstP{f}{0} \\ \assignInst{b}{\loc} \comment{reads 0}}
\end{equation}
Due to the RMWs in both threads, $\M_\RA$ forbids the annotated program behavior.
Indeed, suppose, w.l.o.g., that the first thread executes its $\incInstP{f}{0}$ first,
it will read from the initialization message to $f$ and will add to memory
a message of the form $\msg{f}{0}{1}{\view}$ with $\view(\loc) > 0$.
When the second thread executes its $\incInstP{f}{0}$,
it will necessarily read that message and incorporate the view $\view$ in its thread view,
so that its view of $\loc$ will be increased.
Then, when it reads $\loc$ it may not pick the initial message.
\end{example}

The RA memory system defined so far (with no silent transitions) allows non-fair executions.
In particular, it allows messages added by some thread to never propagate to other threads,
so that other threads may forever read a message with a lower timestamp, and thus, allows, \eg
a thread-fair \emph{infinite} behavior for the \ref{prog:Rloop} program from \cref{sec:op_sem}.

To address this problem, we include silent memory transitions in $\M_\RA$,
labeled with tuples of the form $\proplab(\tid,\msga)$, where $\tid\in\Tid$ and $\msga\in\Msg$
(\ie $\M_\RA.\lTheta \defeq \set{\proplab(\tid,\msga) \mid \tid \in \Tid, \msga \in \Msg}$).
Then, we include in $\M_\RA$ the following silent memory step:
\begin{equation*}
\inferrule[RA-propagate]{
\msga\in\m  \\ \Tview(\tid)(\lLOC(\msga)) < \lTS(\msga)
}{\tup{\m,\Tview} \astep{\proplab(\tid,\msga)}_{\M_\RA} \tup{\m,\Tview[\tid \mapsto \Tview(\tid)[\lLOC(\msga) \mapsto \lTS(\msga)]]}}
\end{equation*}
For a given thread $\tid$ and message $\msga$ that has not been yet observed by thread $\tid$
($\Tview(\tid)(\lLOC(\msga)) < \lTS(\msga)$),
this step increases $\tid$'s view to include $\msga$'s timestamp.
Intuitively speaking, it ensures that %
every thread $\tid$ eventually advances
its view so that it cannot keep reading an old message indefinitely.

\begin{example}
\label{ex:Rloop_RA}
While thread-fair behaviors of $\ref{prog:Rloop}$ under ${\M_\RA}$ include
an infinite behavior (in which the second thread indefinitely read the initialization message),
memory fairness forbids this behavior.
Indeed, in runs inducing this infinite behavior, a silent label
$\proplab(2,\msg{\loc}{1}{\ts}{[\loc \mapsto \ts]})$ (where $\ts$ is a timestamp of a message added by instruction $\assignInst{\loc}{1}$ of $\ref{prog:Rloop}$)
is necessarily continuously enabled.
Memory fairness ensures that the corresponding transition is eventually executed,
and from that point on, $\M_\RA$ prohibits the second thread from executing $\rlab{}{\loc}{0}$.
\end{example}

We emphasize again that
memory fairness is sensitive to the choice of silent memory transitions.
For instance, the system obtained from $\M_\RA$ by discarding the message $\msga$ from the labels of silent memory steps
 induces the same set of behaviors as $\M_\RA$,
but not the same set of \emph{memory fair} behaviors.
In the next sections, we present the declarative approach for defining the semantics of memory systems,
which uniformly captures  memory fairness, and does not require the technical ingenuity needed for ensuring fairness in
operational memory systems.

\subsection{The Strong-Coherence Memory System}
\label{sec:op_sem_SCOH}

We consider a memory system for Strong-Coherence (StrongCOH), \ie the relaxed fragment of RC11. 
Similar to RA, we follow an operational formulation of StrongCOH following the relaxed and 
promise-free fragment of the Promising Semantics of \citet{Kang-al:POPL17}.
Since this operational formulation is very close to RA's one discussed above,
we describe only the difference between them.

The states of $\M_\SCOH$ are the same as of $\M_\RA$, and transitions are similar,
where the only difference is in the read transition (note the crossed out ``$\sqcup \view$''):
\begin{equation*}
\inferrule*{
\msg{\loc}{\val}{\ts}{\view} \in \m \\
\Tview(\tid)(\loc) \leq \ts \\
\Tview'=\Tview[\tid \mapsto \Tview(\tid)[\loc \mapsto \ts] \cancel{\sqcup \view}]
}{\tup{\m,\Tview} \astep{\tstep{\tid}{\rlab{}{\loc}{\val}}}_{\M_\SCOH} \tup{\m,\Tview'}}
\end{equation*}
That is, when a thread reads from a message, it does not update its view by the message's view
but just by its timestamp.%
\footnote{In this model one may change messages to not store views at all
since they are never used. We keep the message views only in order to be 
as close as possible to RA.}
This change makes the semantics weaker:
StrongCOH allows weak behavior of~\ref{prog:MP} and~\ref{prog:SB+RMWs} from~\cref{ex:MP,ex:SB+RMWs}.

We include the same silent memory transitions in $\M_\SCOH$ as we do for $\M_\RA$,
which is enough to guarantee termination of memory-fair executions of \ref{prog:Rloop} for the same reason
as for RA.

\section{Preliminaries on Declarative Semantics}
\label{sec:dec_sem}

In this section, we review the declarative (\aka axiomatic) framework
for assigning semantics to concurrent programs
and present the well-known declarative models for the four operational models presented above.
Later, we will extend the framework and the existing correspondence results with fairness guarantees
that account for infinite behaviors.

\paragraph{Relations}

Given a binary relation (in particular, a function) $R$, $\dom{R}$ and $\codom{R}$ denote its domain and co\-domain.
We write $R^?$, $R^+$, and $R^*$ respectively to denote its reflexive, transitive, and reflexive-transitive closures.
The inverse relation is denoted by $R^{-1}$.
We denote by $R_1 \seq R_2$ the (left) composition of two relations $R_1,R_2$,
and assume that $\seq$ binds tighter than $\cup$ and $\setminus$.
We denote by $[A]$ the identity relation on a set $A$.
In particular, $[A] \seq R \seq [B] = R\cap (A\times B)$.
For $n\geq 0$ and a relation $R$ on a set $A$, $R^n$ is recursively defined by $R^0\defeq [A]$
and $R^{n+1}\defeq R \seq R^n$.
We write $R^{\leq n}$ for the union $\bigcup_{1\leq i\leq n} R^n$.

\paragraph{Events}

Events represent individual memory accesses in a run of a program.
They consist of a thread identifier, an event label, and a serial number used to 
uniquely identify events and order the events inside each thread.

\begin{definition}
\label{def:event}
An \emph{event} $e$ is a tuple $\ev{k}{\tid}{\lab}$ where
$k\in\N \cup \set{\bot}$ is a serial number inside each thread ($\bot$ for initialization events),
$\tid\in\Tid \uplus\set{\bot}$ is a thread identifier
($\bot$ for initialization events),
and $\lab\in\Lab$ is an event label (as defined in \cref{def:event_label}).
The functions $\lSN$, $\lTID$, and $\lLAB$
return the serial number, thread identifier, and the event label of an event.
The functions $\lTYP$, $\lLOC$, $\lVALR$, and $\lVALW$ are lifted to events in an obvious way.
We denote by $\Event$ the set of all events, and 
use $\lR,\lW$, and $\lU$ to denote the following subsets:
\begin{align*}
 \lR &\defeq \set{e\in\Event\st \lTYP(e)=\lR \lor \lTYP(e)=\lU} \\
 \lW &\defeq \set{e\in\Event\st \lTYP(e)=\lW \lor \lTYP(e)=\lU}\\
 \lU &\defeq \set{e\in\Event\st \lTYP(e)=\lU } 
\end{align*}
We use subscripts and superscripts to restrict sets of events to certain location
and thread 
(\eg $\lW_\loc=\set{w\in \lW \st \lLOC(w)=\loc}$
and $E^\tid= \set{e\in E \st \lTID(e)=\tid}$).
The set of \emph{initialization events}
is given by $\Init \defeq \set{\ev{\bot}{\bot}{\wlab{}{\loc}{0}} \st \loc \in \Loc}$.
\end{definition}

\begin{notation}
Given a relation $R$ on events, we denote by $R\rst{\lLOC}$ the restriction of $R$ to events of
the same location:
$$R\rst{\lLOC} = \set{\tup{e_1,e_2}\in R \st \exists \loc\in\Loc.\; \lLOC(e_1)=\lLOC(e_2)=\loc}$$
\end{notation}

Our representation of events induces a \emph{sequenced-before} partial order on events given by:
$$e_1 < e_2 \;\defiff\; 
\inarr{ (e_1 \in \Init \land e_2 \nin \Init) \lor {}
 (\lTID(e_1) =\lTID(e_2) \land \lSN(e_1) < \lSN(e_2)) } $$
Initialization events precede all non-initialization events,
while events of the same thread are ordered according to their serial numbers.

Behaviors (\ie mappings from threads to sequential traces)
are associated with sets of events in an obvious way:

\begin{definition}
\label{def:gen}
The set of events \emph{extracted from a behavior} $\beh$,
denoted by $\Event(\beh)$, is given by 
$\Event(\beh) \defeq\Init \cup \set{\ev{k}{\tid}{\beh(\tid)(k)} \st \tid\in\Tid, k\in\dom{\beh(\tid)}}$.
\end{definition}

It is easy to see that for every behavior $\beh$,
$\Event(\beh)$ satisfies certain ``well-formedness'' properties:

\begin{definition}
\label{def:wf}
A set $E\suq\Event$ is \emph{well-formed} if the following hold:
\begin{itemize}[leftmargin=15pt]
\item $\Init\suq E$.
\item $\lTID(e)\neq \bot$ and $\lSN(e)\neq \bot$ for every $e\in E\setminus \Init$.
\item If $\lTID(e_1)=\lTID(e_2)$ and $\lSN(e_1)=\lSN(e_2)$, then $e_1 = e_2$ for all $e_1, e_2 \notin \Init$.
\item For every $e\in E \setminus \Init$ and $0 \leq k <  \lSN(e)$, there exists $\lab\in\Lab$
such that $\ev{k}{\lTID(e)}{\lab}\in E$.
\end{itemize}
\end{definition}

\paragraph{Execution Graphs}

An execution graph consists of a set of events, a \emph{reads-from} mapping that determines the write event from which
each read reads its value, and a \emph{modification order} which totally orders the writes to each location.

\begin{definition}
\label{def:execution}
An \emph{execution graph} $G$ is a tuple $\tup{E,\rf,\mo}$ where:
\begin{enumerate}
\item $E$ is a well-formed (possibly, infinite) set of events.
\item $\rf$, called \emph{reads-from}, is a relation on $E$ satisfying:
\begin{itemize}[leftmargin=15pt]
\item If $\tup{w,r}\in \rf$ then $w\in\lW$, $r\in\lR$, $\lLOC(w)=\lLOC(r)$, and $\lVALW(w)=\lVALR(r)$.
\item $w_1=w_2$ whenever $\tup{w_1,r},\tup{w_2,r}\in \rf$ (that is, $\rf^{-1}$ is functional).
\item $E \cap \lR \suq \codom{\rf}$ (every read should read from some write).
\end{itemize}
\item $\mo$, called \emph{modification order}, is a disjoint union of relations $\set{\mo_\loc}_{\loc\in\Loc}$,
such that each $\mo_\loc$ is a strict total order on $E \cap \lW_\loc$.
\end{enumerate}
We denote the components of $G$ by $G.\lE$, $G.\lRF$, and $G.\lMO$,
and write $G.\lPO$ (called \emph{program order}) for the restriction of sequenced-before to $G.\lE$
(\ie $G.\lPO \defeq [G.\lE]; < ; [G.\lE]$).
For a set $E'\suq \Event$, we write $G.E'$ for $G.\lE \cap E'$ (\eg $G.\lW=G.\lE\cap \lW$).
The set of all execution graphs is denoted by $\Exec$.
\end{definition}

A \emph{declarative memory system} is simply a set $\G$ of execution graphs
(often formulated using a conjunction of several constraints).
We refer to execution graphs in a declarative memory system $\G$
as \emph{$\G$-consistent} execution graphs.

We can now define the behaviors allowed by a given declarative memory system.

\begin{definition}
A behavior $\beh$ is \emph{allowed by a declarative memory system} $\G$
if $\Event(\beh)=G.\lE$ for some execution graph $G\in\G$.
We denote by $\Beh{\G}$ ($\Behfin{\G}$) the set of all (finite) behaviors that are allowed by $\G$.
\end{definition}

The linking with programs is defined as follows.

\begin{definition}
\label{def:beh_tf_dec}
Let $\prog$ be a program, $\G$ be a declarative memory system,
and $\beh$ be a behavior.
\begin{itemize}[leftmargin=15pt]
\item $\beh$ is a 
\emph{behavior of $\prog$ under $\G$}  if $\beh\in\Beh{\prog}\cap\Beh{\G}$.
\item $\beh$ is a 
\emph{thread-fair behavior of $\prog$ under $\G$}  if $\beh\in\Behtf{\prog}\cap\Beh{\G}$.
\end{itemize}
\end{definition}

\subsection{A Declarative Memory System for SC}

To provide a declarative formulation of \SC, following \citet{herding-cats}, 
we use the standard ``from-read'' relation (\aka ``reads-before'').
In this relation a read $r$ is ordered before a write $w$ 
if $r$ reads from a write $w'$ that is earlier than $w$ in the modification order.

\begin{definition}
\label{def:fr}
The \emph{from-read} relation for an execution graph $G$,
denoted by $G.\lFR$, is defined by:
$$G.\lFR \defeq (G.\lRF^{-1} \seq G.\lMO) \setminus [G.\lE].$$
\end{definition}

Note that we have to explicitly subtract the identity relation from $ G.\lRF^{-1} \seq G.\lMO$
for making sure that $\lU$ events are not $G.\lFR$-ordered before themselves.

Having defined $\lFR$, the ``\SC-happens-before'' relation is given by:
$$G.\lHBSC \defeq (G.\lPO \cup G.\lRF \cup G.\lMO \cup G.\lFR)^+$$

In turn, \SC consistency requires that $G.\lHBSC$ is irreflexive:
$$\G_\SC\defeq \set{G\in\Exec \st G.\lHBSC \text{ is irreflexive}}$$

Intuitively speaking, every trace of $\M_\SC$ induces an execution graph $G$
with irreflexive $G.\lHBSC$; and, conversely, every total order on $G.\lE$ that extends
$G.\lHBSC$ is essentially a trace of $\M_\SC$.
The following standard theorem formalizes these claims for \emph{finite} executions:

\begin{theorem}[\cite{herding-cats}]
\label{thm:sc-equiv_basic}
$\Behfin{\M_\SC}=\Behfin{\G_\SC}$.
\end{theorem}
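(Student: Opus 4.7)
The plan is to prove the two inclusions separately, using linearization as the bridge between the operational trace view and the declarative graph view.

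For $\Behfin{\M_\SC} \subseteq \Behfin{\G_\SC}$, I take a finite observable trace $\trace$ of $\M_\SC$ inducing a behavior $\beh$ and build an execution graph $G$ from it. I let $G.\lE \defeq \Event(\beh)$; define $G.\lRF$ by setting $\tup{w,r}\in G.\lRF$ whenever $r$ is a read (or the read-part of an RMW) in $\trace$ and $w$ is the latest write event in $\trace$ to $\lLOC(r)$ preceding $r$ (or the initialization event if there is none); and define $G.\lMO_\loc$ as the total order on $G.\lW_\loc$ induced by the order of appearance in $\trace$. The well-formedness of $G.\lRF$ follows directly from the read rule of $\M_\SC$, which reads the most recent write to the location from memory; RMW events are handled because the read and write parts are contiguous in $\trace$. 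For consistency, I observe that $G.\lPO$, $G.\lRF$, $G.\lMO$, and $G.\lFR$ are all contained in the linear order given by $\trace$, so $G.\lHBSC$ is contained in that linear order and hence irreflexive.

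For $\Behfin{\G_\SC} \subseteq \Behfin{\M_\SC}$, I take a finite $G \in \G_\SC$ and proceed in the opposite direction. Since $G.\lHBSC$ is irreflexive on a finite set, it has a linear extension $<_G$ on $G.\lE$ placing initialization events first. I enumerate $G.\lE$ as $e_1, e_2, \ldots, e_n$ according to $<_G$ and define $\trace$ as the corresponding sequence of observable labels $\tstep{\lTID(e_i)}{\lLAB(e_i)}$. To show $\trace$ is a trace of $\M_\SC$, I inductively maintain the invariant that after processing $e_1, \ldots, e_k$, the memory state $\m_k$ maps each $\loc$ to the value written by the $<_G$-latest event in $G.\lW_\loc$ among $e_1,\ldots,e_k$. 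The write and RMW cases update $\m$ consistently; for the read case (including the read-part of an RMW) with $\tup{w,e_i}\in G.\lRF$, I must show that $w$ is still the latest write to $\lLOC(e_i)$ in $\m_{i-1}$, i.e., no $w'\in G.\lW_{\lLOC(e_i)}$ with $w <_G w' <_G e_i$ exists. If such a $w'$ existed, then $\tup{w,w'}\in G.\lMO$ (by $<_G$-compatibility of $\lMO$ once I choose $<_G$ to extend $\lMO$), so $\tup{e_i,w'}\in G.\lFR$, contradicting $\tup{w',e_i} \in {<_G}$.

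The main technical point is handling RMW events correctly: for an RMW $u$ that reads from $w$, I need $w$ and $u$ to be $<_G$-adjacent among writes to $\lLOC(u)$, otherwise the $\M_\SC$ transition would read a wrong value. This is forced by $G.\lHBSC$-irreflexivity: any intervening write $w'$ with $\tup{w,w'},\tup{w',u}\in G.\lMO$ would combine with $\tup{u,w'}\in G.\lFR$ to form the cycle $w' \xrightarrow{\lMO} u \xrightarrow{\lFR} w'$. So choosing any linearization $<_G$ that extends $G.\lHBSC$ automatically respects RMW atomicity. Finally, the two constructions are mutually inverse at the level of behaviors (the event-set construction recovers $\beh$ from $\trace$ and vice versa via $\beh(\trace)$), closing the equivalence. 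I expect the bookkeeping around RMWs and the argument that the read value matches $\m$ at the right step to be the most fiddly part, but no deep obstacle arises because finiteness makes the linear extension of $G.\lHBSC$ immediate.
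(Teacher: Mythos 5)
Your proof is correct and follows essentially the same route as the paper's own argument (the paper cites \cite{herding-cats} for the finite case, but its appendix proofs of the fair/infinite generalization for \SC use exactly your two constructions: reading $\lRF$ and $\lMO$ off the trace order for the forward direction, and enumerating a linear extension of $G.\lHBSC$ with the $\lFR$-based contradiction for the backward direction). The only nit is that you should enumerate $G.\lE\setminus\Init$ rather than $G.\lE$, since $\Init$ may be infinite even for a finite behavior and initialization events do not appear as trace steps.
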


\begin{example}
\label{ex:SB_dec}
$\G_\SC$ forbids the 
annotated outcome of the \ref{prog:SB} program from \cref{ex:SB}
because the following graph is $\G_\SC$-inconsistent
($\evlab{\lW}{}{\loc}{0}$ and $\evlab{\lW}{}{\loca}{0}$ are the implicit initialization writes):
\begin{equation*}
\small
\begin{tikzpicture}[yscale=0.9,xscale=1.5]
  \node (0x)  at (-2,0.8) {$\evlab{\lW}{}{\loc}{0}$};
  \node (0y)  at (-2,0.2) {$\evlab{\lW}{}{\loca}{0}$};
  \node (11)  at (0,1) {$\evlab{\lW}{}{\loc}{1}$ };
  \node (12)  at (1,1) {$\evlab{\lR}{}{\loca}{0}$ };
  \node (21)  at (0,0) {$\evlab{\lW}{}{\loca}{1}$ };
  \node (22)  at (1,0) {$\evlab{\lR}{}{\loc}{0}$ };
  \draw[po] (11) edge (12);
  \draw[po] (21) edge (22);
  \draw[po] (0x) edge (11) edge (21);
  \draw[po] (0y) edge (21) edge (11);
  \draw[mo,transform canvas={yshift=0.5ex}] (0x) edge node[above] {\smaller$\lMO$} (11);
  \draw[rf] (0y) edge node[below,pos=0.74] {\smaller$\lRF$} (12);
  \draw[mo,transform canvas={yshift=-0.5ex}] (0y) edge node[below] {\smaller$\lMO$} (21);
  \draw[rf] (0x) edge (22);
\end{tikzpicture}
\end{equation*}
Indeed, to get the desired behavior, the $\lRF$-edges are forced because of the read values.
Since $\lMO$ cannot contradict $\lPO$ (they are both included in $\lHBSC$),
the $\lMO$-edges are also forced as depicted above.
We obtain $\lFR$-edges from $\evlab{\lR}{}{\loc}{0}$ to $\evlab{\lW}{}{\loc}{1}$
and from $\evlab{\lR}{}{\loca}{0}$ to $\evlab{\lW}{}{\loca}{1}$,
which, in turn, imply a $\lHBSC$-cycle composed of two $\lPO$ and two $\lFR$ edges.
\end{example}

\subsection{A Declarative Memory System for TSO}\label{sec:declarative TSO memory system}

Following \citet{herding-cats}, a declarative formulation for \TSO
is easily obtained from the one of \SC, by removing from the transitive closure
in $\lHBSC$ the program order edges from writes to reads
that are not necessarily ``preserved'' in \TSO.
Indeed, because writes are buffered in \TSO,
roughly speaking, the effect of a write in \TSO may be delayed
\wrt subsequent reads.
By contrast, it cannot be delayed \wrt subsequent writes,
since entries in the \TSO buffers propagate in a FIFO fashion.

When removing the write to read program order edges,
we need to explicitly enforce ``\SC per-location'' (\aka coherence),
which takes care of intra-thread write-read pairs (a read $r$ from $\loc$ that is later in program order than a write $w$ to $\loc$
may not read from a write that is $\lMO$-earlier than $w$).
To achieve this, the model employs the following derived relations:
\begin{align*}
G.\lRFE &\defeq G.\lRF \setminus G.\lPO  \tag{\emph{external reads-from}} \\
G.\lPPO &\defeq G.\lPO \setminus (\lWs \times \lRs)  \tag{\emph{preserved program order}} \\
G.\lHBTSO &\defeq (G.\lPPO \cup G.\lRFE \cup G.\lMO \cup G.\lFR)^+ \tag{\emph{\TSO-happens-before }} \\
G.\lSC_{\lLOC}  &\defeq (G.\lPO\rst{\lLOC} \cup G.\lRF \cup G.\lMO \cup G.\lFR)^+ \tag{\emph{SC-per-location order}} 
\end{align*}

Then, \TSO consistency requires that 
$G.\lHBTSO$ and $G.\lSC_{\lLOC}$ are irreflexive:
$$\G_\TSO\defeq \{ G\in\Exec \st G.\lHBTSO \text{\  and\ } G.\lSC_{\lLOC} \text{\ are irreflexive}\}$$

\begin{theorem}[\cite{herding-cats}]
\label{thm:tso-equiv_basic}
$\Behfin{\M_\TSO}=\Behfin{\G_\TSO}$.
\end{theorem}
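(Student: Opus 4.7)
The plan is to establish the two inclusions separately, paralleling the proof of the \SC case (\cref{thm:sc-equiv_basic}) but accounting for the store buffers.

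For the inclusion $\Behfin{\M_\TSO}\subseteq\Behfin{\G_\TSO}$, I would start from a finite run $\run$ of $\M_\TSO$ inducing a behavior $\beh$ and tag each transition in $\run$ with its position. From these positions I extract, for each write $w\in\Event(\beh)$, an ``issue time'' (when it was appended to the buffer, or when it executed in the case of an RMW) and a ``commit time'' (when it propagated to memory, coinciding with the issue time for RMWs and for the initial writes). I then construct $G$ with $G.\lE=\Event(\beh)$; $G.\lRF$ maps each read $r$ to the unique write $w$ whose value $r$ observed in the \TSO read rule (either the latest matching buffered write or the current memory cell); and $G.\lMO$ orders the writes to each location by commit time. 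The core step is to define a total order $\sigma$ on $G.\lE$ in which a non-RMW write is placed at its commit time and every other event at its issue time, and then check that $G.\lPPO\cup G.\lRFE\cup G.\lMO\cup G.\lFR\subseteq\sigma$. Linearity of $\sigma$ then yields irreflexivity of $G.\lHBTSO$. The removal of $\lWs\times\lRs$ from $\lPPO$ is precisely what makes this work: it is the only program-order pair for which the write's commit time may legitimately follow the read's issue time. For $G.\lSC_\lLOC$ irreflexivity, I work per location: the only delicate case is a read that takes its value from a same-thread buffered write, and one checks that the induced $\lFR$-edges still point in the direction of later commit times of same-location writes.

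For the reverse inclusion $\Behfin{\G_\TSO}\subseteq\Behfin{\M_\TSO}$, I would take a $\G_\TSO$-consistent graph $G$ with $G.\lE$ finite and behavior $\beh$ such that $\Event(\beh)=G.\lE$. Since $G.\lE$ is finite and $G.\lHBTSO$ is irreflexive, I can pick a linear extension $\pi$ of $G.\lHBTSO \cup G.\lPO$ on $G.\lE$. From $\pi$ I read off a schedule for $\M_\TSO$: a thread issues its events in $\lPO$ order (appending writes to its buffer, executing reads and RMWs directly), and between events I interleave buffer-propagation steps that commit each pending write at the position dictated by $\pi$ (with the convention that RMWs, being atomic, flush the buffer and commit simultaneously). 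A per-step invariant relating the state of $\M_\TSO$ to the prefix of $\pi$ processed so far shows that the read rule always picks the value mandated by $G.\lRF$ and that $\lMO$ is respected by the propagation order.

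The main obstacle is the reverse direction, in particular verifying the read-value invariant at each step. For a read $r$ whose $G.\lRF$-source $w$ lies in the same thread, one must arrange that $w$ has not yet been propagated at the point $\pi$ reaches $r$ and that no later buffered same-thread write to $\lLOC(r)$ intervenes; for an external source one must ensure that $w$ is already in memory and that no $\lMO$-later write to $\lLOC(r)$ has overwritten it. Both conditions reduce to properties of $\pi$ that follow from $G.\lSC_\lLOC$ irreflexivity combined with the edges used to build $G.\lHBTSO$ (notably $\lFR$ and $\lRFE$). RMWs add a further wrinkle: their atomicity forbids any propagation of another thread's write to $\lLOC(r)$ between the read and write halves, which is automatic once the \TSO rule is consulted since RMWs execute when the buffer is empty and without any intermediate memory transition. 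Once these constraints are threaded into a single linear schedule, the induced trace lies in $\OTraces{\M_\TSO}$ and induces $\beh$, completing the proof.
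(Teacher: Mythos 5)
The paper does not prove this statement itself (it cites \citet{herding-cats}); the closest in-paper argument is the appendix proof of the fair, possibly-infinite generalization, whose forward direction your first inclusion matches closely: your ``issue time / commit time'' split is exactly the paper's visibility function (reads and RMWs visible at their own transition, non-RMW writes at their propagation transition), and the observation that only $\lWs\times\lRs$ pairs can invert that order is the right one.

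Your backward direction, however, contains a concrete error: you propose to ``pick a linear extension $\pi$ of $G.\lHBTSO\cup G.\lPO$,'' but this union is not acyclic for $\G_\TSO$-consistent graphs, so no such extension exists in general. The \ref{prog:SB} execution is the counterexample: the cycle $\evlab{\lW}{}{\loc}{1}\xrightarrow{\lPO}\evlab{\lR}{}{\loca}{0}\xrightarrow{\lFR}\evlab{\lW}{}{\loca}{1}\xrightarrow{\lPO}\evlab{\lR}{}{\loc}{0}\xrightarrow{\lFR}\evlab{\lW}{}{\loc}{1}$ lies in $\lPO\cup\lFR\subseteq\lPO\cup\lHBTSO$, even though the graph is $\G_\TSO$-consistent (the two $\lPO$ edges are write-to-read, hence absent from $\lPPO$). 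So your construction fails precisely on the behaviors that distinguish $\M_\TSO$ from $\M_\SC$ and make the theorem interesting. The fix is the one the paper's appendix uses: linearize only $\lHBTSO$, and decouple issuing from committing in the schedule --- when the enumeration reaches a read, first emit write transitions (into the buffer, without propagation) for all $\lPO$-earlier writes not yet issued, and emit each write's propagation transition only at that write's own position in the enumeration. This lets a write's commit follow a $\lPO$-later read while its buffer insertion precedes it, which is exactly the asymmetry your forward direction already exploits. Your discussion of the read-value invariant and of RMW atomicity is otherwise on the right track, but it has to be re-established relative to this two-phase schedule rather than a single linearization.
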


The execution graph for the \ref{prog:SB} program in \cref{ex:SB_dec}
is $\G_\TSO$-consistent. In particular, the two
$\lPO$ edges that participate in the $G.\lHBSC$ cycle
are from a write to a read, so none of them is included in $G.\lHBTSO$.

\subsection{A Declarative Memory System for RA}

The declarative model for \RA is obtained by strengthening the SC per-location requirement
to use \RA's happens-before relation instead of the program order:
\begin{align*}
G.\lHBRA & \defeq (G.\lPO \cup G.\lRF)^+ \tag{\emph{\RA-happens-before}} \\
G.\lRA_{\lLOC}  &\defeq (G.\lHBRA\rst{\lLOC} \cup G.\lRF \cup G.\lMO \cup G.\lFR)^+ \tag{\emph{\RA-per-location order}} 
\end{align*}

Then, \RA consistency requires that $G.\lRA_{\lLOC}$ is irreflexive:
$$\G_\RA\defeq \set{G\in\Exec \st G.\lRA_{\lLOC} \text{\ is irreflexive}}$$

\begin{example}
\label{ex:RA-MP}
The annotated outcome of the \ref{prog:MP} program from \cref{ex:MP}
is disallowed by $\G_\RA$ because the following (partially depicted) execution graph is 
$\G_\RA$-inconsistent:
\begin{equation*}
\small
\begin{tikzpicture}[yscale=0.9,xscale=1.5]
  \node (0x)  at (-2,0.8) {$\evlab{\lW}{}{\loc}{0}$};
  \node (0y)  at (-2,0.2) {$\evlab{\lW}{}{\loca}{0}$};
  \node (12)  at (0,1) {$\evlab{\lW}{}{\loc}{1}$ };
  \node (13)  at (1,1) {$\evlab{\lW}{}{\loca}{1}$ };
  \node (21)  at (0,0) {$\evlab{\lR}{}{\loca}{1}$ };
  \node (22)  at (1,0) {$\evlab{\lR}{}{\loc}{0}$ };
  \draw[po] (0x) edge (12) edge (21);
  \draw[po] (0y) edge (21) edge (12);
  \draw[po] (12) edge (13);
  \draw[po] (21) edge (22);
  \draw[rf] (13) edge node[above] {\smaller$\lRF\ $} (21);
  \draw[rf] (0x) edge  (22);
  \draw[mo,transform canvas={yshift=0.5ex}] (0x) edge node[above]{$\lMO$} (12);
\end{tikzpicture}
\end{equation*}
An execution graph for this outcome must have $\lRF$ and $\lMO$-edges as depicted above.
Since $\lMO$ goes from $\evlab{\lW}{}{\loc}{0}$ to $\evlab{\lW}{}{\loc}{1}$,
and $\evlab{\lR}{}{\loc}{0}$ reads from $\evlab{\lW}{}{\loc}{0}$,
we have an $\lFR$ edge from $\evlab{\lR}{}{\loc}{0}$ to $\evlab{\lW}{}{\loc}{1}$.
Due to the $\lHBRA$ from $\evlab{\lW}{}{\loc}{1}$ to $\evlab{\lR}{}{\loc}{0}$,
we obtain a $\lRA_{\lLOC}$-cycle, rendering this graph $\G_\RA$-inconsistent.
\end{example}

\begin{example}
\label{ex:2RMW-dec}
Similarly, the annotated outcome of \ref{prog:2RMW} from \cref{ex:2RMW}
is disallowed by $\G_\RA$ because the following execution graph is 
$\G_\RA$-inconsistent for any choice of $\lMO$:
\begin{equation*}
\small
\begin{tikzpicture}[yscale=0.8,xscale=1]
  \node (0x)  at (-2,0.5) {$\evlab{\lW}{}{\loc}{0}$};
  \node (11)  at (0,1) {$\evulab{}{\loc}{0}{1}$ };
  \node (21)  at (0,0) {$\evulab{}{\loc}{0}{1}$ };
  \draw[rf] (0x) edge node[above left=-2pt] {\smaller$\lRF$} (11);
  \draw[rf] (0x) edge node[below left=-2pt] {\smaller$\lRF$} (21);
\end{tikzpicture}
\end{equation*}
To see this, note that in $\G_\RA$-consistent executions,
 $\lMO$ cannot contradict $\lPO$.
 Hence, we must have $\lMO$ from the initial write to the two RMWs.
 This implies an $\lFR$ edge in both directions between the two RMWs,
 so that $\lRA_{\lLOC}$ must be cyclic.
\end{example}

Equivalence to the operational RA model for \emph{finite} behaviors follows from \cite{Kang-al:POPL17}:

\begin{theorem}
\label{thm:ra-equiv_basic}
$\Behfin{\M_\RA}=\Behfin{\G_\RA}$.
\end{theorem}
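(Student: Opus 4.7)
The plan is to prove both inclusions following the standard correspondence argument of \citet{Kang-al:POPL17}, relating operational states of $\M_\RA$ to declarative artifacts of $G$.

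For $\Behfin{\M_\RA} \suq \Behfin{\G_\RA}$, I would extract an execution graph from each finite trace. Given a trace $\trace$ of $\M_\RA$ inducing behavior $\beh$, build $G$ with $G.\lE = \Event(\beh)$; define $G.\lRF$ by recording, for each read transition, the write event (or initialization event in $\Init$) that originally introduced the consumed message; and define $G.\lMO$ per location by the timestamp order on messages. The key invariant, proved by induction over trace prefixes, is that for every thread $\tid$ and location $\loc$ the operational view $\Tview(\tid)(\loc)$ equals the timestamp of the $\lMO$-maximum write at $\loc$ that is $G.\lHBRA$-before the most recent event of $\tid$. Granted this invariant, every edge of $G.\lHBRA\rst{\lLOC}$, $G.\lRF$, $G.\lMO$, or $G.\lFR$ at some location $\loc$ is non-decreasing on the timestamp of the associated message at $\loc$, with strict increase somewhere around any putative cycle, so $G.\lRA_{\lLOC}$ must be irreflexive.

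For the converse $\Behfin{\G_\RA} \suq \Behfin{\M_\RA}$, I would simulate a consistent graph operationally. Given a finite $\G_\RA$-consistent $G$, I pick a total order $<_G$ on $G.\lE$ extending $G.\lHBRA \cup G.\lMO \cup G.\lFR$; acyclicity of this union follows from $G.\lRA_{\lLOC}$ irreflexivity together with the fact that any putative cycle leaving and returning to some location must already expose a same-location subcycle through $\lFR$ or $\lMO$. I then process events in $<_G$-order, firing the matching operational transition: reads pick the message originally added by $G.\lRF^{-1}(r)$, while writes pick fresh natural-number timestamps consistent with $G.\lMO$ (feasible because the graph is finite, so sufficient gaps can be reserved in $\N$ ahead of time). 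Enabledness is the crux: for a write at $\loc$, the allocated timestamp strictly exceeds the thread's current view of $\loc$ because every event contributing to that view is $<_G$-earlier and hence $G.\lMO$-earlier at $\loc$; for a read from $w$, the thread's view of $\loc$ is at most $\lTS$ of $w$'s message, since otherwise some $G.\lMO$-later write of $\loc$ would be $G.\lHBRA$-before the read and create a $G.\lFR$ violation of $<_G$.

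The main obstacle will be the RMW atomicity clause $\Tview''(\tid)(\loc)=\Tview'(\tid)(\loc)+1$ of $\M_\RA$, which forces the written timestamp of an RMW to be the immediate successor of the read timestamp. Matching this operationally against the declarative graph requires that no $G.\lMO$-write of $\loc$ falls between an RMW's read and write at $\loc$; this follows from $G.\lRA_{\lLOC}$ irreflexivity, because such an intervening write would close an $\lFR \seq \lMO$ cycle at $\loc$. Reserving consecutive timestamps for each RMW pair ahead of time, while still respecting $G.\lMO$ for ordinary writes at the same location, is the most delicate piece of the construction and the place where the equivalence proof really earns its keep.
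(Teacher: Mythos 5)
The paper itself does not prove this theorem---it defers to \citet{Kang-al:POPL17}---but the appendix proofs of the fairness-extended versions (the forward and backward lemmas for \RA) exhibit the intended structure, and your forward direction matches it: extract $\lRF$ from message consumption, $\lMO$ from timestamps, and argue irreflexivity of $\lRA_{\lLOC}$ by mapping same-location edges into the timestamp order. (Minor nit: because $\M_\RA$ has silent propagation steps, the view $\Tview(\tid)(\loc)$ can strictly exceed the timestamp of the $\lMO$-maximal $\lHBRA$-prior write to $\loc$, so your invariant should be an inequality; only that direction is used. Your closing observation about RMWs is also correct: irreflexivity of $\lRA_{\lLOC}$ already forces an RMW to read from its immediate $\lMO$-predecessor, so assigning timestamps by $\lMO$-rank discharges the $+1$ side condition.)

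The backward direction, however, rests on a false premise. You want a total order $<_G$ extending $G.\lHBRA \cup G.\lMO \cup G.\lFR$, and you justify its existence by claiming any cycle in this union exposes a same-location subcycle. That is not so. In the execution graph of the store-buffering program \ref{prog:SB}, which is $\G_\RA$-consistent and whose behavior is realizable in $\M_\RA$, one has the cycle
$\wlab{}{\loc}{1} \to \rlab{}{\loca}{0} \to \wlab{}{\loca}{1} \to \rlab{}{\loc}{0} \to \wlab{}{\loc}{1}$
alternating $\lPO$ and $\lFR$ edges; it touches two locations and contains no same-location subcycle, yet it lies in $\lHBRA \cup \lFR$. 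So the total order you need does not exist for a standard litmus test whose behavior your simulation must nevertheless produce. The repair is to enumerate events in an order extending only $\lPO \cup \lRF$ (this is all the operational machine requires: a read may legitimately fire \emph{after} an $\lMO$-later write to its location already exists, by reading a stale message) and to handle $\lMO$ entirely through the timestamp assignment, fixed up front as the per-location $\lMO$-rank. Your enabledness arguments must then be rerouted through consistency rather than through $<_G$: for instance, a write to $\loc$ receives a timestamp above the thread's current view because every write contributing to that view is $\lHBRA$-prior at the same location and hence $\lMO$-prior---otherwise $\lHBRA\rst{\lLOC} \seq \lMO$ would yield a cycle in $\lRA_{\lLOC}$. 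This decoupling of execution order from coherence order is exactly how the appendix organizes the backward proof of the fair generalization.
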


\subsection{A Declarative Memory System for StrongCOH}

The declarative model for \SCOH is obtained by requiring ``SC per-location'' and irreflexivity of \RA's happens-before, $(G.\lPO \cup G.\lRF)^+$:
$$\G_\SCOH \defeq \set{G\in\Exec \st G.\lHB_\RA \text{ and } G.\lSC_\lLOC \text{\ are irreflexive}}$$

Similarly to \RA, equivalence to the operational \SCOH model for \emph{finite} behaviors follows from the results of \citet{Kang-al:POPL17}:
\begin{theorem}
\label{thm:scoh-equiv_basic}
$\Behfin{\M_\SCOH}=\Behfin{\G_\SCOH}$.
\end{theorem}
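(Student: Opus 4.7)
The plan is to adapt the proof of \cref{thm:ra-equiv_basic} for \RA, following the approach of \citet{Kang-al:POPL17} applied to the promise-free, relaxed fragment of their Promising Semantics. The differences from the \RA proof come from (i) the weaker read rule (which only advances the reader's view at the location being read), and (ii) the correspondingly weaker declarative consistency condition (replacing $\G.\lRA_\lLOC$-irreflexivity with irreflexivity of $\G.\lHB_\RA$ \emph{and} $\G.\lSC_\lLOC$).

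For the forward direction, $\Behfin{\M_\SCOH} \subseteq \Behfin{\G_\SCOH}$, I take a finite run $\run$ of $\M_\SCOH$ inducing a behavior $\beh$, and build the execution graph $G$ with $G.\lE = \Event(\beh)$, letting $G.\lRF$ be determined by the unique message each read observes in $\run$, and $G.\lMO_\loc$ be the timestamp order on write-messages of location $\loc$. The key invariants are that (a) at any point in the run, a thread's view at a location $\loc$ is at least the timestamp of its most recent access to $\loc$, and (b) every write selects a strictly larger timestamp than the thread's current view of the written location. Invariant (a) together with the fact that reads are constrained by $\Tview(\tid)(\loc) \leq \ts$ yields irreflexivity of $G.\lSC_\lLOC$ by induction on the length of an $\lSC_\lLOC$-cycle (each step is non-decreasing on the relevant location's timestamp, and some step is strictly increasing). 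For $G.\lHB_\RA$, I associate to each event a ``creation time'' in $\run$, and observe that both $\lPO$ and $\lRF$ respect this time strictly, so $G.\lHB_\RA$ is acyclic.

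For the backward direction, $\Behfin{\G_\SCOH} \subseteq \Behfin{\M_\SCOH}$, given a $\G_\SCOH$-consistent $G$ corresponding to a finite behavior, I pick a total order $<_G$ on $G.\lE$ extending $G.\lPO \cup G.\lRF$, which exists because $G.\lHB_\RA$ is irreflexive. I then simulate $G$ step-by-step in $<_G$-order, choosing, for every write event $w$ on location $\loc$, the timestamp to be the position of $w$ in $G.\lMO_\loc$ (so $\lMO$ agrees with timestamp order), and letting reads consume the message produced by their $G.\lRF$-source. The main correctness check is that at the moment a read $r$ with $\tup{w,r}\in G.\lRF$ is simulated, the reader's view at $\lLOC(r)$ does not exceed $\lTS(w)$; this follows from $G.\lSC_\lLOC$-irreflexivity: if the thread had observed a later-timestamped message of $\lLOC(r)$, we would get an $\lSC_\lLOC$-cycle through $\lFR$. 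RMWs require the extra check that no other message sits between the read and written timestamps, which follows from totality of $\lMO_\loc$ and the choice of timestamps as positions in $\lMO_\loc$.

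The main obstacle is the backward direction, specifically verifying that the reader's view permits the intended read. Unlike in \RA, StrongCOH's weaker read rule means that timestamps never propagate across locations via reads, so the inductive invariant relating the reader's per-location view to $G.\lSC_\lLOC$ is cleaner than the analogous $G.\lRA_\lLOC$ argument, but one must still carefully justify that the simulation never blocks, exploiting that $G.\lSC_\lLOC$ is irreflexive and that messages of each location are totally ordered by timestamp exactly as $\lMO_\loc$ orders the corresponding writes.
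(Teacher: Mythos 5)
Your proof is correct and follows essentially the same route as the paper: the paper defers the finite-trace case to the results of \citet{Kang-al:POPL17}, and its appendix proofs of the fair/infinite analogue for \SCOH use exactly this timestamp-based simulation — a forward direction mapping events to message timestamps and a backward direction that simulates an enumeration extending $\lPO\cup\lRF$, with read-enabledness and RMW-atomicity discharged via $G.\lSC_\lLOC$-irreflexivity. The one point to make explicit in your forward direction is that ``some step is strictly increasing'' holds only because a putative $\lSC_\lLOC$-cycle lying entirely in $\lPO\rst{\lLOC}\cup\lRF$ (where every step is merely non-strict) is already excluded by the acyclicity of $G.\lHB_\RA$ that you establish separately, so every remaining cycle must contain an $\lMO$ or $\lFR$ edge.
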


\section{Making Declarative Semantics Fair}
\label{sec:fair-dec-sem}

\begin{toappendix}
\subsection{Proofs for \Cref{sec:fair-dec-sem}}
\end{toappendix}

In this section, we introduce memory fairness
into declarative models in a model-agnostic fashion.

To define fairness of execution graphs, we require that the partial ordering of events
in the graph is, like the ordering of natural numbers, \emph{prefix-finite}.
From an operational point of view,
an event preceded by an infinite number of events is never executed.

\begin{definition}
 \label{def:prefix_finite}
A relation $R$ on a set $A$ is \emph{prefix-finite}
if $\set{a\st \tup{a,b}\in R}$ is finite for every $b\in A$.
\end{definition}

Concretely, we require the modification order and the from-read relation
to be prefix-finite.%
\footnote{Note that the \emph{program order} and the \emph{reads-from} relation are prefix-finite
in a well-formed execution graph.
The former--by construction, the latter--since its reverse relation is functional.}

\begin{definition}
\label{def:fair_dec}
An execution graph $G$ is \emph{fair} if $G.\lMO$ and $G.\lFR$ are prefix-finite.
We denote by $\G^\fair$ the set of all fair execution graphs,
and let $\G_X^\fair \defeq \G_X\cap\G^\fair$ for $X \in \set{\SC,\TSO,\RA,\SCOH}$.
\end{definition}

\begin{example}
The following program illustrates our definition of fairness:
\begin{equation*}
  \tag{SCDeclUnfair} \label{ex:sc_decl_unfair}
\inarrII
{ \phantom{L_1\colon} \assignInst{\loc}{1} \,; \\
  L_1\colon           \assignInst{a}{\loc} \comment{\text{only} 1} \\
  \phantom{L_1\colon} \gotoInst{L1} }
{ L_2\colon \assignInst{\loc}{2} \,; \\ \phantom{L_2\colon} \gotoInst{L_2} }
\end{equation*}
Thread-fair executions of this program cannot produce the annotated outcome
with the \SC memory system.
With the declarative \SC memory system, however, there are two ways
in which every read can read from the write of $1$.

\newcommand{\unVertices}{
 
  \node[label=left:{Thread 1:\quad}] (W1)   at (0, 1) {$\evlab{\lW}{}{x}{1}$};
  \node (R11)  at (1, 1) {$\evlab{\lR}{}{x}{1}$};
  \node (R12)  at (2, 1) {$\evlab{\lR}{}{x}{1}$};
  \node (inf1) at (2.5, 1) {$\cdots$};

  \node[label=left:{Thread 2:\quad}] (W21)  at (0, 0) {$\evlab{\lW}{}{x}{2}$};
  \node (W22)  at (1, 0) {$\evlab{\lW}{}{x}{2}$};
  \node (W23)  at (2, 0) {$\evlab{\lW}{}{x}{2}$};
  \node (inf2) at (2.5, 0) {$\cdots$};
}

  First, the write of $1$ to $x$ may have infinitely many $\lMO$-predecessors,
  as illustrated below.
  \begin{center}
    \begin{tikzpicture}[yscale=1,xscale=2]
      \unVertices
      \dpo{W21}{W22}; \dpo{W22}{W23};
      \dmo[transform canvas={yshift=-0.5ex}][below]{W21}{W22}; \dmo[transform canvas={yshift=-0.5ex}][below]{W22}{W23};
      \dmo[][left]{W21}{W1}; \dmo[][below left=-2pt]{W22}{W1}; \dmo[][above right=-2pt]{W23}{W1};
      \dpo{W1}{R11}; \dpo{R11}{R12};
      \drf[transform canvas={yshift=0.5ex}]{W1}{R11}; \drf[bend left=30][below,yshift=0.6ex,opacity=0.9]{W1}{R12};
    \end{tikzpicture}
  \end{center}

  Otherwise, the write of $1$ may have finitely many $\lMO$-predecessors but infinitely many $\lMO$-successors.
  Then, each of the $\lMO$-successors will have infinitely many $\lFR$-predecessors.
  \begin{center}
    \begin{tikzpicture}[yscale=1,xscale=2]
      \unVertices
      \dpo{W21}{W22};
      \dpo{W22}{W23};
      \dmo[][left]{W21}{W1}; \dmo[][below]{W1}{W22}; \dmo[transform canvas={yshift=-0.5ex}][below]{W22}{W23};
      \dpo{W1}{R11}; \dpo{R11}{R12};
      \drf[transform canvas={yshift=0.5ex}]{W1}{R11}; \drf[bend left=30][below,yshift=0.6ex,opacity=0.9]{W1}{R12};
      \dfr[left]{R11}{W22}; \dfr{R12}{W22};
      \dfr{R11}{W23}; \dfr[right]{R12}{W23};
    \end{tikzpicture}
  \end{center}

In both cases, the execution graph is unfair.
(As we prove below, this is not a coincidence.)
\end{example}

\begin{example}
On the converse, one should avoid unnecessary prefix-finiteness constraints.
In particular, requiring prefix-finiteness of cyclic relations, such as 
$[G.\lE \setminus \Init] \seq \lHBSC$ under \TSO, \RA, or \SCOH, is too strong.
Doing so would forbid the annotated behavior of the following example,
since the corresponding execution graph contains
an infinite $\lPO\cup\lFR$ descending chain.
Yet, \TSO, \RA, and \SCOH allow the annotated behavior,
as every write may be delayed past 1 or 2 reads.
\begin{gather*}
  \tag{HbAcyclic}\label{ex:hb_acyclic}
  \inarrII
  {L_1\colon           \assignInst{k}{k+1} \\
   \phantom{L_1\colon} \assignInst{x}{k} \\
   \phantom{L_1\colon} \assignInst{a}{y} \comment{0,0,1,2\ldots} \\
   \phantom{L_1\colon} \gotoInst{L_1} 
  }{L_2\colon          \assignInst{m}{m+1} \\
   \phantom{L_2\colon} \assignInst{y}{m} \\
   \phantom{L_2\colon} \assignInst{b}{x} \comment{0,1,2,3\ldots}  \\
   \phantom{L_2\colon} \gotoInst{L_2} 
  }
\end{gather*}
\begin{gather*}
\begin{tikzpicture}[yscale=0.8,xscale=1.4, every node/.style={inner sep=2pt}]
  \node[label=left:{Thread 1:\quad}]  (W11)  at (0, 1) {$\evlab{\lW}{}{x}{1}$};
  \node (R11)  at (1, 1) {$\evlab{\lR}{}{y}{0}$};
  \node (W12)  at (2, 1) {$\evlab{\lW}{}{x}{2}$};
  \node (R12)  at (3, 1) {$\evlab{\lR}{}{y}{0}$};
  \node (W13)  at (4, 1) {$\evlab{\lW}{}{x}{3}$};
  \node (R13)  at (5, 1) {$\evlab{\lR}{}{y}{1}$};
  \node[label=left:{Thread 2:\quad}]  (W21)  at (0, -0.2) {$\evlab{\lW}{}{y}{1}$};
  \node (R21)  at (1, -0.2) {$\evlab{\lR}{}{x}{0}$};
  \node (W22)  at (2, -0.2) {$\evlab{\lW}{}{y}{2}$};
  \node (R22)  at (3, -0.2) {$\evlab{\lR}{}{x}{1}$};
  \node (W23)  at (4, -0.2) {$\evlab{\lW}{}{y}{3}$};
  \node (R23)  at (5, -0.2) {$\evlab{\lR}{}{x}{2}$};
  \node (inf) at ($(R13)!0.5!(R23)+(0.5,0)$) {\huge\ldots};
  \dpo{W11}{R11}; \dpo{R11}{W12}; \dpo{W12}{R12}; \dpo{R12}{W13}; \dpo{W13}{R13}; 
  \dpo{W21}{R21}; \dpo{R21}{W22}; \dpo{W22}{R22}; \dpo{R22}{W23}; \dpo{W23}{R23}; 
  \draw[fr] (R12) to [in=45,out=225,looseness=0.3] (W21);
  \draw[fr] (R13) to [in=45,out=225,looseness=0.3] (W22);
  \draw[fr] (R21) to [out=135,in=315,looseness=0.3] (W11);
  \draw[fr] (R22) to [out=135,in=315,looseness=0.3] (W12);
  \draw[fr] (R23) to [out=135,in=315,looseness=0.3] (W13);
\end{tikzpicture}
\end{gather*}
\end{example}

Our main result extends
\cref{thm:sc-equiv_basic,thm:tso-equiv_basic,thm:ra-equiv_basic,thm:scoh-equiv_basic}
for \emph{infinite} traces by imposing memory fairness
on the operational systems (\cref{def:memory_fair})
and execution graph fairness on the
declarative systems (\cref{def:fair_dec}).

\begin{theorem}\label{thm:equiv_proof}
For $X \in \set{\SC,\TSO,\RA,\SCOH}$,
$$\Behmf{\M_X}=\Beh{\G_X^\fair}.$$
\end{theorem}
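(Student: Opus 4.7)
The plan is to prove each inclusion separately for each of the four models, leveraging the existing finite equivalences (\cref{thm:sc-equiv_basic,thm:tso-equiv_basic,thm:ra-equiv_basic,thm:scoh-equiv_basic}) and showing that memory-fairness on the operational side corresponds precisely to prefix-finiteness of $\lMO$ and $\lFR$ on the declarative side.

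For $\Behmf{\M_X} \subseteq \Beh{\G_X^\fair}$, I take a memory-fair trace $\trace$ inducing a behavior $\beh$ and build an execution graph $G$ with $G.\lE = \Event(\beh)$, $G.\lRF$ pointing each read to the write supplying its value in $\trace$, and $G.\lMO$ extracted from $\trace$ in a model-specific way (trace order of writes for $\M_\SC$; propagation order for $\M_\TSO$; timestamp order for $\M_\RA$ and $\M_\SCOH$). $\G_X$-consistency of $G$ follows by applying the corresponding finite-behavior equivalence to each finite prefix of $\trace$, since every consistency axiom is an irreflexivity requirement and any violating cycle would already appear finitely. For fairness, every write is assigned a natural-number index (trace position for $\M_\SC,\M_\TSO$; timestamp for $\M_\RA,\M_\SCOH$), giving prefix-finiteness of $\lMO$; in $\M_\RA/\M_\SCOH$ memory-fairness of the silent $\proplab(\tid,\msga)$ transitions bounds the timestamps, preventing an infinite $\lMO$-descending chain ending in any given write. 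Prefix-finiteness of $\lFR$ then follows because a read $r$ with $r \lFR w$ must, by the operational semantics together with memory-fairness, occur at a trace position bounded in terms of $w$.

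For the converse $\Beh{\G_X^\fair} \subseteq \Behmf{\M_X}$, I start with a fair $G \in \G_X$ and linearize $G.\lE$ into an $\omega$-sequence extending the model's ``happens-before''-like relation; prefix-finiteness of this relation on $G.\lE$ follows from prefix-finiteness of $\lPO$, $\lRF$, $\lMO$, and $\lFR$ (the first two are automatic for well-formed graphs). For $\M_\SC$ this linearization is directly a (vacuously memory-fair) trace. For $\M_\TSO$, I interleave $\proplab(\tid)$ silent steps after each write event in $\lMO$-order, placing each propagation late enough to remain consistent with all reads that $\lFR$-precede the corresponding $\lMO$-successor; eager flushing delivers memory-fairness. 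For $\M_\RA$ and $\M_\SCOH$, I give each write the timestamp equal to its $\lMO$-rank on its location (finite by prefix-finiteness of $\lMO$) and dovetail silent steps $\proplab(\tid,\msga)$ between program transitions so every continuously-enabled propagation is eventually scheduled.

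The main obstacle is the completeness direction for $\M_\RA$ and $\M_\SCOH$. Propagation steps mutate thread views, so scheduling $\proplab(\tid,\msga)$ too early may disable a subsequent read that $G$ prescribes to miss $\msga$ (its enabling condition $\Tview(\tid)(\loc) \leq \ts$ would fail because $\tid$'s view of $\loc$ has been advanced past $\lTS(\msga)$). Each such propagation must therefore be placed after all reads of $\tid$ that $G$ requires to miss $\msga$; the existence of a latest such read is guaranteed precisely by prefix-finiteness of $\lFR$. Constructing the dovetailed interleaving while maintaining these constraints, and then verifying that the resulting trace respects the operational transition relation, matches the prescribed $G.\lRF$ and $G.\lMO$, and is memory-fair, is the technical heart of the proof; the analogous argument for $\M_\TSO$, where only the coarser $\proplab(\tid)$ labels interact with buffer contents, follows the same pattern more easily.
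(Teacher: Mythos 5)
Your overall architecture matches the paper's: for soundness you extract a graph from a memory-fair trace with model-specific $\lMO$ (trace order, propagation order, timestamps), and for completeness you enumerate a fair graph respecting a happens-before-like order and interleave propagation steps, correctly identifying that prefix-finiteness of $\lFR$ is what makes the RA/StrongCOH propagation scheduling possible (this is exactly the role of the paper's $\safepoints$/$\tslot$ construction). Two steps, however, do not go through as you state them. First, in the soundness direction you derive $\G_X$-consistency of the constructed graph by ``applying the corresponding finite-behavior equivalence to each finite prefix of $\trace$''. But \cref{thm:sc-equiv_basic,thm:tso-equiv_basic,thm:ra-equiv_basic,thm:scoh-equiv_basic} only assert $\Behfin{\M_X}=\Behfin{\G_X}$, i.e., that \emph{some} consistent graph exists with the given finite event set; they say nothing about the particular graph you built from the trace, so a cycle in \emph{your} $G$ would not contradict them. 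The paper instead proves consistency directly, by showing that the relevant derived relations embed into the trace order (for SC/TSO) or the timestamp order (for RA/StrongCOH), which are strict partial orders.

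Second, and more seriously, in the completeness direction you assert that prefix-finiteness of the happens-before relation ``follows from prefix-finiteness of $\lPO$, $\lRF$, $\lMO$, and $\lFR$''. This is false in general: the transitive closure of a union of prefix-finite relations need not be prefix-finite, and \cref{rem:inf} gives an explicit counterexample (with unboundedly many threads) where $\lMO$ and $\lFR$ are prefix-finite yet $\lHBSC$ has an infinite descending chain. The missing idea is the paper's exploitation of the finiteness of $\Tid$: $\lPO$ is $n$-total on non-initialization events (and $\lPPO$ is $2n$-total for TSO), so any path of length $2n+1$ in the acyclic generator relation can be shortcut (\cref{lem:ct-pref-fin}), bounding the transitive closure by $R^{\leq 2n}$ and yielding \cref{cor:hb-pf}. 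Without this (or an equivalent) argument the enumeration via \cref{prop:respect-enum} is not available, and the construction of the operational run cannot even start.
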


As a corollary, it easily follows from our definitions that the set
of (thread\&) memory-fair behaviors of a program $\prog$ under 
$\M_X$ coincides with the set of (thread\&) memory-fair behaviors of a program $\prog$ under $\G_X^\fair$.

The full proof of \cref{thm:equiv_proof} is included in appendix (\cite{appendix})
and its Coq mechanization in \cite{artifact}.
Here, we outline the proof
starting with the easier direction.%

\subsection{$\Behmf{\M_X} \suq \Beh{\G_X^\fair}$}

Given a memory-fair behavior $\beh$ of $\M_X$,
we let $\trace$ be a memory-fair observable trace of $\M_X$
such that $\beh(\trace)=\beh$.
Then, using $\trace$, we construct a fair execution graph $G\in \G_X$.
Its events are determined by $\beh$ ($G.\lE = \Event(\beh)$),
and its relations are defined differently for every system:

\paragraph{SC}

The $\lRF$ and $\lMO$ relations are determined by the trace order:
for each read $\lRF$ assigns the latest write of the same location,
while $\lMO$ corresponds to the trace order restricted to writes to the same location.
It follows that $\lFR$ is included in the trace order,
and since the trace order is prefix-finite, $\lMO$ and $\lFR$ are prefix-finite as well.

\paragraph{TSO}

We define $\lMO$ to be the order in which writes to the same location are propagated (unbuffered) to memory.
For each read, $\lRF$ maps it either to the $\lMO$-maximal write to the same location
that was propagated before it in $\trace$ (if the read reads from memory)
or to the $\lPO$-maximal one by the same thread (if it reads from the buffer).
Since every write is eventually propagated to memory,
and once propagated no thread can read from an $\lMO$-prior write,
it follows that both $\lMO$ and $\lFR$ are prefix-finite.

\paragraph{RA and StrongCOH}

The $\lMO$ component of $G$ follows
the order induced by timestamps of messages in the operational run.
Prefix-finiteness of $\lMO$ follows from the facts that
a location and a timestamp uniquely identify the corresponding message
(and the write event in $G$ respectively)
and that timestamps are natural numbers---that is,
each write event $w$ representing a message with a timestamp $\ts$
has at most $\ts$ $\lMO$-prior writes.

The $\lRF$ component of $G$ connects
an event related to a read/RMW transition of $\trace$ with
a write event representing the message read by the transition.

Prefix-finiteness of $\lFR$ follows from the fact that in the fair operational run
every message is eventually propagated to every thread.
That is, for any given write event $w$ to a location $\loc$ in $G$ representing a message with a timestamp $\ts$,
there cannot be infinitely many reads from $\loc$ in $G$ reading from
write events that correspond to messages with timestamps smaller than $\ts$.

\subsection{$\Beh{\G_X^\fair} \suq \Behmf{\M_X}$}

The converse direction is more challenging.
Given a fair $\G_X$-consistent execution graph $G$,
we have to find a memory-fair observable trace $\trace$ of $\M_X$
such that $\Event(\beh(\trace))=G.\lE$.

Put differently, we need a total order over $G.\lE\setminus\Init$ that extends $G.\lPO$,
so that some memory-fair run of $\M_X$ executes according to this order.
Existing proofs of correspondence between declarative and operational definitions
of SC, RA, and StrongCOH pick an arbitrary total order extending $G.\lHBSC$ (for SC) and $G.\lHBRA$ (for RA and StrongCOH).
(Assuming the axiom of choice, any partial order $R$ on a set $A$ can be
extended to a total order on $A$.)
It is then not difficult to show that executing the program following that order yields the labels appearing in the execution graph.
For infinite graphs, however, an arbitrary extension of $G.\lHBSC$ (or $G.\lHBRA$ respectively)
does not necessarily correspond to a (memory-fair) run of the program.
For this, we need an \emph{enumeration} of $G.\lE\setminus\Init$, as defined next.

\begin{definition}
An \emph{enumeration} of a set $A$ is a (finite or infinite) injective (\ie without repetitions) sequence $\en$
covering all the elements in $A$ (\ie $A=\set{\en(i) \st i\in\dom{\en}}$).
An enumeration $\en$ of $A$ \emph{respects} a partial order $R$ on $A$
if $i<j$ whenever $\tup{\en(i),\en(j)} \in R$.
\end{definition}

Prefix-finiteness of a partial order ensures that a suitable enumeration exists
(our proof employs classical, non-constructive, reasoning):

\begin{proposition}\label{prop:respect-enum}
Let $R$ be a prefix-finite partial order on a countable set $A$.
Then, there exists an enumeration of $A$ that respects $R$.
\end{proposition}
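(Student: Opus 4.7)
The plan is to enumerate $A$ in stages, using a chosen background enumeration of $A$ to guarantee that each element eventually appears while respecting $R$. Since $A$ is countable, fix some enumeration $a_0, a_1, a_2, \ldots$ of $A$. I will construct $\en$ incrementally: at each stage $n$ I append a finite block of elements to the end of the sequence built so far, arranging that $a_n$ has been placed by the end of stage $n$.

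At stage $n$, if $a_n$ has already been appended to $\en$, do nothing. Otherwise, let $P_n = \set{b \in A \st \tup{b, a_n} \in R}$, which is finite by prefix-finiteness of $R$. The restriction of $R$ to the finite set $P_n \cup \set{a_n}$ is a finite partial order and therefore admits a linear extension by a standard finite topological-sort argument. Append to $\en$, in the order of this linear extension, exactly those elements of $P_n \cup \set{a_n}$ that have not yet been placed.

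Three things remain to check. Injectivity is immediate since only yet-unplaced elements are ever appended. Coverage holds because $a_n$ has been placed by the end of stage $n$, so every element of $A$ eventually appears. For respecting $R$, suppose $\tup{a,b} \in R$ and let $n_b$ be the stage at which $b$ is placed. Then $b \in P_{n_b} \cup \set{a_{n_b}}$, and in either case ($b = a_{n_b}$ directly, or $\tup{b, a_{n_b}} \in R$ followed by transitivity) we obtain $\tup{a, a_{n_b}} \in R$, so $a \in P_{n_b}$. Hence $a$ is either already in $\en$ from an earlier stage or is appended during stage $n_b$ before $b$ in the chosen linear extension, and so $a$'s index precedes $b$'s.

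The main obstacle is the temptation to use the simpler ``rank'' strategy, assigning each $b$ the rank $r(b) = \size{\set{a \st \tup{a,b} \in R}}$ and listing $A$ level by level. Prefix-finiteness does make $r$ well-defined and strictly monotone along $R$, but the individual level sets may be infinite, so listing all of level $0$ before touching level $1$ would leave every level-$1$ element with no finite index. The interleaving with the background enumeration $(a_n)_{n \in \N}$ is precisely what pins each element to a finite position, and it is also where classical reasoning enters, since the construction relies on deciding whether a given element has already appeared in the prefix of $\en$ built so far.
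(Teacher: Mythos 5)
Your proof is correct and follows essentially the same construction as the paper's: fix a background enumeration of $A$, and at each stage append the not-yet-placed elements of the finite predecessor-closed set $\set{b \st \tup{b,a_n}\in R}\cup\set{a_n}$ in topologically sorted order. The verification that the result is injective, covering, and $R$-respecting matches the paper's argument.
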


\begin{proofof}{prop:respect-enum}
Let $\nu$ be an enumeration on $A$.
For $i\in\N$, let $S^i = \set{\nu(i)} \cup \set{a \st \tup{a,\nu(i)} \in R}$
be the set containing $\nu(i)$ and all its $R$-predecessors.
Note that this set is finite because $R$ is prefix-finite.

Now let $T^i = S^i \setminus \bigcup \set{ S^j \st j < i}$ be the set of new
elements in $S^i$ (\ie those not included in the $S$-set for previous indices).
Since $T^i$ is a finite set, it can be topologically ordered with respect to $R$
yielding a sequence $t^i_1 \ldots t^i_{|T^i|}$ respecting $R$.

We will now define an enumeration $\mu$ on $A$ that respects $R$.
Let $\mu(i)$ to be the $i^{\textrm{th}}$ element of the sequence
$$t^1_1 \, t^1_2 \ldots t^1_{|T^1|} \,
 t^2_1 \, t^2_2 \ldots t^2_{|T^2|} \ldots
 t^i_1 \, t^i_2 \ldots t^i_{|T^i|}.$$
Note that this sequence has at least $i$ elements, so $\mu(i)$ is well defined.
Moreover, by construction, this sequence has no duplicates, respects $R$,
and the sequence for $i$ is a prefix of the sequence for $i+1$.
From these facts, it follows that $\mu$ is an enumeration of $A$ that respects $R$.
\end{proofof}

However, we do not yet have that the ``happens-before'' relation of each model is prefix-finite;
we only know that $G.\lMO$ and $G.\lFR$ are prefix-finite.
Next, we show that prefix-finiteness of $G.\lMO$ and $G.\lFR$
suffices for prefix-finiteness of the other relations,
as long as the program in question has a bounded number of threads.
(Recall that we assume that the set $\Tid$ is finite.)

First, note that every relation on a finite set is prefix-finite,
and prefix-finiteness is preserved by (finite) composition.
\begin{lemma} \label{lem:compose-pref-fin}
Let $R$ and $R'$ be prefix-finite relations and $n \in \N$.
Then $R \cup R'$,  $R \seq R'$ and $R^{\leq n}$ are also prefix-finite.
\end{lemma}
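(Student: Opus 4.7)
The plan is to verify each of the three closure properties separately by unfolding the definition of prefix-finiteness, i.e. fixing an arbitrary element $b$ and arguing that the set of its $R$-predecessors is finite.

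For $R \cup R'$, I would fix $b$ and observe that the predecessor set under the union is exactly $\{a \mid (a,b) \in R\} \cup \{a \mid (a,b) \in R'\}$, a union of two finite sets by the hypothesis, hence finite. For $R \seq R'$, I would fix $c$ and decompose: any predecessor $a$ factors through some intermediate $b$ with $(a,b)\in R$ and $(b,c)\in R'$. The set $B_c = \{b \mid (b,c) \in R'\}$ is finite by prefix-finiteness of $R'$, and for each $b \in B_c$ the set $\{a \mid (a,b) \in R\}$ is finite by prefix-finiteness of $R$. The predecessor set of $c$ under $R \seq R'$ is therefore the finite union $\bigcup_{b \in B_c} \{a \mid (a,b) \in R\}$, which is finite.

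For $R^{\leq n}$, I would proceed by induction on $n$. For $n = 0$ the relation is empty (or for the convention $R^{\leq 1} = R$, the claim is immediate). For the inductive step, I would write $R^{\leq n+1} = R^{\leq n} \cup (R \seq R^{n})$ (or an analogous decomposition). The inductive hypothesis gives prefix-finiteness of $R^{\leq n}$, from which prefix-finiteness of $R^n$ follows (as $R^n \subseteq R^{\leq n}$; a subrelation of a prefix-finite relation is trivially prefix-finite). Then the composition result just proved yields prefix-finiteness of $R \seq R^n$, and the union result closes the induction.

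There is no real obstacle here: each step is a direct unfolding of the definition, and the only mild subtlety is noting that a subrelation of a prefix-finite relation is prefix-finite (used to extract prefix-finiteness of $R^n$ from $R^{\leq n}$ in the induction). The whole argument fits comfortably in a short proof block.
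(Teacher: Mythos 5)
Your proof is correct and is exactly the direct unfolding-of-definitions argument the paper has in mind (it states this lemma without proof, treating it as immediate). All three parts check out, including the minor observation that a subrelation of a prefix-finite relation is prefix-finite.
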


For transitive closures, we need an auxiliary property.

\begin{definition}
\label{def:n_totality}
A relation $R$ on a set $A$ is \emph{$n$-total} if for every $n+1$ distinct
elements $a_1 \til a_{n+1} \in A$, we have $\tup{a_i,a_j}\in R$ for some $1 \leq i,j \leq n+1$.
\end{definition}

For an execution graph $G$ with $n$ threads, 
$G.\lPO$ is $n$-total
(as a relation on $G.\lE \setminus \Init$).
By the pigeonhole principle, any set of $n+1$ events in $G.\lE \setminus \Init$ contain two elements
belonging to the same thread, and those two events are ordered by $G.\lPO$.

Now, if a relation $R$ is $n$-total and acyclic, its transitive closure $R^{+}$ has
bounded length, which entails that $R^+$ is prefix-finite provided $R$ is
prefix-finite.

\begin{lemma} \label{lem:ct-pref-fin}
Let $R$ be an acyclic, $n$-total, prefix-finite relation.
Then, $R^+$ is prefix-finite.
\end{lemma}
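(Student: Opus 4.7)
My plan is to show that $R^+$ collapses to a bounded finite union of compositions of $R$, so that prefix-finiteness of $R^+$ follows from \cref{lem:compose-pref-fin}. Concretely, I will establish the ``bounded length'' claim alluded to before the lemma: every $(a, b) \in R^+$ is witnessed by an $R$-path of length at most $2n-1$, i.e., $R^+ \subseteq R^{\leq 2n-1}$.

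Take $(a, b) \in R^+$ and let $a = a_0 \mathrel{R} a_1 \mathrel{R} \cdots \mathrel{R} a_k = b$ be a shortest $R$-path witnessing it; the $a_i$ are pairwise distinct, as $R$ is acyclic. Assume for contradiction that $k \geq 2n$. The key move is to extract the $(n+1)$-element sub-sequence at the even indices $a_0, a_2, a_4, \ldots, a_{2n}$, which is well-defined since $2n \leq k$. By $n$-totality of $R$ applied to these $n+1$ distinct elements, there exist $i, j \in \{0, 1, \ldots, n\}$ with $i \neq j$ (forced by acyclicity) such that $(a_{2i}, a_{2j}) \in R$.

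Two sub-cases arise. If $i < j$, then the single edge $a_{2i} \to a_{2j}$ allows us to replace the sub-path $a_{2i} \mathrel{R} a_{2i+1} \mathrel{R} \cdots \mathrel{R} a_{2j}$ of length $2(j-i) \geq 2$ by one step, strictly shortening the overall path and contradicting minimality of $k$. If $i > j$, then the sub-path from $a_{2j}$ to $a_{2i}$ gives $(a_{2j}, a_{2i}) \in R^+$, which combined with $(a_{2i}, a_{2j}) \in R \subseteq R^+$ yields $(a_{2i}, a_{2i}) \in R^+$; but $R$ acyclic implies $R^+$ is also acyclic (any $R^+$-cycle unfolds to an $R$-cycle), giving the desired contradiction. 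Hence $k \leq 2n-1$, as claimed.

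Once $R^+ = R^{\leq 2n-1}$ is established, \cref{lem:compose-pref-fin} finishes the proof: prefix-finiteness is preserved by composition and by finite union, so the finite union $R \cup R^2 \cup \cdots \cup R^{2n-1}$ is prefix-finite whenever $R$ is. The main subtlety lies in the choice of evenly-spaced indices: picking the first $n+1$ consecutive vertices of the path would let $n$-totality be ``satisfied'' by an edge already present on the path, yielding no shortcut; inserting a gap of at least one vertex between chosen indices guarantees that any forward $R$-edge produces a genuine shortcut and any backward $R$-edge produces a cycle.
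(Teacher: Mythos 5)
Your proof is correct and follows essentially the same route as the paper's: both bound the length of $R$-paths by applying $n$-totality to $n{+}1$ alternate (evenly spaced) vertices of a long path, using acyclicity to force the resulting edge to point forward and yield a shortcut, and then invoke \cref{lem:compose-pref-fin} on $R^{\leq m}$ for a fixed $m$. The only differences are cosmetic (a shortest-path minimality argument rather than direct path-shortening, and the slightly tighter bound $R^{\leq 2n-1}$ in place of $R^{\leq 2n}$).
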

\begin{proofof}{lem:ct-pref-fin}
It suffices to show that $R^{2n+1} \suq R^{\leq 2n}$.
The reason is that then $R^+\suq R^{\leq 2n}$,
and so by \cref{lem:compose-pref-fin}, $R^+$ is prefix-finite.
Consider therefore a set of $2n+1$ distinct elements $a_1 \til a_{2n+1} \in A$
such that $\tup{a_k,a_{k+1}}\in R$ for all $1 \leq k \leq 2n$.
Consider the set of $n+1$ elements $a_1,a_3,a_{2n+1}$.
By $n$-totality, there exist $1 \leq i,j \leq n+1$ such that $\tup{a_{2i+1},a_{2j+1}}\in R$.
Since $R$ is acyclic, it follows that $i < j$,
and so there is a shorter path from $a_1$ to $a_{2n+1}$,
\ie $\tup{a_1,a_{2n+1}}\in R^{\leq 2n}$, as required.
(In more detail, we have $\tup{a_1,a_{2i+1}}\in R^{2i+1}$,
$\tup{a_{2i+1},a_{2j+1}}\in R$,
$\tup{a_{2j+1},a_{2n+1}}\in R^{2n-2j}$,
and so $\tup{a_1,a_{2n+1}}\in R^{2i+1+1+2n-2j}=R^{2(n+i+1-j)} \suq R^{\leq 2n}$.)
\end{proofof}

As a corollary, we obtain that the prefix-finiteness of the ``happens-before''
relation in fair execution  graphs.

\begin{corollary}\label{cor:hb-pf}
For $X \in \set{\SC,\TSO,\RA,\SCOH}$, let $G$ be a fair $\G_X$-consistent execution graph.
Then $[G.\lE \setminus \Init] \seq G.\lHB_X$ is prefix finite.%
\footnote{We define $G.\lHB_\SCOH$ to be equal to $G.\lHB_\RA$.}
\end{corollary}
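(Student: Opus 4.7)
The plan is to apply \cref{lem:ct-pref-fin} to the base of each model's happens-before. Let $R_X$ denote $G.\lPO \cup G.\lRF \cup G.\lMO \cup G.\lFR$ for $X = \SC$, $G.\lPPO \cup G.\lRFE \cup G.\lMO \cup G.\lFR$ for $X = \TSO$, and $G.\lPO \cup G.\lRF$ for $X \in \set{\RA, \SCOH}$, so that $G.\lHB_X = R_X^+$. Set $R'_X \defeq R_X \cap ((G.\lE \setminus \Init) \times (G.\lE \setminus \Init))$. A quick inspection of the definitions shows that no base edge targets an initialization event ($G.\lPO$ by the definition of $<$; $G.\lRF$ because init events are writes; $G.\lMO$ and $G.\lFR$ because init writes are $\lMO$-minimal), so for non-init $a$ we have $(a,b) \in R_X^+$ iff $b \in G.\lE \setminus \Init$ and $(a,b) \in (R'_X)^+$. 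It therefore suffices to prove that $(R'_X)^+$ is prefix-finite.

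For this I verify the three hypotheses of \cref{lem:ct-pref-fin} for $R'_X$. Prefix-finiteness follows from \cref{lem:compose-pref-fin}: $G.\lPO$ and $G.\lRF$ are prefix-finite in any well-formed graph (the former by construction, the latter since $G.\lRF^{-1}$ is functional), $G.\lMO$ and $G.\lFR$ are prefix-finite by the fairness hypothesis on $G$, and $G.\lPPO \suq G.\lPO$, $G.\lRFE \suq G.\lRF$ inherit prefix-finiteness. Acyclicity: for $\SC$, $\TSO$, and $\SCOH$, irreflexivity of $G.\lHB_X$ is a direct consistency axiom; for $\RA$, a hypothetical self-loop $(a,a) \in G.\lHBRA$ would trivially satisfy $\lLOC(a)=\lLOC(a)$ and so belong to $G.\lHBRA \rst{\lLOC} \suq G.\lRA_\lLOC$, contradicting $G.\lRA_\lLOC$-irreflexivity.

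The $n$-totality check is the subtlest point, and it is where $\TSO$ diverges from the other three models. For $\SC$, $\RA$, and $\SCOH$ the base contains $G.\lPO$, so among any $|\Tid|+1$ distinct non-init events pigeonhole yields two in the same thread, which are $G.\lPO$-related and hence $R'_X$-related; thus $R'_X$ is $|\Tid|$-total. For $\TSO$, $G.\lPPO$ omits $\lWs \times \lRs$ edges, so the one-step pigeonhole on $|\Tid|+1$ events is insufficient. Instead, I pigeonhole on $2|\Tid|+1$ events to obtain three $G.\lPO$-consecutive same-thread events $e_1, e_2, e_3$: if $(e_1,e_2) \nin G.\lPPO$ then $e_1 \in \lWs$ and $e_2 \in \lRs$, so $e_2 \nin \lWs$ forces $(e_2,e_3) \in G.\lPPO$, making $R'_\TSO$ $2|\Tid|$-total. \cref{lem:ct-pref-fin} then yields prefix-finiteness of $(R'_X)^+$, completing the argument.

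The main obstacle I anticipate is the $\TSO$ case: the preserved program order drops write-to-read pairs, so $n$-totality does not come for free from a one-step pigeonhole, and the three-element argument above is required. A secondary, purely book-keeping point is the init-target analysis in the first paragraph, but that is immediate once one unfolds the definitions of $G.\lPO$, $G.\lRF$, $G.\lMO$, and $G.\lFR$ together with the convention that initialization events are $\lMO$-minimal writes.
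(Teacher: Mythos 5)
Your proof is correct and follows essentially the same route as the paper's: reduce to \cref{lem:ct-pref-fin}, get prefix-finiteness of the base relation from \cref{lem:compose-pref-fin} and the fairness hypothesis, acyclicity from the consistency axioms, $n$-totality of $\lPO$ by pigeonhole for \SC/\RA/\SCOH, and $2n$-totality of $\lPPO$ for \TSO via three same-thread events of which at least two are $\lPPO$-related. Your explicit restriction to $G.\lE \setminus \Init$ (justified by noting that, in consistent graphs, no base edge targets an initialization event) is a bit more careful than the paper's parenthetical treatment, but it is the same argument.
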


\begin{proofof}{cor:hb-pf}
In all cases, by applying \cref{lem:ct-pref-fin},
it suffices to show that the non-transitive versions of these relations
are acyclic, prefix-finite, and $n$-total.
Acyclicity follows by the consistency conditions,
prefix-finiteness follows by \cref{lem:compose-pref-fin} as they are unions of prefix-finite relations.

$n$-totality follows immediately for SC, RA, and StrongCOH, because $\lPO$ is $n$-total where $n$ is the number of threads.

In the case of TSO, $\lPO \setminus (\lWs \times \lRs)$ is $2n$-total (where $n$ is the number of threads)
because for every three events in the same thread at least two of them are related by $\lPO \setminus (\lWs \times \lRs)$.
Namely, let the three events be $a$, $b$, and $c$, and without loss of generality
assume that $\tup{a,b}\in\lPO$ and $\tup{b,c}\in\lPO$.
If $\tup{a,b}\notin \lWs \times \lRs$, then $\tup{a, b}$ satisfies the required condition. 
Otherwise, depending on the type of $c$, either $\tup{b, c}$ or $\tup{a, c}$ satisfy it.
\end{proofof}

From \cref{prop:respect-enum}, there is an enumeration $\nu$ that respects $\lHB_X$.
We use $\nu$ to construct a program trace $\trace$:

\paragraph{SC}
The trace $\trace$ follows $\nu$ exactly.
Since $\M_\SC$ has no silent memory transitions, $\trace$ is trivially memory fair. 
    
\paragraph{TSO}
The trace $\trace$ is incrementally constructed by following the order of events
in $\nu$ and appending an appropriate sequence of transitions.
If the next event in $\nu$ is a read,
we append to $\trace$ all unexecuted $\lPO$-prior writes and then the read.
If the next event in $\nu$ is a write, we append the write to the trace
if it has not already been included in the trace.
In addition, when the next event is a write, we append its propagation action to memory.
By construction, every write in $\trace$ is eventually propagated to memory.

\paragraph{RA and StrongCOH}
The trace $\trace$ is the enumeration $\nu$ interleaved with silent \RA/\SCOH transition labels.
Namely, for each write $w$ and thread $\tid$, we compute an index $i$ in the enumeration 
such that it is \emph{safe} to propagate $w$ to $\tid$ at that index:
for each event in $\tid$ with index greater than $i$, there is no $\lX$-following (where $\lX=\lHB_\RA$ for \RA and $\lX=\lRF^?\mathbin{;}\lPO^?$ for \SCOH)
(i) write that $\lMO$-precedes $w$
and (ii) read that reads from a write $\lMO$-preceding $w$.
Since $G$ is fair, such an index is defined for all (non-initialization) writes. 
Then, after the event with an index corresponding to some write has been enumerated, we execute a propagation transition for the write.
In that way, every write is eventually propagated to every thread, so the resulting trace is memory fair.

\begin{remark}
\label{rem:inf}
\Cref{cor:hb-pf} relies on having a bounded number of threads.
With infinite number of threads, generated, \eg by thread spawning,
prefix-finiteness of $\lMO$ and $\lFR$ is not enough to rule out unfair behaviors.
To see this, consider the annotated behavior of the following program
and the corresponding execution graph:
\begin{gather*}
    {\inarr{
      L\colon \assignInst{i}{i+1} \\
      \phantom{L\colon} \kw{spawn} \left\{\;\inarr{
      x_{i+1} := 1 \\
      a := x_{i} \comment{only 0} \\
      }\;\right\} \\
      \phantom{L\colon} \gotoInst{L}
    }}
\qquad
\inarr{\begin{tikzpicture}[yscale=0.8,xscale=0.8]
  \node (W1)  at (-2, 2) {$\evlab{\lW}{}{x_2}{1}$};
  \node (R1)  at (-2, 1) {$\evlab{\lR}{}{x_1}{0}$};
  \node (W2)  at (0, 2) {$\evlab{\lW}{}{x_3}{1}$};
  \node (R2)  at (0, 1) {$\evlab{\lR}{}{x_2}{0}$};
  \node (W3)  at (2, 2) {$\evlab{\lW}{}{x_4}{1}$};
  \node (R3)  at (2, 1) {$\evlab{\lR}{}{x_3}{0}$};
  \node (inf) at (5, 1.5) {\Huge $\cdot\! \cdot\! \cdot$};
  \node (R4)  at (4, 1) {$\evlab{\lR}{}{x_4}{0}$};
  \draw[po] (W1) edge (R1); \draw[po] (W2) edge (R2); \draw[po] (W3) edge (R3);
  \draw[fr] (R2) edge node[right] {\smaller$\lFR$} (W1);
  \draw[fr] (R3) edge node[right] {\smaller$\lFR$} (W2);
  \draw[fr] (R4) edge node[right] {\smaller$\lFR$} (W3);
\end{tikzpicture}}
\end{gather*}
While $\lMO$ and $\lFR$ are trivially prefix-finite,
$\lHBSC$ has an infinite descending chain, and indeed there is
no \SC execution of the program leading to the annotated behavior
(where spawn adds a thread to the current pool,
and a thread from the pool is non-deterministically chosen at each step).
\end{remark}

\subsection{Making RC11 Fair}
\label{sec:rc11}

Having established evidence for the adequacy of the declarative fairness condition,
we may apply this condition in other (and richer) declarative models.
In particular, we propose to adopt this condition into the C/C++ memory model.
Next, we discuss this proposal in the context of the RC11 model~\cite{scfix},
a repaired version of the C/C++11 specification~\cite{Batty-al:POPL12}
that fixes certain issues involving sequentially consistent accesses
and works around the ``thin-air'' problem by completely forbidding $\lPO\cup\lRF$ cycles.
A full definition of RC11 is obtained by carefully
combining the key concepts of \SC, \RA, and \SCOH.
It requires us to include  in the declarative framework
access modes (\aka ``memory orderings''---the consistency level required from every memory access),
and several types of fences.
For simplicity, we elide these definitions and keep the discussion more abstract.
Indeed, there is nothing special about RC11 in this context---the declarative fairness
condition could be added to any model requiring  $\lPO\cup\lRF$ acyclicity.

Generally speaking, when proposing a strengthening of a programming language memory model,
one has to make sure that the mapping schemes to multicore architectures are not broken,
and that source-to-source compiler transformations are still validated.
In our case, the mapping of RC11 to x86-TSO trivially remains sound.
Indeed, as we saw in~\cref{thm:equiv_proof}, the natural operational characterization of liveness in TSO
corresponds to the declarative condition requiring that the $\lMO$ and $\lFR$ relations
are prefix-finite. Since the same condition is applied both in the source level (RC11)
and in the target level (x86-TSO), and mappings of source graphs to target ones
keep $\lMO$ and $\lFR$ intact, we maintain the soundness of the known mappings.\footnote{See
\url{http://www.cl.cam.ac.uk/~pes20/cpp/cpp0xmappings.html} [accessed July-2021].}
We note that for establishing the soundness of the mappings to other architectures, one first needs a formal fairness
condition of the architecture. While this may be more difficult in architectures
weaker than x86-TSO (see \cref{sec:discussion}), it is likely that no hardware
will allow that a write is placed after infinitely many other writes in the coherence order
(non-prefix-finite $\lMO$), or that infinitely many reads do not observe a later
write (non-prefix-finite $\lFR$).

Considering compiler transformations, one has to show that
every behavior of the target program explained by a consistent graph $G_{\text{tgt}}$
is also obtained by a consistent graph $G_{\text{src}}$ of the source program.
It is not hard to see that the constructions of \citet{c11comp} and \citet{scfix}
work as-is for the RC11 model strengthened with fairness.
First, the constructions of $G_{\text{src}}$ for \emph{reordering transformations},
which reorder two memory accesses under certain conditions,
keep the same $\lMO$ and $\lFR$ relations of $G_{\text{tgt}}$;
so their prefix-finiteness trivially follows.

Second, we consider \emph{elimination transformations}
that eliminate a redundant memory access.
In this case,
$G_{\text{src}}$ is obtained from $G_{\text{tgt}}$ by adding one additional event
$e_{\text{new}}$ that corresponds to the eliminated instruction.
For read elimination (read-after-read or read-after-write),
$e_{\text{new}}$ is a read event, and the construction of
ensures that $G_{\text{src}}.\lMO=G_{\text{tgt}}.\lMO$.
In turn, $G_{\text{tgt}}.\lFR \suq G_{\text{src}}.\lFR$, but since only one event
is added to $G_{\text{src}}$, prefix-finiteness of $\lFR$ is again trivially preserved.

Finally, we consider write-after-write elimination.
Let $w_0$ denote the immediate $G_{\text{src}}.\lPO$-successor of $e_{\text{new}}$.
Then, to construct $G_{\text{src}}.\lMO$, one places $e_{\text{new}}$ as the
immediate predecessor of $w_0$.
Then, consistency of $G_{\text{src}}$ follows the argument of \cite{scfix},
and it remains to show that fairness of $G_{\text{src}}$ follows from the fairness of $G_{\text{tgt}}$.
The latter is easy: write events in $G_{\text{src}}$ other than $e_{\text{new}}$
all have at most one more incoming $G_{\text{src}}.\lMO$ edge (from $e_{\text{new}}$),
and the same set of  incoming $G_{\text{src}}.\lFR$ edges.
In turn, For $e_{\text{new}}$ itself, we have:
$\set{e \in G_{\text{src}}.\lE \st \tup{e,e_{\text{new}}}\in G_{\text{src}}.\lMO \cup G_{\text{src}}.\lFR} \suq
\set{e \in G_{\text{tgt}}.\lE \st \tup{e,w_0}\in G_{\text{tgt}}.\lMO \cup G_{\text{tgt}}.\lFR}.$

In the next section, we demonstrate that adding fairness to RC11 as proposed above
provides the necessary underpinnings allowing one to formally reason about termination under RC11.

\subsection{From Finite to Infinite Robustness}
\label{sec:robustness}

Common advice given to programmers of multi-threaded software
is to follow a programming discipline that hides the effects of that weak memory model,
\eg to use exclusively sequentially consistent accesses.
Programs that follow such a discipline are \emph{robust}, 
meaning that they have only sequentially consistent behaviors on the underlying weak memory model.
While there is a rich literature on programming disciplines that imply robustness
and verification techniques for robustness~ \cite{tso-robustness2,tso-robustness,Derevenetc:2014,Bouajjani18,Margalit21,Lahav:pldi19,jonastso}, most work only considers \emph{finite} behaviors, i.e., they leave open whether programs following the discipline have only sequentially consistent \emph{infinite} behaviors.
This means that any correctness properties that only concern infinite behaviors, such as starvation-freedom, might be lost on the weak memory model despite its (finite) robustness.
In this section, we show that this cannot happen as long as the weak memory model satisfies our declarative memory fairness condition and its consistency predicate is $\lPO\cup\lRF$-prefix closed.
This is the case for all models studied in this paper.
Thus our unified definition of memory fairness lifts all existing robustness results for these models from the literature to infinite behaviors.

First, we observe that the consistency predicates based on acyclicity (\SC-consisterncy, in particular)
enjoy a ``compactness property''---if they hold for all finite prefixes of a graph, then they also hold for the full graph.
Below, by \emph{finite} execution graph, we mean a graph $G$ with $G.\lE\setminus \Init$ being finite
(the set $\Init$ of initialization events may be infinite if $\Loc$ is infinite).

\begin{definition}
An execution graph $G'$ is a \emph{$\lPO\cup\lRF$-prefix} of an execution graph $G$
if we have $\dom{(G.\lPO \cup G.\lRF) \seq [G'.\lE]} \suq G'.\lE$,
$G'.\lRF = [G'.\lE] \seq G.\lRF \seq [G'.\lE]$,
and $G'.\lMO = [G'.\lE] \seq G.\lMO \seq [G'.\lE]$.
\end{definition}

\begin{proposition}[$\G_\SC$ compactness]
\label{prop:SC_compact}
Let $G$ be an execution graph with prefix-finite $([G.\lE \setminus \Init] \seq G.\lPO \cup G.\lRF)^+$.
If every finite $\lPO\cup\lRF$-prefix of $G$ is $\G_\SC$-consistent, then so is $G$.
\end{proposition}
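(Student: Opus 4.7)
The plan is to prove this by contradiction. Suppose $G$ is not $\G_\SC$-consistent; then $G.\lHBSC$ has a cycle, witnessed by a sequence $e_0 \to e_1 \to \cdots \to e_n = e_0$ of single $G.\lPO \cup G.\lRF \cup G.\lMO \cup G.\lFR$ steps for some $n \geq 1$. I will construct a finite $\lPO\cup\lRF$-prefix $G'$ of $G$ containing all the $e_i$ and show that the cycle is preserved in $G'$, contradicting the assumed $\G_\SC$-consistency of $G'$.

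The first observation is that no $e_i$ can be an initialization event: by well-formedness, events in $\Init$ have no incoming $\lPO$, $\lRF$, or $\lFR$ edges, and they are $\lMO$-minimal for their location, so they admit no incoming edge in the union. Hence all $e_i \in G.\lE \setminus \Init$. Now let
\[
S \;\defeq\; \set{e_0 \til e_n} \;\cup\; \dom{([G.\lE \setminus \Init] \seq (G.\lPO \cup G.\lRF))^+ \seq [\set{e_0 \til e_n}]},
\]
which is a finite union of finitely many prefix-finite sets and therefore finite by the hypothesis. Define $G'.\lE \defeq S \cup \Init$, $G'.\lRF \defeq [G'.\lE] \seq G.\lRF \seq [G'.\lE]$, and $G'.\lMO \defeq [G'.\lE] \seq G.\lMO \seq [G'.\lE]$. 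I then check that $G'$ is a valid execution graph: well-formedness of $G'.\lE$ follows because $S$ is closed under $\lPO$-predecessors (same-thread lower serial numbers), every read $r \in G'.\lE$ has its $G.\lRF$-source $w$ either in $\Init \suq G'.\lE$ or as a $\lPO\cup\lRF$-predecessor and hence in $S$, and the restriction of the per-location total order $G.\lMO$ remains a per-location total order. For the prefix condition $\dom{(G.\lPO \cup G.\lRF) \seq [G'.\lE]} \suq G'.\lE$, note that $G.\lPO \cup G.\lRF$ has no edges into $\Init$, so any such edge has target in $S$, and its source is either in $\Init$ or in $S$ by construction.

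Since $G'.\lE \setminus \Init = S$ is finite, $G'$ is a finite $\lPO\cup\lRF$-prefix of $G$, and by the hypothesis $G'$ is $\G_\SC$-consistent, so $G'.\lHBSC$ is irreflexive. It remains to lift the cycle from $G$ to $G'$: edges of $G.\lPO$, $G.\lRF$, and $G.\lMO$ with both endpoints in $S \suq G'.\lE$ are retained by construction; for an $\lFR$ step $\tup{e_i, e_{i+1}} \in G.\lFR$, pick a witness $w'$ with $\tup{w', e_i} \in G.\lRF$ and $\tup{w', e_{i+1}} \in G.\lMO$. Since $e_i \in G'.\lE$ and $G'$ is a $\lPO\cup\lRF$-prefix, $w' \in G'.\lE$, so both $\tup{w', e_i} \in G'.\lRF$ and $\tup{w', e_{i+1}} \in G'.\lMO$, yielding $\tup{e_i, e_{i+1}} \in G'.\lFR$. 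Composing the lifted edges gives $\tup{e_0, e_0} \in G'.\lHBSC$, a contradiction. I expect the main delicacy to lie in the bookkeeping around initialization events: confirming that they cannot participate in a cycle, that $G.\lPO \cup G.\lRF$ admits no edges into $\Init$, and that including all of $\Init$ in $G'.\lE$ still leaves $G'$ finite in the relevant sense.
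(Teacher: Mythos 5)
Your proof is correct and follows essentially the same route as the paper's: restrict $G$ to the finite set consisting of the cycle together with its $(\lPO\cup\lRF)$-predecessors (finite by the prefix-finiteness hypothesis) plus $\Init$, and observe that every edge of the cycle—including derived $\lFR$ edges, whose $\lRF$-witness lies in the prefix by prefix-closure—survives the restriction, contradicting consistency of that finite prefix. Your extra bookkeeping (well-formedness, excluding $\Init$ from the cycle) is fine and not load-bearing; note only that the paper's definition of execution graphs does not actually stipulate that initialization events are $\lMO$-minimal, but nothing in your argument depends on that claim.
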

\begin{proof}
Suppose that $G$ is $\G_\SC$-inconsistent,
and let $a_1\til a_n \in G.\lE$ such that 
$\tup{a_i,a_{i+1}} \in G.\lPO \cup G.\lRF \cup G.\lMO \cup G.\lFR$ for every $1\leq i \leq n-1$,
and $\tup{a_n,a_1} \in G.\lPO \cup G.\lRF \cup G.\lMO \cup G.\lFR$.
Let $E' = \Init  \cup \dom{(G.\lPO \cup G.\lRF)^* \seq [\set{a_1 \til a_n}]}$,
and let $G'= \tup{E', [E'] \seq G.\lRF \seq [E'], [E'] \seq G.\lMO \seq [E']}$.
Since $([G.\lE \setminus \Init] \seq G.\lPO \cup G.\lRF)^+$ is prefix-finite,
$G'$ is a finite  $\lPO\cup\lRF$-prefix of $G$.
However, we have 
$\tup{a_i,a_{i+1}} \in G'.\lPO \cup G'.\lRF \cup G'.\lMO \cup G'.\lFR$ for every $1\leq i \leq n-1$,
and $\tup{a_n,a_1} \in G'.\lPO \cup G'.\lRF \cup G'.\lMO \cup G'.\lFR$,
so $G'$ is $\G_\SC$-inconsistent.
\end{proof}

\begin{definition}[Robustness]
Let $\prog$ be a program and $\G$ be a declarative memory system.
\begin{itemize}
\item $\prog$ is \emph{finitely execution-graph robust} against $\G$
if for every finite behavior $\beh\in \Beh{\prog}$ and $G\in \G$ with $\Event(\beh)=G.\lE$,
we have $G \in \G_\SC$.
\item $\prog$ is \emph{strongly execution-graph robust} against $\G$
if for every (finite or infinite) behavior $\beh\in \Beh{\prog}$ and $G\in \G$ with $\Event(\beh)=G.\lE$,
we have $G \in \G_\SC$.
\end{itemize}
\end{definition}

\begin{theorem}
\label{prop:robustness}
Let $\G$ be a declarative memory system such that:
\begin{itemize}
\item $\G$-consistency is $\lPO\cup\lRF$-prefix closed (\ie if $G\in \G$ then $G'\in \G$ for every $\lPO\cup\lRF$-prefix $G'$ of $G$).
\item $G \in \G$ implies that $([G.\lE \setminus \Init] \seq G.\lPO \cup G.\lRF)^+$ is prefix-finite.
\end{itemize}
Then, if a program $\prog$ is finitely execution-graph robust against $\G$,
then it is also strongly execution-graph robust against $\G$.
\end{theorem}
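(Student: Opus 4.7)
The plan is to combine the compactness property of \SC (\cref{prop:SC_compact}) with the $\lPO\cup\lRF$-prefix closure hypothesis: take an arbitrary behavior $\beh \in \Beh{\prog}$ and graph $G \in \G$ with $\Event(\beh) = G.\lE$, and reduce the claim $G \in \G_\SC$ to the consistency of every finite $\lPO\cup\lRF$-prefix of $G$, at which point finite robustness finishes the argument.

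First I would unpack the assumption: $G \in \G$ gives, by the second hypothesis, that $([G.\lE \setminus \Init] \seq G.\lPO \cup G.\lRF)^+$ is prefix-finite, which is precisely the assumption of \cref{prop:SC_compact}. So it suffices to show that every finite $\lPO\cup\lRF$-prefix $G'$ of $G$ lies in $\G_\SC$. Fix such a $G'$. From the first hypothesis ($\G$-consistency is $\lPO\cup\lRF$-prefix closed) we immediately get $G' \in \G$.

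The remaining task is to exhibit a finite behavior $\beh' \in \Beh{\prog}$ with $\Event(\beh') = G'.\lE$, so that finite robustness yields $G' \in \G_\SC$. For this I would let $\beh'(\tid)$ be the restriction of $\beh(\tid)$ to the serial numbers of events in $G'.\lE^\tid$. Because $G'$ is a $\lPO\cup\lRF$-prefix, its event set is $G.\lPO$-downward-closed, so for each thread $\tid$, $G'.\lE^\tid$ is a $\lSN$-prefix of $G.\lE^\tid$; hence $\beh'(\tid)$ is a finite prefix of $\beh(\tid)$. Since $\beh \in \Beh{\prog}$ the sequence $\beh(\tid)$ is an observable trace of the sequential LTS $\prog(\tid)$, and any prefix of a run of an LTS is again a run, so $\beh'(\tid) \in \OTraces{\prog(\tid)}$. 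Lifting this thread-wise fact through the interleaving semantics (and invoking \cref{prop:prog} to produce a suitable concurrent observable trace of $\prog$ inducing $\beh'$) gives $\beh' \in \Behfin{\prog}$, with $\Event(\beh') = G'.\lE$ by construction.

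Applying finite execution-graph robustness to $\beh'$ and $G'$ we obtain $G' \in \G_\SC$; as this holds for every finite $\lPO\cup\lRF$-prefix, \cref{prop:SC_compact} yields $G \in \G_\SC$. The main obstacle I foresee is the step constructing $\beh'$ and verifying it is in $\Beh{\prog}$: it requires reasoning about prefix closure of sequential-program traces and then reassembling these local prefixes into a concurrent observable trace, which is conceptually simple but relies on the program being a parallel composition of thread-local LTSs whose runs are closed under taking prefixes (together with \cref{prop:prog} to move between different interleavings that realize the same behavior).
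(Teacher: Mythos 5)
Your proposal is correct and follows essentially the same route as the paper's proof: use the prefix-closure hypothesis to place every finite $\lPO\cup\lRF$-prefix of $G$ in $\G$, apply finite robustness to conclude each such prefix is $\G_\SC$-consistent, and finish with \cref{prop:SC_compact}. The paper's version is terser—it leaves implicit the step you spell out, namely that each finite prefix's event set arises from a finite behavior of $\prog$ obtained by thread-wise truncation of $\beh$—but the argument is the same.
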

\begin{proof}
Suppose that $\prog$ is finitely execution-graph robust against $\G$.
Let $G\in \G$ such that $G.\lE = \Event(\beh)$  for some behavior $\beh\in\Beh{\prog}$.
From finite execution-graph robustness, it follows that every finite $\lPO\cup\lRF$-prefix of $G$ is $\G_\SC$-consistent.
By \cref{prop:SC_compact}, $G$ is $\G_\SC$-consistent as well.
\end{proof}

We note that the declarative $\TSO$, $\RA$, \SCOH, and \RC models satisfy the premises
of \cref{prop:robustness}.
The Coq mechanization includes the formal proof of the statement below. 

\begin{corollary}
\label{cor:robustness}
Suppose that a program $\prog$ is finitely execution-graph robust against $\G_X$ for $X\in\set{\TSO,\RA,\SCOH,\RC}$.
Then, the set of (thread\&) memory-fair behaviors of $\prog$ under $\M_X$ 
coincides with the set of (thread\&) memory-fair behaviors of $\prog$ under $\M_\SC$.
\end{corollary}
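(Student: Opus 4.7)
The plan is to reduce the corollary to the strong robustness result of Theorem \ref{prop:robustness} via Theorem \ref{thm:equiv_proof}. First I would replace the operational sides by their declarative counterparts: for $X \in \set{\SC,\TSO,\RA,\SCOH}$, Theorem \ref{thm:equiv_proof} gives $\Behmf{\M_X} = \Beh{\G_X^\fair}$, and since $\M_\SC$ has no silent transitions we moreover have $\Behmf{\M_\SC} = \Beh{\M_\SC} = \Beh{\G_\SC^\fair}$. For $X = \RC$, the proposal of Section \ref{sec:rc11} takes $\Beh{\G_\RC^\fair}$ as the very definition of memory-fair behaviors. Both equalities asserted by the corollary (the memory-fair case and the thread\&memory-fair case) thereby reduce to showing that
\[
  (\Beh{\prog} \text{ resp.\ } \Behtf{\prog}) \cap \Beh{\G_X^\fair}
  \;=\;
  (\Beh{\prog} \text{ resp.\ } \Behtf{\prog}) \cap \Beh{\G_\SC^\fair}.
\]

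For the $\supseteq$ inclusion I would observe that $\G_\SC \suq \G_X$ for each model in question and that the declarative fairness predicate $\G^\fair$ (prefix-finiteness of $\lMO$ and $\lFR$) does not mention the consistency axioms at all, so $\G_\SC^\fair \suq \G_X^\fair$. Any witness graph for $\beh \in \Beh{\G_\SC^\fair}$ is therefore already a witness for $\beh \in \Beh{\G_X^\fair}$, which gives the inclusion.

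For the $\subseteq$ inclusion, which carries the content of the statement, I would take $\beh$ in the left-hand side with a witness $G \in \G_X^\fair$. Finite execution-graph robustness of $\prog$ against $\G_X$ combined with Theorem \ref{prop:robustness} upgrades to strong execution-graph robustness, so $G$ itself lies in $\G_\SC$. Since fairness restricts only $\lMO$ and $\lFR$, and these components are preserved when viewing $G$ as an SC graph, we obtain $G \in \G_\SC^\fair$, and hence $\beh \in \Beh{\G_\SC^\fair}$.

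The main obstacle is verifying the premises of Theorem \ref{prop:robustness} for each of the four models, namely $\lPO \cup \lRF$-prefix closure of $\G_X$-consistency and prefix-finiteness of $([G.\lE \setminus \Init] \seq G.\lPO \cup G.\lRF)^+$ on the graphs we apply it to. The first property is a direct syntactic check on each model's axioms. For the fair graphs that actually appear as witnesses in the argument, the second property follows from Corollary \ref{cor:hb-pf} in the three operationally-characterized models (since $G.\lPO \cup G.\lRF \suq G.\lHB_X$), and by the analogous argument on RC11's happens-before for $X = \RC$. The paper asserts that all four models satisfy these premises, so in the write-up I would invoke that observation rather than reprove it from scratch.
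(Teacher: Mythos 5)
Your proof is correct and follows essentially the same route as the paper's: pass to the declarative side via \cref{thm:equiv_proof}, use finite robustness together with \cref{prop:robustness} to place the witness graph in $\G_\SC$, observe that the fairness condition (prefix-finiteness of $\lMO$ and $\lFR$) is model-independent so the graph lies in $\G_\SC^\fair$, and return to the operational side. The only slip is the parenthetical claim that $G.\lPO \cup G.\lRF \suq G.\lHB_X$ yields the prefix-finiteness premise --- this inclusion fails for $\TSO$, where $\lHBTSO$ omits write-to-read $\lPO$ edges and internal $\lRF$ --- but since you (like the paper) ultimately just invoke the observation that all four models satisfy the premises of \cref{prop:robustness}, this does not affect the argument.
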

\begin{proof}
One direction is obvious since $\M_\SC$ is stronger than $\M_X$.
For the converse, let $\beh$ be a  memory-fair behavior of $\prog$ under $\M_X$.
Then, by \cref{thm:equiv_proof}, we have that 
$\beh$ be a memory-fair behavior of $\prog$ under $\G_X^\fair$.
By definition, we have that $\beh\in\Beh{\prog}\cap\Beh{\G_X^\fair}$.
Let $G\in\G_X$ such that $\Event(\beh)=G.\lE$.
Then, since $G\in \G_X$, by \cref{prop:robustness}, we have that $G \in \G_\SC$.
Since the declarative fairness condition is the same in all four models, we have 
$G \in \G_\SC^\fair$.
Hence, we have $\beh\in\Beh{\prog}\cap\Beh{\G_\SC^\fair}$,
and so by \cref{thm:equiv_proof}, it follows that 
 $\beh$ be a memory-fair behavior of $\prog$ under $\M_\SC$.
Finally, to deal with thread fairness, one has to use $\Behtf{\prog}$ instead of $\Beh{\prog}$ in this argument.
\end{proof}

As a simple application example, the \ref{ex:spinlock-client} program in \cref{sec:spinlock} below is
(finitely) execution-graph robust because the program employs only a single location  (the location $l$ for the lock implementation).
Then, \cref{cor:robustness} entails that this program may diverge under the weak memory models studied in this paper
iff it diverges under $\SC$, and that the same also holds when assuming thread fairness.

\section{Proving Deadlock Freedom for Locks}
\label{sec:lockexamples}

In this section, we prove the termination and/or fairness of spinlock, ticket lock, and MCS lock clients.
The key to doing so is \cref{thm:spinloop termination basic} below,
which reduces proving termination of spinloops under fair weak memory models
to reasoning about a single specific iteration of the loop.

For simplicity, we henceforth assume that the sequential programs composing the concurrent programs are 
deterministic, as defined below. (The thread interleaving itself still makes the 
concurrent program semantics non-deterministic.)

\begin{definition}
A program $\prog$ is \emph{deterministic} if
$\progstate \astep{\tstep{\tid}{\lab_1}}_\prog \progstate_1$
and  $\progstate \astep{\tstep{\tid}{\lab_2}}_\prog \progstate_2$
imply that $\lTYP(\lab_1) = \lTYP(\lab_2)$ and $\lLOC(\lab_1)=\lLOC(\lab_2)$,
and, moreover, if $\lab_1 = \lab_2$, then $\progstate_1 = \progstate_2$ also holds.
\end{definition}

For a behavior $\beh$ of a deterministic program $\prog$
and $\tid\in\Tid$,
we denote by $\run_\tid(\beh)$ the unique run of $\prog(\tid)$
that induces the sequential trace $\beh(\tid)$.

\begin{definition}\label{def:spinloop iteration}
A \emph{spinloop iteration} of thread $\tid$ in a behavior $\beh$
is a range of event serial numbers $[n,n']$ such that
the sequence of corresponding program steps:
\begin{enumerate}
\item performs only reads:
$\lTYP(\lTLAB(\run_\tid(\beh)(i))) = \lR$ for $n \leq i \leq n'$; and
\item returns the program to the starting state of the loop:
$\lSRC(\run_\tid(\beh)(n)) = \lTGT(\run_\tid(\beh)(n'))$.
\end{enumerate}
An \emph{infinite spinloop} of thread $\tid$ in a behavior $\beh$ is an infinite sequence $s$ of consecutive spinloop iterations of thread $\tid$ (\ie $s(i)=[n_i,n_i'] \implies \exists n_{i+1}'.\, s(i+1)=[n_i',n_{i+1}']$).
\end{definition}

If infinite spinloops are the only source of unbounded behavior in programs
(\ie their individual iterations are of bounded length
and there are boundedly many writes to each memory location),
then because of fairness, an infinite spinloop has to eventually read from the
$\lMO$-maximal writes.

\begin{theorem} \label{thm:spinloop termination basic}
Let $\beh$ be a behavior of a deterministic program
and $G$ be a fair execution graph with $G.\lE = \Event(\beh)$
and $G.\lSC_{\lLOC}$ (see \cref{sec:declarative TSO memory system}) irreflexive.
For every infinite spinloop $s$ of a thread $\tid$ in $\beh$
whose iterations have bounded length
and read only from locations that are written to by finitely many writes in $G$,
there is a loop iteration $s(i)$ whose reads all read from $\lMO$-maximal writes.
\end{theorem}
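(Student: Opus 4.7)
The plan is to combine coherence (irreflexivity of $\lSC_{\lLOC}$) with prefix-finiteness of $\lFR$, after first establishing that the spinloop ranges over a \emph{finite} set of locations. I would first argue that the set $L$ of locations ever read by thread $\tid$ across the iterations of $s$ is finite. Since the program is deterministic and every iteration begins in the same state $q_0 = \lSRC(\run_\tid(\beh)(n_0))$, the possible executions of one iteration form a tree rooted at $q_0$: at each node the next read's location is uniquely determined by the thread state, while branching occurs over the values that read can return. Each such read is at a location with only finitely many writes in $G$, so each node has finite branching; combined with the bound $K$ on iteration length, the tree is finite, and so is $L$. For each $x \in L$, let $w_x^{\star}$ denote the (unique) $\lMO$-maximum write to $x$ in $G$.

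Next I would show that, for every $x$, the sequence of writes $\tid$ reads at location $x$, taken in $\lPO$ order, is $\lMO$-nondecreasing. Suppose two $\tid$-reads $r_1 <_\lPO r_2$ at $x$ read from writes $w_1$ and $w_2$ respectively, with $w_2 <_\lMO w_1$. Then $\tup{r_1,r_2}\in \lPO\rst{\lLOC}$, $\tup{r_2,w_1}\in\lFR$ (via $w_2$), and $\tup{w_1,r_1}\in\lRF$, closing a cycle in $\lSC_{\lLOC}$ and contradicting its irreflexivity.

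Now I would partition $L$ into $L_\infty$, the locations $\tid$ reads in infinitely many iterations, and $L_{\text{fin}} = L \setminus L_\infty$. For each $x \in L_\infty$, by monotonicity and finiteness of writes to $x$, the sequence of $\tid$'s reads at $x$ stabilizes at some $w \leq_\lMO w_x^{\star}$. Fairness excludes $w <_\lMO w_x^{\star}$: that would give an infinite tail of $\tid$-reads of $x$ all $\lFR$-related to $w_x^{\star}$, contradicting prefix-finiteness of $\lFR$. Hence all sufficiently late $\tid$-reads of $x$ read from $w_x^{\star}$; let $i_x$ be an iteration index witnessing this. Because $L_{\text{fin}}$ is finite and each of its elements is read only finitely often in $s$, there is also an iteration index $j$ beyond which no iteration of $s$ reads any $x \in L_{\text{fin}}$. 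Choosing $N$ larger than $j$ and than all $i_x$ for $x \in L_\infty$, the iteration $s(N)$ reads only locations in $L_\infty$ and each such read is from its $\lMO$-maximum, establishing the claim.

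The main obstacle is the first step: establishing finiteness of $L$. It requires combining determinism (which pins down the next read's location from the current thread state), boundedness of iteration length, and the per-location finiteness hypothesis (which caps branching by the number of writable values); none of these alone is sufficient. The remaining pieces---the $\lSC_{\lLOC}$ cycle argument for per-location monotonicity and the $\lFR$ prefix-finiteness argument for convergence to the $\lMO$-maximum---are relatively direct applications of the consistency and fairness hypotheses already in hand.
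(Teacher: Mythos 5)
Your proof is correct, but it is organized differently from the paper's. The paper argues by a minimal counterexample on the \emph{position within an iteration}: it takes the least $N$ such that no iteration reads $\lMO$-maximal writes in its first $N$ steps, uses determinism together with $\lSC_{\lLOC}$-irreflexivity to show that from some iteration onward the first $N-1$ steps are literally identical and step $N$ always reads the same location $x$, and then obtains infinitely many $\lFR$-predecessors of the $\lMO$-maximal write to $x$. You instead work \emph{per location}: you first prove that the set $L$ of locations read by the loop is finite (via the bounded-depth, finitely-branching tree of iteration prefixes), then prove $\lMO$-monotonicity of the writes read at each fixed location (the same $\lSC_{\lLOC}$-cycle the paper uses implicitly), let each infinitely-read location stabilize at its $\lMO$-maximum using prefix-finiteness of $\lFR$, and pick an iteration past all the stabilization points. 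The two arguments use the same three ingredients in the same roles; the trade-off is that the paper's step-indexed induction never needs finiteness of $L$ (it only ever speaks of the single location $x$ read at step $N$, which the loop actually reads and which is therefore covered by the finitely-many-writes hypothesis), whereas your route genuinely needs it in order to take a maximum over the indices $i_x$ and to dispose of the finitely-read locations. Your tree argument for that step is sound provided you restrict attention to the subtree of prefixes actually realized by some iteration of $s$: the finitely-many-writes hypothesis is only guaranteed for locations the loop actually reads, so it is the realized subtree, not the full tree of hypothetical executions, whose branching you can bound --- and since only realized nodes contribute to $L$, this restriction costs nothing.
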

\begin{proofof}{thm:spinloop termination basic}
Assume for the sake of contradiction that
$N>0$ is the smallest number for which no iteration $s(i)$ reads from $\lMO$-maximal stores in its first $N$ steps.
Consider a spinloop iteration $s(i)$ in which the first $N-1$ steps read $\lMO$-maximal stores,
and let the location that its $N^{\mathrm{th}}$ step reads be $x$.

Because of determinism and $G.\lSC_{\lLOC}$ irreflexivity,
in all subsequent iterations $s(j)$ with $j>i$,
the first $N-1$ transitions are identical,
while the $N^{\mathrm{th}}$ transitions read location $x$.

Since by assumption the spinloop only reads from locations with a finite set of writes,
$x$ must have an $\lMO$-maximal write $w$.
By assumption, step $N$ of the spinloop never reads from $w$,
so all of these infinitely many reads must read from stores that are $\lMO$-ordered before $w$.
Thus we have infinitely many reads that are $\lFR$-ordered before $w$.
But because $G$ is fair, $\lFR$ is prefix-finite, which is a contradiction.
\end{proofof}

This theorem provides a sufficient condition for establishing termination of
spinloops. In the supplementary material, we also establish the other direction:
whenever a deterministic program has a behavior where all non-terminated
threads end with a loop iteration reading from $\lMO$-maximal writes,
then it has an infinite memory-fair behavior.

\subsection{Spinlock}
\label{sec:spinlock}

Consider the following spinlock implementation:
  $$\inarrT{
    \assignInst{\kw{int} \; l}{0} \\
    \funcDecl{\kw{void}}{lock}{}{
      \kw{int}\; r \\
      \repeatml{
        \repeatInst{ \readInst{r}{l} }{(r = 0)}
      }{(\casInstn(l, 0, 1))}
    }
    \funcDeclsl{\kw{void}}{unlock}{}{
      \writeInst{l}{0}
    }
  }$$

\begin{theorem} \label{thm:spinlock-termination}
All thread-fair behaviors of the following client program 
under $\G^\fair_{\set{\SC, \TSO, \RA, \SCOH}}$ are finite:
\begin{equation*}
\tag{SpinLock-Client}\label{ex:spinlock-client}
\inarrIV{
  lock() \\
  unlock()
}{
  lock() \\
  unlock()
}{
  \dots
}{
  lock() \\
  unlock()
}
\end{equation*}
\end{theorem}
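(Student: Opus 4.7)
I assume for contradiction that some thread-fair behavior $\beh$ of \ref{ex:spinlock-client} is infinite and let $G$ be a corresponding fair $\G_X$-consistent execution graph with $G.\lE=\Event(\beh)$. First I bound the writes to $l$: each of the $N$ threads contributes at most one successful CAS (a write of value $1$) and one $\writeInst{l}{0}$ from $\textit{unlock}()$, so together with the initial write $G.\lW_l$ contains at most $2N+1$ events, and hence $G.\lMO_l$ has a maximum element $w_{\max}$.

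By pigeonhole on the finite set $\Tid$, some thread $\tau$ has infinitely many events, and since $\textit{unlock}()$ contributes at most one event per thread, $\tau$ must perform infinitely many steps inside $\textit{lock}()$. I next show that $\tau$ is trapped in the inner spinloop $\repeatInst{\readInst{r}{l}}{(r = 0)}$, ruling out the alternative that $\tau$ completes infinitely many full outer iterations. For this I use that $\lSC_{\lLOC}$-irreflexivity (which holds in all four models) forces the sources in $\lPO$ of $\tau$'s reads of $l$ to be $\lMO$-monotonically non-decreasing, and strictly so whenever the values differ. Each non-exiting outer iteration of $\tau$ ends its inner loop by reading $0$ from some $w_0^i$ and then has a failed CAS reading $1$ from some $w_1^i$ with $w_0^i \lMO w_1^i$; consecutive outer iterations additionally satisfy $w_1^i \lMO w_0^{i+1}$. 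Thus the $w_1^i$'s are pairwise distinct writes of value $1$, of which $G.\lW_l$ contains at most $N$. Hence $\tau$'s outer iterations are finite, so its remaining infinite events form a genuine infinite inner spinloop whose iterations have length $1$ and read only from $l$; the preconditions of \cref{thm:spinloop termination basic} are therefore met.

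Applying \cref{thm:spinloop termination basic} yields an iteration $s(i)$ in which $\tau$'s read is sourced by $w_{\max}$. If $\lVALW(w_{\max})=0$, the read returns $0$ and exits the inner loop, contradicting the infiniteness of $s$. Otherwise $\lVALW(w_{\max})=1$, so $w_{\max}$ is a successful CAS executed by some $\tau'\neq\tau$, and $\tau'$ cannot have executed its $\writeInst{l}{0}$ yet (by coherence, such a write would be $\lMO$-after $w_{\max}$). But after the successful CAS, the only remaining instruction of $\tau'$ is $\writeInst{l}{0}$, which is continuously enabled in the sequential program, so thread-fairness forces $\tau'$ to execute it; the resulting write to $l$ is $\lMO$-after $w_{\max}$ by coherence, contradicting $\lMO$-maximality of $w_{\max}$.

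The main difficulty is the reduction to a genuine inner spinloop: \cref{thm:spinloop termination basic} cannot be invoked directly on the outer loop because its iterations lack a uniform length bound, so the $\lSC_{\lLOC}$-based monotonicity together with the finiteness of $G.\lW_l$ is needed before the theorem can be brought to bear on the inner one. Once that is in place, the case analysis on the value of $w_{\max}$ and the short thread-fairness appeal for the lock holder's $\textit{unlock}()$ complete the argument uniformly for all four models.
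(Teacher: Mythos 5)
Your proof is correct and follows essentially the same route as the paper's: bound the writes to $l$, take the $\lMO$-maximal one, use thread-fairness to force its value to be $0$, and invoke \cref{thm:spinloop termination basic} to obtain an iteration reading it, yielding a contradiction. The only difference is that you spell out the step the paper dismisses with ``by inspection'' — namely the coherence-based counting argument showing the infinite spinloop must be the inner read-only loop (to which the theorem's bounded-iteration hypothesis applies) rather than the outer CAS loop — which is a worthwhile elaboration but not a different approach.
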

\begin{proof}
Assume for the sake of contradiction that the program has an infinite
thread-fair behavior $\beh$, which is induced by a fair execution graph $G$.
By inspection, since $G$ is infinite, $\beh$ must contain an infinite spinloop.
The number of write events to the location $l$ in $G$ is finite since each thread makes at most two writes to $l$.
Fix the $\lMO$-maximal one among them and denote it $w$.
Due to thread fairness of $\beh$, 
the value written by $w$ has to be $0$.
(Otherwise, it could have been only the value $1$ produced by the $\kw{CAS}$ instruction,
which is followed by a store writing $0$, and 
the write event produced by the store would have been $\lMO$-following for $w$
by $\set{\SC, \TSO, \RA, \SCOH}$-consistency of $G$.)
By \cref{thm:spinloop termination basic},
there is a spinloop iteration that reads from $w$,
which is a contradiction, since reading $0$ from location $l$ exits the loop.
\end{proof}

\subsection{Ticket Lock}
\label{sec:tickelock}

Consider the following ticket lock implementation:
$$\inarrT{
  \assignInst{\kw{int} \; serving := 0, \; ticket}{0} \\
  \funcDecl{\kw{void}}{lock}{}{
    \assignInst{\kw{int}\; s := 0, \; r}{\faddInstn(ticket, 1)} \\
    \repeatInst{\readInst{s}{serving}}{(s = r)}
  }
  \funcDeclsl{\kw{void}}{unlock}{}{
    \writeInst{serving}{serving+1}
  }
}$$

\begin{theorem}
  \label{thm:tickelock}
In every thread-fair behavior
of the following program under $\G^\fair_{\set{\SC, \TSO, \RA, \SCOH}}$,
$r_1 \til r_N$ all grow unboundedly:
\begin{equation*}
\label{ex:tickelock-client}
\inarrIV{
  L_1: lock() \\
  r_1 := r_1 + 1 \\
  unlock() \\
  \gotoInst{L_1}
}{
  L_2: lock() \\
  r_2 := r_2 + 1 \\
  unlock() \\
  \gotoInst{L_2}
}{
  \dots
}{
  L_N: lock() \\
  r_N := r_N + 1 \\
  unlock() \\
  \gotoInst{L_N}
}
\end{equation*}
\end{theorem}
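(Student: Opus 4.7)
The plan is to mimic the contradiction strategy used for the spinlock (\cref{thm:spinlock-termination}), with an additional ingredient that exploits the FIFO ticket structure. Fix $X \in \set{\SC,\TSO,\RA,\SCOH}$ and assume toward contradiction that a thread-fair behavior $\beh$ of the client under $\G^\fair_X$, witnessed by a fair execution graph $G$, has $r_\tau$ bounded for some $\tau$. Let $T$ be the non-empty set of such threads. By inspection of the code, each $\tau \in T$ completes only finitely many outer iterations and so is eventually trapped in the spinloop of its final $lock()$ call waiting for $serving$ to equal its ticket; call that value $k_\tau$. Pick $\tau^* \in T$ minimizing $k_\tau$ and let $k^* \defeq k_{\tau^*}$.

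First I would argue that tickets are assigned consecutively, so in particular tickets $0,1,\ldots,k^*$ all appear. The $\faddInstn$ operations on $ticket$ are RMWs, and by the atomicity of RMWs (as illustrated in \cref{ex:2RMW,ex:2RMW-dec}) no two of them may read from the same write; hence all FADDs on $ticket$ form a single $\lMO$-chain connected by $\lRF$ edges, and their read values are precisely $0,1,2,\ldots$. Next, any thread that at some point held a ticket $k<k^*$ cannot be permanently stuck at $k$ (that would contradict minimality of $k^*$), so it must have exited its spinloop for $k$ and executed the subsequent $unlock()$, thereby contributing a write to $serving$.

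The main obstacle is showing that the $\lMO$-maximal write to $serving$ has value exactly $k^*$. Because $unlock()$ compiles to a non-atomic read-then-write ($\readInst{t}{serving}$ followed by $\writeInst{serving}{t+1}$), this is not immediate under weak memory. My plan is to prove by induction on the $\lMO$-position of writes to $serving$ that the $i$-th such write carries value $i$: the base case is the initialization write (value $0$); in the step, the $i$-th unlock's read is $\lPO$-before its write, so by SC-per-location (which is part of $\G_X$-consistency for every $X$ considered) together with the uniqueness of the thread currently in the critical section (a consequence of the previous paragraph), the read observes the immediate $\lMO$-predecessor write and the written value is thus $i$. The same induction, combined with the fact that $\tau^*$ never unlocks, excludes any write of value $>k^*$: such a write would require some thread with ticket $\geq k^*$ to unlock, but $\tau^*$ is stuck and any other such thread would first need to read a value $\geq k^*$ from $serving$, which does not exist.

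Finally, the spinloop of $\tau^*$ reads only $serving$, which has finitely many writes, so \cref{thm:spinloop termination basic} yields an iteration whose read observes the $\lMO$-maximal write; that write has value $k^*$, equal to $\tau^*$'s ticket, so the exit condition $s=r$ holds and the loop terminates, contradicting that $\tau^*$ is trapped in it. Since $\tau$ was arbitrary, every $r_i$ must grow unboundedly.
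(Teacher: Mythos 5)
Your proposal is correct and follows essentially the same route as the paper's proof: argue by contradiction, pick the stuck thread with the minimal ticket value $k$, show the $\lMO$-maximal write to $\mathit{serving}$ carries exactly that value, and invoke \cref{thm:spinloop termination basic} to force a spinloop iteration that reads it and exits. The only difference is that you sketch proofs of the supporting safety properties (ticket uniqueness via RMW atomicity, the induction on the $\lMO$-positions of the $\mathit{serving}$ writes) that the paper explicitly elides as standard.
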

\begin{proof}
For any thread-fair behavior $\beh$ of this program
and a fair execution graph $G$ inducing $\beh$,
it can be shown that each call to $lock$ reads a unique value from $\mathit{ticket}$,
and that whenever a certain $lock$ call reads ticket value $v$ (and the spinloop exits),
the corresponding $unlock$ writes to $\mathit{serving}$ value $v+1$.
Moreover, the values written to $\mathit{ticket}$ and to $\mathit{serving}$ are strictly increasing along $G.\lMO$.
(These are standard safety properties, so we elide details of their proofs.)

By means of contradiction, now 
assume that there is a fair execution graph $G$ inducing $\beh$ 
where $r_i$ for some $1 \leq i \leq N$ is incremented only a finite number of times.

Due to thread-fairness of $\beh$, the only way this can happen is if thread $i$ has an infinite spinloop.
There may well be multiple threads with infinite spinloops,
so among those threads let us consider the thread $\tid$ that reads the smallest value for $\mathit{ticket}$, say $k$,
just before going into the infinite spinloop.
So, for all $0 \leq j < k$, some $lock$ has incremented $\mathit{ticket}$ to value $j$ and subsequently $\mathit{serving}$ to value $j+1$.
In particular, the $\lMO$-maximal among those sets $\mathit{serving}$ to value $k$.
Note that there cannot be any writes to $\mathit{serving}$ with larger values
because they all require $\mathit{serving}$ to first be set to $k+1$ (which does not happen since $\tid$ is stuck in a spinloop).

Because of thread-fairness and \cref{thm:spinloop termination basic},
the infinite spinloop must have an iteration that reads from the $\lMO$-maximal write to $\mathit{serving}$,
\ie reading value $k$.
This is a contradiction, because reading $k$ exits the loop.
\end{proof}

\subsection{MCS lock}
\label{sec:mcslock}

\newcommand{\codecomment}[1]{\textcolor{teal}{//\; \textit{#1}}}
\newcommand{\Kmark}{\textcolor{blue!80!black}{\textbf{A}}\xspace}
\newcommand{\Lmark}{\textcolor{red!80!black}{\textbf{B}}\xspace}

As a third example, we study the MCS lock \cite{mcs},
which is the basis of the \texttt{qspinlock} currently used in the Linux kernel
and the highly scalable NUMA-aware HMCS lock \cite{hmcs}.
For the latter, \citet{hmcs-verification} observe that ``the fences necessary
for the HMCS lock on systems with processors that use weak ordering'' presented
in the original HMCS paper~\cite[p.~218]{hmcs} result in non-terminating behaviors
under RC11, which do in fact occur in practice when running the HMCS lock on a
Kunpeng 920 Arm server.
Non-termination is due to a missing release fence (or store-release) in the MCS lock used in that algorithm.
For simplicity, we therefore limit our discussion to the MCS lock, whose code follows.
\begin{equation*}
\inarrTwo{
     \assignInst{\kw{QNode} \; \mln{Lock}}{\kw{null}} \\[5pt]
         \funcDeclml{\kw{void}}{lock}{\kw{QNode} \;n}{
           \writeInst{n.\mln{locked}}{1} \\
           \writeInst{n.\mln{next}}{\kw{null}}\\
           \codecomment{$\kw{fence}^\kw{rel}$ missing in HMCS paper}\\
           \writeInst{\kw{QNode}\; \mln{pred}}{\kw{SWAP}^\kw{acqrel}(\mln{Lock}, n)} \\
           \itneml{\mln{pred} \not= \kw{null}}{
		\writeInst{\mln{pred}.\mln{next}}{n} \\
		\while{ n.\mln{locked} = 1}{}	\\
		\kw{fence}^\kw{acq}
		}
         }
         }{
     \funcDeclml{\kw{void}}{unlock}{\kw{QNode}\;n}{
       \kw{fence}^\kw{rel} \\
       \writeInst{\kw{QNode}\; \mln{succ}}{n.\mln{next}} \\
       \kw{fence}^\kw{acq} \quad\codecomment{can be elided on ARM} \\
       \itneml{\mln{succ} = \kw{null}}
       { \iteml{\kw{CAS}^\kw{acqrel}(\mln{Lock}, n, \kw{null})}
       		{\kw{return}}
       		{\repeatml{\writeInst{\mln{succ}}{n.\mln{next}}}{\mln{succ} \not= \kw{null}}}
    	}
       \\
       \writeInst{\mln{succ}.\mln{locked}}{0}
     }
   }
\end{equation*}
The MCS lock uses a FIFO queue to ensure fairness.
Therefore, the \mln{lock} and \mln{unlock} functions take a \mln{QNode} argument to identify the calling thread.
A thread $T$ can enter the critical section (after calling the \mln{lock} function) either if the queue is empty
or after its predecessor in the queue lowers the \mln{locked} bit in $T$'s \mln{QNode}.
To release the lock, a thread $T$ lowers the \mln{locked} bit of the next thread in the queue,
or if no such thread exists, empties the queue.

Consider now the following client program, in which two threads enter the critical section once.
\begin{equation*}
\tag{MCS-Client}\label{ex:MCS-client}
\inarrII{
  \assignInst{a}{\kw{new QNode}()} \\
  \mln{lock}(\mln{Lock},a) \\
  \mln{unlock}(\mln{Lock},a)
}{
  \assignInst{b}{\kw{new QNode}()} \\
  \mln{lock}(\mln{Lock},b) \\
  \mln{unlock}(\mln{Lock},b)
}
\end{equation*}
Suppose we want to show that this program terminates
and, in particular, that the \kw{while} loops in $\mathit{lock}$ terminate if ever reached.
Due to symmetry, we only consider the loop for $n=a$.
By \cref{thm:spinloop termination basic}, it suffices to consider the iteration in which the loop reads from the $\lMO$-maximal store.
We can now construct all candidate $\lMO$s and attempt to show for each one that
either the $\lMO$-maximal store allows the loop to terminate or any graph with that $\lMO$ is not RC11-consistent.

It is easy to show that in every execution of this program in which that loop is reached,
there are exactly two non-initial stores to $a.\mln{locked}$,
generated by the calls $\mathit{lock}(\mln{Lock},a)$ and $\mathit{unlock}(\mln{Lock},b)$, respectively.
For brevity's sake, we call these stores \Kmark and \Lmark respectively.
Since \Lmark writes $a.\mln{locked} = 0$, reading from it allows the loop to terminate.
Consequently, the loop may only diverge in execution graphs in which \Kmark is $\lMO$-maximal.
Such a graph is shown below.
\begin{center}\small
\begin{tikzpicture}[yscale=1,xscale=1]
  \node[label=above:{\Kmark}] (11) at (0,0) {$\evlab{\lW}{}{a.\mln{locked}}{1}$ };
  \node[right=0.3 of 11] (12) {$\evlab{\lW}{}{a.\mln{next}}{\kw{null}}$ };
  \node[right=0.3 of 12] (13) {$\ulab{}{\mln{Lock}}{b}{a}$ };
  \node[right=0.3 of 13] (14) {$\evlab{\lW}{}{b.\mln{next}}{a}$ };
  \node[right=0.3 of 14] (15) {$\evlab{\lR}{}{a.\mln{locked}}{1}$ };
  \node[right=0.3 of 15] (16) {\ldots};

  \node[below=1.0 of 11] (21)  {$\evlab{\lW}{}{b.\mln{locked}}{1}$};
  \node[right=0.3 of 21] (22) {$\evlab{\lW}{}{b.\mln{next}}{\kw{null}}$ };
  \node[right=0.3 of 22] (23) {$\ulab{}{\mln{Lock}}{\kw{null}}{b}$ };
  \node[right=0.3 of 23] (25) {\phantom{ABC}};
  \node[anchor=south] at (25.south) {$\kw{F}^\kw{rel}$};
  \node[right=0.3 of 25] (26) {$\evlab{\lR}{}{b.\mln{next}}{a}$ };
  \node[right=0.3 of 26] (27) {\phantom{ABC}};
  \node[anchor=south] at (27.south) {$\kw{F}^\kw{acq}$};
  \node[right=0.3 of 27,label=above:{\Lmark}]  (28) {$\evlab{\lW}{}{a.\mln{locked}}{0}$ };

  \draw[po] (11) edge (12);
  \draw[po] (12) edge (13);
  \draw[po] (13) edge (14);
  \draw[po] (14) edge (15);
  \draw[po] (15) edge (16);

  \draw[po] (21) edge (22);
  \draw[po] (22) edge (23);
  \draw[po] (23) edge (25);
  \draw[po] (25) edge (26);
  \draw[po] (26) edge (27);
  \draw[po] (27) edge (28);

  \draw[mo] (22) edge (14);
  \draw[mo,transform canvas={xshift=0.5ex}] (23) edge node[pos=0.8,right] {\smaller$\lMO$} (13);
  \draw[rf,transform canvas={xshift=-0.5ex}] (23) edge node[pos=0.8,left] {\smaller$\lRF$} (13);
  \draw[mo,in=340,out=160] (28) edge node[below] {\smaller$\lMO$} (11);

  \draw[rf] (14) edge node[pos=0.4,below] {\smaller$\lRF$} (26);
  \draw[rf,bend left=8] (11) edge node[above] {\smaller$\lRF$} (15);

  \begin{pgfonlayer}{background}
    \draw[ultra thick,draw=yellow,in=340,out=160] (28) edge (11);
    \draw[ultra thick,draw=yellow] (14) edge (26);
    \draw[ultra thick,draw=yellow] (26) edge (27);
    \draw[ultra thick,draw=yellow] (27) edge (28);
    \draw[ultra thick,draw=yellow] (11) edge (12);
    \draw[ultra thick,draw=yellow] (12) edge (13);
    \draw[ultra thick,draw=yellow] (13) edge (14);
  \end{pgfonlayer}
\end{tikzpicture}
\end{center}
The graph is in fact RC11-consistent, and therefore the client program does not always terminate.
Once, however, we add back the commented-out $\kw{fence}^\kw{rel}$ in the \mln{lock} function, then 
the highlighted {\setlength{\fboxsep}{0pt}\colorbox{yellow!20!white}{$\lPO \seq \lRF \seq \lPO \seq \lMO$}
cycle in the execution graph above is forbidden.
Similarly, the release fence also rules out all other graphs in which \Kmark is the $\lMO$-maximal store,
and we can thus prove the following theorem.
(Our Coq proof generalizes this theorem to an arbitrary finite number of threads.)
\begin{theorem} \label{thm:hmcs-termination}
If the $\kw{fence}^\kw{rel}$ in the MCS lock is uncommented, \ref{ex:MCS-client}'s thread-fair behaviors
under $G^\fair_{\set{\SC, \TSO, \RA, \RC}}$ are all finite.
\end{theorem}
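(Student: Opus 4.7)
The plan is to argue by contradiction: suppose a thread-fair behavior $\beh$ of \ref{ex:MCS-client} under $\G^\fair_X$ (for some $X\in\set{\SC,\TSO,\RA,\RC}$) is infinite, and let $G\in\G_X^\fair$ be a fair execution graph with $G.\lE=\Event(\beh)$. Since the non-loop code of $\mathit{lock}$ and $\mathit{unlock}$ is straight-line and of bounded length, any infinite behavior must contain an infinite spinloop in at least one of the two threads. There are only two kinds of spinloops in the client: the $\kw{while}\,(n.\mln{locked}=1)$ loop in $\mathit{lock}$ and the $\kw{repeat}$ loop reading $n.\mln{next}$ in $\mathit{unlock}$. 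By symmetry it suffices to treat thread~1; the case for thread~2 is analogous.

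Both candidate spinloops satisfy the hypotheses of \cref{thm:spinloop termination basic}: iterations are of bounded length, the models imply $G.\lSC_{\lLOC}$ irreflexivity, and each spun-on location ($a.\mln{locked}$ and $a.\mln{next}$) receives only finitely many writes in $G$ (at most two non-initial writes per location, one from each thread). Hence some iteration of the diverging spinloop must read from the $G.\lMO$-maximal write to the spun-on location. I then show this forces either loop exit (contradicting divergence) or $\G_X$-inconsistency.

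For the $\kw{while}$ loop the only non-initial writes to $a.\mln{locked}$ are $\Kmark=\wlab{}{a.\mln{locked}}{1}$ (performed by thread~1 itself at the top of $\mathit{lock}(\mln{Lock},a)$) and $\Lmark=\wlab{}{a.\mln{locked}}{0}$ (performed by thread~2's $\mathit{unlock}$, in the branch where $\mln{succ}$ is observed to be $a$). If \Lmark is $\lMO$-maximal, the iteration obtained from \cref{thm:spinloop termination basic} reads $0$ and exits, contradicting infinitude. So \Kmark must be $\lMO$-maximal, i.e.\ $\tup{\Lmark,\Kmark}\in G.\lMO$. To derive a contradiction I trace how thread~2 can reach \Lmark: it must have read $b.\mln{next}=a$, which can only be sourced from the write $w_n=\wlab{}{b.\mln{next}}{a}$ that thread~1 performs inside $\mathit{lock}(\mln{Lock},a)$. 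With the uncommented $\kw{fence}^\kw{rel}$ in place, thread~1's program order from \Kmark through the release fence and the $\kw{SWAP}^\kw{acqrel}$ to $w_n$, combined with $\tup{w_n, r_n}\in G.\lRF$ (where $r_n$ is thread~2's read of $b.\mln{next}$), and then thread~2's program order through its $\kw{fence}^\kw{acq}$ to \Lmark, yields an $\lHB_\RC$-edge from \Kmark to \Lmark. Together with $\tup{\Lmark,\Kmark}\in G.\lMO$ this is precisely the highlighted $\lPO\seq\lRF\seq\lPO\seq\lMO$ cycle of the execution graph displayed above the theorem, which violates $\lHB_\RC$-coherence (and, a fortiori, the stronger coherence axioms of $\SC$, $\TSO$, and $\RA$). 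The repeat-loop case for $\mathit{unlock}$ is analogous: the only non-initial write to $n.\mln{next}$ is the sibling thread's $\wlab{}{n.\mln{next}}{\_}$, reading from which exits the loop.

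The principal obstacle is making rigorous the claim that \emph{every} $\G_X$-consistent extension in which \Kmark is $\lMO$-maximal contains the cycle above. Different interleavings allow thread~2 to reach \Lmark via distinct control paths in $\mathit{unlock}$ (the $\kw{CAS}$ branch versus the re-read $\kw{repeat}$ branch), and for $X=\RC$ one must interpret the fence modifiers carefully; this is where the uncommented $\kw{fence}^\kw{rel}$ is essential, since without it the program order from \Kmark to $w_n$ is not lifted into $\lHB_\RC$ and the cycle need not close. The conceptual content is a single release/acquire synchronization argument, but it must be instantiated for each control-flow path in $\mathit{unlock}$ and for each of the four memory models — the burden the Coq development absorbs.
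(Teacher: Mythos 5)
Your proposal is correct and follows essentially the same route as the paper: reduce via \cref{thm:spinloop termination basic} to the iteration reading from $\lMO$-maximal writes, observe that reading \Lmark exits the loop, and show that making \Kmark $\lMO$-maximal forces the $\lPO\seq\lRF\seq\lPO\seq\lMO$ cycle that the restored $\kw{fence}^\kw{rel}$ renders inconsistent in all four models. The paper's text leaves the case analysis over control-flow paths and the $\mathit{unlock}$ spinloop implicit (deferring to the Coq development), so your explicit treatment of those points only fills in what the paper glosses over.
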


\section{Related Work and Discussion}
\label{sec:discussion}

We have investigated fairness in $(\lPO\cup\lRF)$-acyclic weak memory models,
both operationally and declaratively, established four equivalence results,
and showed how the declarative formulations can be used for reasoning about program termination.

Several papers, \eg \cite{cerone,Bouajjani14,gotsman_et_al:Sequencing}, have studied
declarative formulations of transactional consistency with prefix-finiteness
constraints to ensure that a transaction is never preceded by an infinite set
of other transactions.
In particular,
\citet{gotsman_et_al:Sequencing} established a connection
between declarative presentations that include fairness constraints
and operational presentations for models in their ``Global Operation
Sequencing'' framework. The TSO model can be expressed in this framework.
Their declarative specifications require prefix-finiteness of the global visibility order,
while we derive this property from prefix finiteness of more local relations ($\lMO$ and $\lFR$).
Thus, our formulation is easily applicable for model checking based on partial order reduction
in the style of \citet{rcmc,genmc}.
To the best of our knowledge, this is the first work to make a connection
between liveness in declarative models formulated in the widely used framework of \citet{herding-cats}
and in operational models.

Termination of the MCS lock was previously studied by \cite{vsync};
however, due to the lack of a formal definition of fairness, \citet{vsync}
assumed a highly technical consequence of fairness in their proofs.
Our unified definition of fairness and \cref{thm:spinloop termination basic}
bridge the gap left in their arguments and allow us to obtain the first
complete formal termination proof for the MCS lock.

We note that our approach for establishing termination of spinloops is not only
useful for manually proving deadlock-freedom and related progress properties as
shown in \cref{sec:lockexamples}, but can also be used to automatically establish
termination of programs whose only potentially unbounded behavior is due to
spinloops.  One can use \cref{thm:spinloop termination basic} to reason about
the termination of such programs by examining only a finite number of finite
execution graphs.  This approach has actually been implemented in the
\textsc{GenMC} model checker \cite{genmc-tool}, and thus termination of
the example programs in the paper (for a bounded number of threads) can
also be shown automatically.

We outline two directions for future work,
which concern extending our results to more complex models.

\paragraph{Fairness under non-$(\lPO\cup\lRF)$-acyclic models}

Some low-level hardware memory models, such as Arm \cite{arm8-model} and
POWER~\cite{herding-cats}, and hardware-inspired memory models,
such as LKMM~\cite{lkmm} and IMM~\cite{imm},
record syntactic dependencies between instructions
so as to allow certain executions with cycles in $\lPO \cup \lRF$.
In these models, prefix-finiteness of $\lMO$ and $\lFR$ alone does not suffice
for prefix-finiteness of the appropriate ``happens-before'' relation.
For instance, under Arm (version 8)~\cite{arm8-model},
assuming prefix-finiteness of $\lMO$ and $\lFR$ does not forbid
the out-of-thin-air read of the value $5$ in the following example
(with an unbounded address domain):
\begin{gather*}
\inarrII{
L_1\colon           y_i := x_i \comment{5}  \\
\phantom{L_1\colon} i := i + 1 \\
\phantom{L_1\colon} \textbf{goto} \ L_1
}
{
L_2\colon           x_j := y_{j+1} \comment{5} \\
\phantom{L_2\colon} j := j + 1  \\
\phantom{L_2\colon} \textbf{goto} \ L_2
}
\quad
\inarr{\begin{tikzpicture}[yscale=0.8,xscale=1.4]
  \node[inner sep=2pt] (R11)  at (0, 1) {$\evlab{\lR}{}{x_0}{5}$};
  \node[inner sep=2pt] (W11)  at (1, 1) {$\evlab{\lW}{}{y_0}{5}$};
  \node[inner sep=2pt] (R12)  at (2, 1) {$\evlab{\lR}{}{x_1}{5}$};
  \node[inner sep=2pt] (W12)  at (3, 1) {$\evlab{\lW}{}{y_1}{5}$};
  \node[inner sep=2pt] (R13)  at (4, 1) {$\evlab{\lR}{}{x_2}{5}$};
  \node[inner sep=2pt] (W13)  at (5, 1) {$\evlab{\lW}{}{y_2}{5}$};
  \node[inner sep=2pt] (R21)  at (0, -0.2) {$\evlab{\lR}{}{y_1}{5}$};
  \node[inner sep=2pt] (W21)  at (1, -0.2) {$\evlab{\lW}{}{x_0}{5}$};
  \node[inner sep=2pt] (R22)  at (2, -0.2) {$\evlab{\lR}{}{y_2}{5}$};
  \node[inner sep=2pt] (W22)  at (3, -0.2) {$\evlab{\lW}{}{x_1}{5}$};
  \node[inner sep=2pt] (R23)  at (4, -0.2) {$\evlab{\lR}{}{y_3}{5}$};
  \node[inner sep=2pt] (W23)  at (5, -0.2) {$\evlab{\lW}{}{x_2}{5}$};
  \node[inner sep=2pt] (inf) at ($(W13)!0.5!(W23)+(0.5, 0)$) {\huge\ldots};
  \dpo{R11}{W11}; \dpo{W11}{R12}; \dpo{R12}{W12}; \dpo{W12}{R13}; \dpo{R13}{W13};
  \dpo{R21}{W21}; \dpo{W21}{R22}; \dpo{R22}{W22}; \dpo{W22}{R23}; \dpo{R23}{W23};
  \draw[rf] (W12) to [out=215,in=35] (R21);
  \draw[rf] (W13) to [out=215,in=35] (R22);
  \draw[rf] (W21) -- (R11);
  \draw[rf] (W22) -- (R12);
\end{tikzpicture}}
\end{gather*}
We conjecture that the appropriate liveness condition for Arm is to require
prefix-finiteness of the ``ordered-before'' ($\mathtt{ob}$) relation.
We leave adapting the operational Arm model to ensure fairness and establishing
correspondence between the two models for future work.

Similarly, there are a number of more advanced memory models for programming
languages that aim to admit write-after-read reorderings (and thus have to
allow $(\lPO \cup \lRF)$ cycles) such as JMM~\cite{Manson-al:POPL05},
Promising~\cite{Kang-al:POPL17}, Pomsets with Preconditions~\cite{pomsets-pre},
and Weakestmo~\cite{weakestmo}.  Integrating liveness requirements in such
memory models is left for future work.

\paragraph{Weak RMWs}
Besides ordinary (``strong'') CAS instructions,
C11 supports ``weak'' CASes,\footnote{See \url{https://en.cppreference.com/w/cpp/atomic/atomic/compare_exchange} [accessed November-2020].}
which may fail spuriously,
\ie even when they read the expected value,
since on some architectures---namely, POWER and Arm---weak CASes are
more efficient than strong ones.
A strong CAS can be implemented by repeatedly performing a weak CAS
in a loop as long as it fails spuriously.
Termination of such loops depends upon the weak CASes not always failing spuriously, which constitutes an additional fairness requirement.
Since this requirement is orthogonal to the notion of memory fairness introduced in this paper,
we leave it for future work.

\begin{acks}                            %
We thank the anonymous reviewers for their helpful feedback.
This research was supported in part by the European Research Council (ERC)
under the European Union's Horizon 2020 research and innovation programme
(grant agreement no. 851811 and 101003349).
Lahav was also supported by the Israel Science Foundation (grant number 1566/18)
and by the Alon Young Faculty Fellowship.
\end{acks}

\bibliography{ms}

\newpage
\appendix
\onecolumn

\printproofs

\section{Equivalence proofs of operational and declarative representations}
\label{sec:equiv_proof}
In this section, we present the proof of \cref{thm:equiv_proof}.
We split it to eight lemmas: forward and backward directions for
$\SC$ (\cref{thm:sc-equiv-forward,thm:sc-equiv-backward}),
$\RA$ (\cref{thm:ra-equiv-forward,thm:ra-equiv-backward}),
$\SCOH$ (\cref{thm:scoh-equiv-forward,thm:scoh-equiv-backward}),
and $\TSO$ (\cref{thm:tso-equiv-forward,thm:tso-equiv-backward}) models.

\begin{lemma}\label{thm:sc-equiv-forward}
$\Behmf{\M_\SC}\suq \Beh{\G_\SC^\fair}$
\end{lemma}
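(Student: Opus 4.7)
The plan is as follows. Since $\M_\SC.\lTheta = \emptyset$, every behavior of $\M_\SC$ is vacuously memory-fair, so it suffices to show $\Beh{\M_\SC} \suq \Beh{\G_\SC^\fair}$. Given $\beh \in \Beh{\M_\SC}$, I would apply \cref{prop:beh_mf_op} to pick an observable trace $\trace$ of $\M_\SC$ with $\beh(\trace) = \beh$ (together with an underlying run $\run$). The trace $\trace$ is indexed by an initial segment of $\N$, which is naturally well-ordered by $<_\trace \defeq {<}$ and, crucially, prefix-finite.

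I would then construct a candidate execution graph $G$ with $G.\lE \defeq \Event(\beh)$, defining $\lMO_x$ to place the initialization write of $x$ first and then to order the remaining same-location writes by $<_\trace$, and defining $\lRF$ by pairing each read/RMW event with the unique same-location write whose value sits in the SC memory at the moment the corresponding read step fires (i.e.\ the most recent same-location write in $\trace$ strictly before that read step, or the initialization write if none exists). The SC transition rules guarantee that such a write exists and that its written value matches the read value, so $G$ is a well-formed execution graph and each $\lMO_x$ is a strict total order on $G.\lW_x$.

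For $\G_\SC$-consistency I would show that $G.\lHBSC$ embeds into $<_\trace$ extended with ``$\Init$ before everything else''. Namely, $\lPO$ agrees with trace order because $\M_\SC$ is sequential per thread and $\prog$ has no silent steps, so each thread's events appear in $\trace$ in serial-number order; $\lRF$ and $\lMO$ are contained in trace order directly by construction; and $\lFR$ is too, because if $r$ reads from $w'$ and $\tup{w',w}\in\lMO$ with $w \neq w'$, then $w$ must come after $r$ in $\trace$—otherwise at the step at which $r$ fires the SC memory would already hold $w$'s value, not $w'$'s, contradicting the read transition rule. Taking the transitive closure preserves containment in a strict order, so $G.\lHBSC$ is irreflexive. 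Fairness of $G$ is then immediate: $G.\lMO$ and $G.\lFR$ are subsets of the prefix-finite trace order restricted to $G.\lE$, and hence prefix-finite themselves.

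The one piece of bookkeeping I expect to require care is the uniform treatment of initialization writes, which are not labelled by any index of $\trace$ yet must appear $\lMO$-minimally and serve as valid $\lRF$-sources for any reads preceding the first same-location non-initial write; treating them as an implicit ``step $-1$'' before the whole trace makes this transparent. Otherwise the argument is a straightforward lifting of the finite case (\cref{thm:sc-equiv_basic}) to possibly infinite $\trace$, with prefix-finiteness of the indexing $\N$ being the only extra ingredient that makes the lift go through.
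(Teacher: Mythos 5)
Your proposal is correct and follows essentially the same route as the paper's proof: build $G$ from the trace with $\lMO$ and $\lRF$ determined by the (prefix-finite) trace order, observe that $\lPO$, $\lRF$, $\lMO$, and $\lFR$ all embed into that order (so $\lHBSC$ is irreflexive), and conclude fairness from prefix-finiteness of the trace order. The only cosmetic difference is that you explicitly invoke the vacuous memory-fairness of $\M_\SC$ up front, which the paper leaves implicit in this lemma.
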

\begin{proof}
Let $\beh=\beh(\trace)$ where $\trace$ is a trace induced by a run $\run$ of $\M_\SC$.

We construct an \SC-consistent fair execution graph $G$ inducing $\beh$ such that $G.\lE$ represents labels of $\trace$. %
We start by defining a partial
function $\toevents{\cdot} : \mathbb{N} \rightharpoonup \Tid \times \N \times \Lab$
which converts elements
of trace $\trace$ to execution graph events:
$$
\toevents{k} \defeq
\begin{cases}
  \ev{n}{\tid}{\lab} &
  \text{where } k \in \dom{\trace},\; \trace(k) = \tstep{\tid}{\lab}, \; \lab \in \Lab, \\
     &\quad \text{and } n \defeq | \set{i \le k \st \exists \lab' \in \Lab. \; \trace(i) = \tstep{\tid}{\lab'}} | \\
  \bot &  \text{otherwise} %
\end{cases}
$$
We define a set $E$, which consists of events constructed from trace $\trace$ via function $\toevents{\cdot}$
supplemented with initial events, and a strict partial order $\evorder$ over it:
$$
\begin{array}{c@{\;}l}
E & \defeq \Init \cup \set{\toevents{k} \st k \in \dom{\trace}} \\
\evorder & \defeq \Init \times (E \setminus \Init) \cup \set{\tup{\toevents{i}, \toevents{j}} \in E \times E \st i < j}
\end{array}
$$

We define $G$ to be a tuple $\tup{E,\lRF,\lMO}$ where:
\begin{itemize}
\item $\lRF \defeq ([\lW]\seq {\evorder}|_{\lLOC} \seq[\lR]) \setminus ({\evorder}|_{\lLOC} \seq [\lW] \seq {\evorder}|_{\lLOC})$
 relates a read and the previous $\evorder$-latest write to the same location;

\item $\lMO\defeq [\lW]\seq {\evorder}|_{\lLOC} \seq[\lW]$ relates $\evorder$-ordered writes to the same location: $\lMO $. 
\end{itemize}
Finally, we prove the following. 
\begin{itemize}
\item $G$ is an execution graph.
  \begin{itemize}
  \item Requirements on the event set hold by construction. 
  \item During a read transition the value currently stored in memory is observed and this value is written by the last write to the same location. Also, $\lRF^{-1}$ is functional and is defined for all reads because for each read there is the unique previous write to the same location. 
  \item For each $l \in \Loc$ the relation $\lMO_l$ is a strict total order because ${\evorder}|_{\lLOC}$ is a strict total order. 
  \end{itemize}

\item $\lFR \suq {\evorder}$.
Suppose that there are $r\in \lR$, $w\in \lW$ such that $\lFR(r, w) \land w \evorder r$.
By the definition of $\lFR$, there is a $w'\in \lW$ such that $\lRF(w', r) \land \lMO(w', w)$.
By the definition of $G.\lMO$ it follows that $w' \evorder w$.
But then it follows that $r$ reads from non-latest write which contradicts the definition of $G.\lRF$.

\item $G$ is fair holds because both $\lMO$ and $\lFR$ are subsets of $\evorder$, which is prefix-finite.

\item $G \in \G_\SC$ holds because $G.\lHBSC$ is a subset of $\evorder$, which is a strict partial order.

\item $\beta(G) = \beta$. By the definition of $\toevents{}$ the sequence of labels of events belonging to a thread $\tid$ is exactly a restriction of $\trace$ to $\tid$. 

\end{itemize}
\end{proof}

\begin{lemma}\label{thm:sc-equiv-backward}
$\Behmf{\M_\SC}\supseteq\Beh{\G_\SC^\fair}$
\end{lemma}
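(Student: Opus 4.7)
Given $\beh \in \Beh{\G_\SC^\fair}$, fix a fair $\G_\SC$-consistent graph $G$ with $G.\lE = \Event(\beh)$. The goal is to exhibit a memory-fair observable trace $\trace$ of $\M_\SC$ with $\beh(\trace) = \beh$. Since $\M_\SC.\lTheta = \emptyset$, memory fairness is automatic, so it suffices to produce \emph{any} trace of $\M_\SC$ that witnesses $\beh$. The approach follows the outline for the ``SC'' case in~\cref{sec:fair-dec-sem}: linearize $G$ into an enumeration respecting $\lHBSC$, and then replay it step by step against the SC memory.

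The plan is as follows. First, I invoke \cref{cor:hb-pf} to conclude that $[G.\lE \setminus \Init] \seq G.\lHBSC$ is prefix-finite. Since $G$ is $\G_\SC$-consistent, $G.\lHBSC$ is a strict partial order on the countable set $G.\lE \setminus \Init$, so \cref{prop:respect-enum} yields an enumeration $\nu$ of $G.\lE \setminus \Init$ that respects $G.\lHBSC$. I then define $\trace(k) \defeq \tstep{\lTID(\nu(k))}{\lLAB(\nu(k))}$ for $k \in \dom{\nu}$, and construct a candidate run of $\M_\SC$ inductively: start from $\m_0 \defeq \lambda \loc.\; 0$, and at step $k$ apply the write effect of $\lLAB(\nu(k))$ to $\m_k$ (leaving memory unchanged when $\nu(k)$ is a pure read).

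The core verification is that each step is a legal $\M_\SC$ transition. The only non-trivial obligation is that, whenever $\nu(k)$ is a read (or the read-part of an RMW) of location $\loc$ with read value $v$, the current memory satisfies $\m_k(\loc) = v$. Letting $w_0$ be the unique $G.\lRF$-source of $\nu(k)$, I claim $w_0$ is the $\nu$-most recent write to $\loc$ appearing before position $k$. Suppose for contradiction some other write $w \in G.\lW_\loc$ has $\nu$-index strictly between those of $w_0$ and $\nu(k)$. By totality of $G.\lMO_\loc$, either $w_0 \to_{\lMO} w$, which yields $\nu(k) \to_{\lFR} w$ and contradicts that $\nu$ respects $\lFR \subseteq \lHBSC$ (since $w$ precedes $\nu(k)$ in $\nu$); or $w \to_{\lMO} w_0$, which directly contradicts that $\nu$ respects $\lMO \subseteq \lHBSC$ (since $w$ follows $w_0$ in $\nu$). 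This two-case argument is the main obstacle, but it is routine given the ingredients in hand. The RMW case then follows by composing this verified read step with the immediate write step, which is always valid in $\M_\SC$.

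Finally, because $G.\lPO \subseteq G.\lHBSC$ and $\nu$ respects $\lHBSC$, the projection $\trace \rst{\tid}$ enumerates thread $\tid$'s events in program order; hence $\beh(\trace)(\tid)(n) = \lLAB(\ev{n}{\tid}{\cdot}) = \beh(\tid)(n)$ for every $\tid$ and $n$, giving $\beh(\trace) = \beh$. As $\trace$ is vacuously memory-fair, this establishes $\beh \in \Behmf{\M_\SC}$, completing the proof.
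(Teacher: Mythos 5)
Your proposal is correct and follows essentially the same route as the paper's proof: enumerate $G.\lE\setminus\Init$ respecting $G.\lHBSC$ via \cref{cor:hb-pf} and \cref{prop:respect-enum}, replay the enumeration against $\M_\SC$, and use the facts that $\lMO$ and $\lFR$ are contained in $\lHBSC$ to show each read sees the value of its $\lRF$-source (the paper argues the $\lMO$-latest executed write must be the $\lRF$-source, you argue the $\lRF$-source must be the most recently executed write --- the same two-case totality argument). Memory fairness is vacuous for $\M_\SC$ in both treatments, so the proofs coincide in substance.
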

\begin{proof}
Let $\beh\in \Beh{\G_\SC^\fair}$.
Let $G$ be a fair SC-consistent execution graph with $G.\lE$ that induces the behavior $\beh$.
We'll show that there exists a memory-fair trace $\trace$ of $\M_\SC$ that induces behavior $\beh$.

Since $G.\lE$ represents $\trace$, $G.\lE \setminus \Init = \bigcup_{\tid \in Tid, k \in \dom{\trace_\tid}} \set{(k, \tid : \trace_\tid(k))}$
where $\trace_\tid$ is a restriction of $\trace$ to thread $\tid$.
That is, for each $\tid \in Tid$ there exists $\run_\tid$.

By \cref{cor:hb-pf,prop:respect-enum}, there exists an enumeration $\set{e_i}_{i\in\N}$ of $G.\lE\setminus \Init$ that respects $G.\lHBSC$.

Now, we define the run of the memory system.
We build a function $\run \defeq \N \to \M.\lQ \times (\Tid \times \Lab) \times \M.\lQ$ recursively. Consider an arbitrary $i \in \N$.
\begin{itemize}
\item Let $(k, \tid : \trace_\tid(k)) = e_i$.
\item Then let $(src, lbl, tgt) = \run_\tid(k)$. Note that $lbl = \trace_\tid(k)$ by construction. 
\item If $i = 0$ let $M_\mathit{prev} = \M.\linit$, else let $M_\mathit{prev} = \lTGT(\run (i - 1))$. 
\item If $\lTYP(lbl) = \lR\ \lor\ \lTYP(lbl) = \lF$, let $M_\mathit{next}= M_\mathit{prev}$. \\
  Otherwise let $M_\mathit{next}= M_\mathit{prev}[\lLOC(lbl) \to \lVALW(lbl)]$. 
\item Finally, we define $\run (i) = (M_\mathit{prev}, lbl, M_{next})$. 
\end{itemize}

Suppose that on each step the \SC memory subsystem allows the transition between the $M_\mathit{prev}$ and $M_\mathit{next}$.
Then $\run$ is a sequence of transitions.
Also, $\lSRC(\run(0))= \M_\SC.\linit$ and adjacent states agree.
That is, $\run$ is a run of $\M_\SC$.

Let $\trace = \lTLAB \circ \run$ be a trace of $\M_\SC$.
For each $\tid \in \Tid$ the restriction $\trace\rst{\tid}$ is equal to $\trace_\tid$ by construction, so $\beh(\trace) = \beh(G) = \beh$.
Also, since $\SC$ doesn't have silent memory transitions, $\trace$ is memory-fair.

It remains to show that on each step the \SC memory subsystem allows the transition between the $M_\mathit{prev}$ and $M_\mathit{next}$.
To do that we need to prove that during transitions with $\lR$ and $\lU$ labels a thread observes the value currently stored in memory at some address $\loc$.
Note that by the moment of that transition's execution the value stored in memory is written by the current $\lMO\rst{\loc}$-latest write $w$ because the events enumeration order respects $\lHBSC \supseteq \lMO$.
So it would be sufficient to prove that $\lRF(w, r)$ where $r$ is the event that corresponds to the aforementioned transition.

Suppose by contradiction that there is $w'$ such that $w' \neq w \land \lRF(w', r) \land \lMO(w', w)$.
Then $\lFR(r, w)$ and, since $\lFR \suq \lHBSC$, the $r$ transition should have been executed before $w$ transition which contradicts the choice of $w$. 
\end{proof}

\begin{lemma}\label{thm:ra-equiv-forward}
$\Behmf{\M_\RA}\suq \Beh{\G_\RA^\fair}$.
\end{lemma}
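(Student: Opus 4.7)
The plan is to mirror the construction from \cref{thm:sc-equiv-forward}, but to read $\lMO$ off of the timestamps of messages in the operational run. Fix a memory-fair run $\run$ of $\M_\RA$ inducing an observable trace $\trace$ with $\beh(\trace)=\beh$, and set $G.\lE \defeq \Init \cup \set{\toevents{k}\st k\in\dom{\trace}}$ exactly as in the $\SC$ case. Each write or RMW step of $\run$ adds a fresh message $\msga \in \Msg$ to memory; let $\msga_w$ denote this message for the event $w$ it produces (taking $\msga_w$ to be the initial message for $w\in\Init$). Define $G.\lMO$ on $\lW_\loc$ by $\tup{w_1,w_2}\in G.\lMO$ iff $\lLOC(w_1)=\lLOC(w_2)=\loc$ and $\lTS(\msga_{w_1})<\lTS(\msga_{w_2})$, and define $G.\lRF$ by $\tup{w,r}\in G.\lRF$ iff the read step (or read part of the RMW step) corresponding to $r$ consumes the message $\msga_w$.

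Next I would verify four properties. (a) $G$ is a well-formed execution graph: $G.\lRF^{-1}$ is functional because each read step consumes exactly one message, and $G.\lMO_\loc$ is a strict total order because timestamps of distinct messages of $\loc$ are distinct naturals. (b) $G.\lMO$ is prefix-finite: a write event whose message has timestamp $\ts$ has at most $\ts$ predecessors in $G.\lMO$. (c) $G\in\G_\RA$: irreflexivity of $G.\lRA_{\lLOC}$ follows by the standard view-monotonicity invariant maintained by $\M_\RA$---if $\tup{e_1,e_2}\in G.\lHBRA$ and $\lTID(e_1)\neq\lTID(e_2)$, then the view carried by $\msga_{e_1}$ is included in $\lTID(e_2)$'s view at the moment $e_2$ executes, which suffices to show that any $G.\lRA_{\lLOC}$-cycle would force a strict increase in some thread's view timestamp around the cycle. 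The RMW timestamp-successor condition rules out interleaving writes in $G.\lMO$ between the read and write sides of a $\lU$ event. (d) Finally $\beh(\trace)=\beh$ is immediate, giving $G.\lE=\Event(\beh)$.

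The main obstacle is prefix-finiteness of $G.\lFR$, the only step that genuinely uses memory fairness. Fix a write $w$ at location $\loc$ with $\lTS(\msga_w)=\ts$; any $r$ with $\tup{r,w}\in G.\lFR$ consumes a message of $\loc$ of timestamp strictly less than $\ts$, so at the moment $r$'s step fires, $\lTID(r)$'s view at $\loc$ is still $<\ts$. I would then argue per thread $\tid\in\Tid$: if $\tid$'s view at $\loc$ never reached $\ts$ in $\run$, then the silent label $\proplab(\tid,\msga_w)$ is enabled in every state after $\msga_w$ is added---its guard $\Tview(\tid)(\loc)<\ts$ is never falsified---so it is continuously enabled. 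Memory fairness then forces it to fire, raising $\tid$'s view at $\loc$ to at least $\ts$, contradicting the assumption. Hence for each $\tid$ only finitely many reads by $\tid$ can $\lFR$-precede $w$, and since $\Tid$ is finite, only finitely many reads in total. The same argument applied to the silent transition discards the analogous case for $\SCOH$ in \cref{thm:scoh-equiv-forward}, which is why we package both under the same enumeration scheme.
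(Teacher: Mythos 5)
Your proposal is correct and follows essentially the same route as the paper's proof: the same event set and trace order, $\lMO$ and $\lRF$ read off from message timestamps, prefix-finiteness of $\lMO$ bounded by the timestamp, consistency via view-monotonicity along $\lHBRA$, and prefix-finiteness of $\lFR$ obtained from memory fairness through the continuously enabled $\proplab(\tid,\msga_w)$ transition together with finiteness of $\Tid$ (the paper phrases this last step contrapositively, but it is the same argument).
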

\begin{proof}
  Let $\beh$ be in $\Behmf{\M_\RA}$ and 
$\trace$ and $\run$ be a finite or infinite and fair trace and run $\M_\RA$ inducing $\beh$ correspondigly.

Then, we define a partial function $\toevents{\cdot} : \mathbb{N} \rightharpoonup \Tid \times \N  \times \Lab$,
a set $E$, and a partial order $\evorder$ over it as in \cref{thm:sc-equiv-forward} for \SC.

We define a function $\smap : E \to \lQ_\RA$: %
$$
\smap(e) \defeq
\begin{cases}
  q & \text{where } e = \toevents{k} \text{ for some } k \in \N \text{ and } \tup{\_, q} = \run(k) \\
  \linit_\RA & \text{where }  e \in \Init 
\end{cases}
$$
its projection $\vmap : E \to \View$ which maps non-initial events to their thread views:
$$
\vmap(e) \defeq 
\begin{cases}
\Tview(\lTID(e)) & \text{ where } e \in E\setminus\Init \text{ and } \tup{\_, \Tview, \_} = \smap(e) \\
\viewinit & \text{ where } e \in \Init
\end{cases}
$$
and its projection $\tmap : E \rightharpoonup \Time$ which maps events to timestamps of related messages:
$$
\tmap(e) \defeq
\begin{cases}
  \vmap(e)(\lLOC(e))  & \text{where } e \in E \text{ and } \lLOC(e) \text{ is defined} \\
  \bot & \text{otherwise} %
\end{cases}
$$

We denote $\set{\tup{e,e'} \st f(e) = f(e') }$ by $=_{f}$
and $\set{\tup{e,e'} \st f(e) < f(e') }$ by $<_{f}$ for $f \in \set{\tmap, \vmap}$.
Obviously, both $<_{\tmap}$ and $<_{\vmap}$ are strict partial orders.

We define $G$ to be the execution graph $\tup{E,\lRF,\lMO}$ where:
\begin{itemize}
\item  $\lRF \defeq [\lW]\seq =_{\tmap} \rst{\lLOC} \seq[\lR]$;
\item  $\lMO \defeq [\lW]\seq <_{\tmap} \rst{\lLOC} \seq[\lW]$;
\end{itemize}
Consequently, $G.\lFR$ is equal to $[\lR]\seq <_{\tmap} \rst{\lLOC} \seq[\lW]$.

Finally, we prove the following. 
\begin{itemize}
\item $G$ is an execution graph.
  \begin{itemize}
  \item Requirements on the events set hold by construction.
  \item During a read transition, a message is read, and this message is written by some write to the same location.
    Also, $\lRF^{-1}$ is functional and is defined for all reads because for each read there is the unique message to the same location with
    the same timestamp.
  \item For each $l \in \Loc$ the relation $\lMO_l$ is a strict total order because $<_{\tmap}\rst{\lLOC}$ is a strict total order by properties of $\tmap$.
  \end{itemize}
  
\item $G$ is fair.
  First, we show that $G.\lMO$ is prefix-finite.
  Fix $w\in\lE\cap\lW$. We know that $\tmap(w)$ is defined and belongs to $\N$.
  As a consequence, $|\dom{G.\lMO; [w]}| \le \tmap(w)$ by definition of $G.\lMO$.
  
  Now, we show that $G.\lFR$ is prefix-finite.
  Suppose that it is not the case.
  Then, there exists $w \in \lE\cap\lW$ and an infinite set of unique read events $\set{r_i}_{i \in \N} \suq \lE\cap\lR$ \emph{s.t.} $\forall i \in \N. \; \tup{r_i, w} \in G.\lFR$.
  Since a number of threads in the program is bounded, we may assume that $\lTID(r_i) = \tid$ for all $i$ and some $\tid$.
  Also, since $\set{r_i}_{i \in \N} \suq \lE\setminus\Init$, we know that there exists $\set{a_i}_{i \in \N} \suq \N$ \emph{s.t.} $\forall i \in \N. \; r_i = \toevents{a_i}$.
  We can deduce that $\forall i \in \N. \; G.\lRF^{-1}(r_i) <_{\tmap} w$.

  For each $i$,  $\run_i.\lQ.\Tview(\tid)(\loc)$ is less than $\tmap(w)$.
  $\tmap(w)$ is a timestamp of some message $\msga = \msg{\lLOC(w)}{\lVAL(w)}{\tmap(w)}{\_} \in \run_i.\lQ.\m$.
  That is, $\astep{\proplab(\tid,\msga)}_{\RA}$ is continuously enabled in $\run$ and never taken.
  It contradicts fairness of $\run$.

\item $G \in \G_\RA$, that is, the relation $G.\lHBRA\rst{loc} \cup \lMO \cup \lFR$ is acyclic.
      We start by showing that $G.\lHBRA$ is acyclic. It follows from the fact that $\lPO \cup \lRF \suq {} \evorder {}$
      and $\evorder$ is a strict partial order:
      \begin{itemize}
      \item $G.\lPO \suq {}\evorder{}$ holds by construction of $E$.
      \item $\lRF \suq {}\evorder$.
        Fix an edge $\tup{e, e'}$ in $\lRF$.
        Since $e' \in \lR$, there exists $j$ \emph{s.t.} $e' = \toevents{j}$. Also, $e' \in G.\lE \setminus \Init$.
        If $e \in \Init$, then $e \evorder e'$.
        If $e \in G.\lE \setminus \Init$, then there exists $i$ \emph{s.t.} $e = \toevents{i}$.
        Also, it means that $\tup{\_, \tstep{\lTID(e)}{\lLAB(e)}, \smap(e)} = \run(i)$.
        Since $e \in \lW$, the $\run(i)$ transition is a write step, \ie on this step
        a message $\msga = \msg{\lLOC(e)}{\_}{\tmap(e)}{\_}$ is added.
        The transition $j$ reads the message, \ie $i < j$. Consequently, $e \evorder e'$.

      \end{itemize}

      Acyclicity of $G.\lHBRA$ means that a $G.\lHBRA\rst{loc} \cup \lMO \cup \lFR$ cycle has to contain at least one $\lMO \cup \lFR$ edge.
      Then, it is enough to show that $G.\lHBRA\rst{loc} \suq {} \le_{\tmap} {}$ since
      $\lMO \cup \lFR \suq {} <_{\tmap} {}$ ($\lMO$ by definition, and $\lFR$ as being equal to $[\lR]\seq <_{\tmap} \rst{\lLOC} \seq[\lW]$)
      and $<_{\tmap}$ is a strict partial order.

      To show that $G.\lHBRA\rst{loc} \suq {} \le_{\tmap} {}$, it is enough to prove that $G.\lHBRA \suq {} \le_{\vmap} {}$.
      For the latter, it is enough to show $G.\lPO \cup \lRF \suq {} \le_{\vmap} {}$ since $\le_{\vmap}$ is transitive.
      \begin{itemize}
        \item $G.\lPO \suq {} \le_{\vmap}$. Fix an edge $\tup{e, e'}$ in $G.\lPO$.
        By construction of $E$, we know that $\smap(e) \astep{}_\RA^* \smap(e')$.
        Consequently, $\vmap(e) \le \vmap(e')$ since a view of a specific thread only growths during the RA run.

        \item $\lRF \suq {} \le_{\vmap}$. Fix an edge $\tup{e, e'}$ in $\lRF$.
        If $e \in \Init$, then $\vmap(e) = \viewinit$, that is, $\vmap(e) \sqsubseteq \vmap(e')$.
        If $e \in G.\lE \setminus \Init$, then there exists $i$ \emph{s.t.} $e = \toevents{i}$.
        Also, it means that $\tup{\_, \tstep{\lTID(e)}{\lLAB(e)}, \smap(e)} = \run(i)$.
        Since $e \in \lW$, the $\run(i)$ transition is a write step, \ie on this step
        a message $\msga = \msg{\lLOC(e)}{\_}{\tmap(e)}{\vmap(e)}$ is added.

        Since $e' \in \lR$, there exists $j$ \emph{s.t.} $e' = \toevents{j}$. On the $\run(j)$ transition,
        message $\msga$ is read by thread $\lTID(e')$. That is, the thread's view is updated by message's view $\vmap(e)$.
        As a consequence, $\vmap(e) \sqsubseteq \vmap(e')$.

      \end{itemize}

\end{itemize}

\end{proof}

\begin{lemma}\label{thm:ra-equiv-backward}
$\Behmf{\M_\RA}\supseteq\Beh{\G_\RA^\fair}$.
\end{lemma}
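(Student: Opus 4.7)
The plan is to extract from $G$ a memory-fair operational trace of $\M_\RA$ by enumerating the non-initial events in an $\lHBRA$-respecting order and interleaving silent propagation labels scheduled to occur as late as safely possible. Fix $\beh \in \Beh{\G_\RA^\fair}$ and a fair $\G_\RA$-consistent $G$ with $G.\lE = \Event(\beh)$. By \Cref{cor:hb-pf} the relation $[G.\lE \setminus \Init] \seq G.\lHBRA$ is prefix-finite, so \Cref{prop:respect-enum} supplies an enumeration $\nu$ of $G.\lE \setminus \Init$ respecting $G.\lHBRA$.

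For each $\loc$, prefix-finiteness and totality of $G.\lMO$ on $G.\lW_\loc$ make it order-isomorphic to an initial segment of $\N$, and I assign $\lTS(w) \in \N$ accordingly, with the initialization write of each location getting $0$. Irreflexivity of $G.\lRA_\lLOC$ implies the standard RMW-atomicity property (otherwise a $u \mathrel{G.\lFR} w' \mathrel{G.\lMO} u$ cycle arises in $G.\lRA_\lLOC$), whence $\lTS(u) = \lTS(w) + 1$ whenever $\tup{w,u} \in G.\lRF$ and $u \in \lU$. For each event $e$, define an expected view $\vmap(e)(\loc) \defeq \lTS(w_e^\loc)$, where $w_e^\loc$ is the $G.\lMO$-maximal write in $\set{w \in \lW_\loc \st \tup{w,e} \in G.\lHBRA^?}$; this is well-defined by prefix-finiteness of $G.\lMO$. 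Each non-initial write $w$ then corresponds to the message $\msga_w \defeq \msg{\lLOC(w)}{\lVALW(w)}{\lTS(w)}{\vmap(w)}$.

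To schedule propagations, for each $(w,\tid)$ with $w \in G.\lW \setminus \Init$, let $S(w,\tid)$ be the set of $k \in \dom{\nu}$ with $\nu(k) \in G.\lE^\tid$ such that some $G.\lHBRA^?$-descendant of $\nu(k)$ lies in $\dom{G.\lMO \seq [\set{w}]} \cup \dom{G.\lFR \seq [\set{w}]}$. This set is finite because $G.\lMO$ and $G.\lFR$ are prefix-finite, and each of their finitely many predecessors of $w$ has only finitely many $G.\lHBRA$-predecessors in thread $\tid$. Let $i(w,\tid) \defeq \max S(w,\tid)$ when nonempty. Construct $\trace$ by walking $\nu$ in order, emitting $\tstep{\lTID(\nu(k))}{\lLAB(\nu(k))}$ at each step $k$, and then appending the silent labels $\proplab(\tid,\msga_w)$ for every $(w,\tid)$ with $i(w,\tid) = k$; pairs with $S(w,\tid) = \emptyset$ are propagated immediately after $w$ itself is emitted.

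The remaining work is a straightforward induction on prefixes of $\trace$ showing that each label is a valid $\M_\RA$ transition, maintaining the invariant that the current memory equals $\m_0 \cup \set{\msga_w \st w \text{ already processed}}$ and that for each $\tid$ the thread view $\Tview(\tid)$ coincides with the join of $\vmap(e)$ for the latest processed event $e$ of $\tid$ with the entries already raised by executed propagations to $\tid$. Read freshness and view-leq constraints hold because $\nu$ respects $G.\lHBRA$; write-freshness holds by injectivity of $\lTS$ per location; the scheduling ensures that $w$ is not propagated to $\tid$ before any thread-$\tid$ event whose $\lHBRA$-descendants are $\lMO$- or $\lFR$-predecessors of $w$; and RMW atomicity follows from the successor property of $\lTS$. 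By construction $\beh(\trace) = \beh$, and memory-fairness holds because every silent label $\proplab(\tid,\msga_w)$ that is ever enabled corresponds to some $w \in G.\lW$ and is taken exactly once at or just after index $i(w,\tid)$, after which it is disabled. I expect the delicate point to be aligning views across read transitions, since reads absorb the \emph{full} message view rather than just its timestamp; this is precisely why $\vmap(e)$ is defined via the $G.\lHBRA$-closure rather than only the direct $G.\lRF$-predecessors of $e$, so that the inductive invariant survives each read step.
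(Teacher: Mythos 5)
Your construction follows essentially the same route as the paper's proof: an enumeration of $G.\lE\setminus\Init$ respecting $G.\lHBRA$ (via \cref{cor:hb-pf} and \cref{prop:respect-enum}), per-location timestamps making $G.\lMO$ order-isomorphic to an initial segment of $\N$, views computed from the $\lHBRA$-prefix of each event, and a per-pair $(w,\tid)$ propagation index placed after the last ``unsafe'' thread-$\tid$ event; this is exactly the role of the paper's $\safepoints$ and $\tslot$ functions, and your finiteness argument for $S(w,\tid)$ (via prefix-finiteness of $\lHBRA$) is a clean substitute for the paper's \cref{lem:tslot-wf}.

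One concrete slip: you set $i(w,\tid)\defeq\max S(w,\tid)$ and emit $\proplab(\tid,\msga_w)$ right after step $i(w,\tid)$, but the enumeration $\nu$ respects only $G.\lHBRA$, not $G.\lMO$ or $G.\lFR$. Hence an $\lFR$-predecessor of $w$ in another thread (e.g.\ a read of $\lLOC(w)$ from the initialization write, unrelated to $w$ by $\lHBRA$) can be enumerated \emph{before} $w$, making $\max S(w,\tid)$ smaller than the index at which $w$ itself is emitted. At that point $\msga_w$ is not yet in memory, so the premise $\msga\in\m$ of the \textsc{RA-propagate} rule fails. The fix is to take $i(w,\tid)\defeq\max(S(w,\tid)\cup\set{k_w})$ where $k_w$ is $w$'s enumeration index --- which is precisely what the paper's definition of $\tslot(\tid,e_i)$ enforces through the side condition $i\le j$. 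With that correction the rest of your induction and the memory-fairness argument go through as in the paper.
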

\begin{proof}
  Let $\beh\in \Beh{\G_\RA^\fair}$ be induced by a fair execution graph $G$ \emph{s.t.}
$G$ is in $\G_\RA^\fair$.
Let $\set{e_i}_{i \in \N}$ be an enumeration of $G.\lE\setminus\Init$ events%
\footnote{Here we assume that $G.\lE$ is infinite to be specific. However, the similar argument works for the finite case.}
which respects $G.\lPO \cup G.\lRF$. %
Such enumeration exists since $G.\lPO \cup G.\lRF$ is acyclic.
We also define corresponding sequences of event sets $\set{E_i = \Init \cup \bigcup_{j < i}e_{j}}_{i \in \N}$ and
(partial) execution graphs:
$$
\set{G_i = \tup{E_i, [E_i]\seq\lRF\seq[E_i], [E_i]\seq\lMO\seq[E_i]}}_{i \in \N}
$$
For the sequence of execution graphs, we will construct a sequence of RA memory subsystem states
$\set{q_i}_{i \in \N} \suq \lQ_\RA$ \emph{s.t.}, for all $i \in \N$,
$q_i$ \emph{simulates} $G_i$ %
and the following conditions hold:
\begin{itemize}
\item $q_0 = \linit_\RA$;
\item $\forall i \in \N. \; q_i \astep{\tstep{\lTID(e_{i})}{\lLAB(e_{i})}}_\RA \astep{\proplab(\cdot,\cdot)}^*_\RA q_{i+1}$.
\end{itemize}
The latter condition states that $q_i$ and $q_{i+1}$ are related via a step with the same label as $e_{i}$ and a finite number of \RA's silent transition,
\ie propagation of some messages. The propagation steps are required for constructing a fair run.

Since for each transition $q_i \astep{\tstep{\lTID(e_{i})}{\lLAB(e_{i})}}_\RA \astep{\proplab(\cdot,\cdot)}^*_\RA q_{i+1}$
the silent transitions' number is finite, we can construct a run of the RA operational machine
from  $\set{q_i}_{i \in \N}$ with the same behavior $\beh$ induced by execution graph $G$.
In the remainder of the proof, we define a number of auxiliary constructions, build $\set{q_i}_{i \in \N}$,
then show that the aforementioned conditions on the sequence hold,
and prove the related run of the RA machine is fair.

Since $G.\lMO$ is acyclic and prefix-finite, there exists a function $\tmap : G.\lE \to \Time$ satisfying the following requirements:
\begin{itemize}
  \item $e \in \Init \Rightarrow \tmap(e) = 0$;
  \item $\tup{w, w'} \in G.\lMO \Rightarrow \tmap(w) < \tmap(w')$;
  \item $\tup{w, w'} \in G.\lMO \setminus (G.\lMO \seq G.\lMO) \Rightarrow \tmap(w') = \tmap(w) + 1$;
  \item $\tup{w, r} \in G.\lRF\seq[G.\lR] \Rightarrow \tmap(w) = \tmap(r)$.
\end{itemize}
Consequently, we know that if $\tup{r, w} \in G.\lFR$, then $\tmap(r) < \tmap(w)$.

We define a set of \emph{safe points for propagation of} a write event $w$, denoted $\safepoints(w)$,
to contain events which are not $\lHBRA^?$-followed by an event $e$ to the same location as $w$ with a smaller timestamp or to event $w$ itself:
$$
\inarr{
\safepoints(w) \defeq {} \\
\quad G.\lE \setminus \dom{G.\lHBRA^? \seq [\{ e \st \lLOC(e) = \lLOC(w) \land \tmap(e) < \tmap(w) \}\cup\{w\}]} 
}
$$

We define the following partial function $\tslot : \Tid \times ((G.\lE \cap \lW)\setminus \Init) \rightharpoonup \N$:%
$$
\tslot(\tid, e_i) \defeq
\begin{cases}
\bot & \text{if } G.\lE_{\tid} \suq \emptyset \\
\min\set{j \st i \le j \land
           \forall k > j. \;
           e_k \in G.\lE_{\tid} \Rightarrow  e_k \in \safepoints(e_i)} & \text{otherwise}
\end{cases}
$$
Note that $G.\lE\setminus\Init \suq \set{e_i}_{i \in \N}$,
thus we can define function for elements of the sequence.

We use $\tslot(\tid, w)$ to point to a transition in $\set{q_i}_{i \in \N}$ which includes
propagation step $\astep{\proplab(\tid,\msga)}_\RA$ for a message $\msga$ representing $w$ in the RA memory,
\ie $\msga = \msg{\lLOC(w)}{\lVAL(w)}{\tmap(w)}{\_}$.

We need to show that $\tslot$ is defined for all non-empty threads and non-initializing write events in $G$
as a consequence of $G$'s fairness.

\begin{lemma}\label{lem:tslot-wf}
$\tslot(\tid, w) \not = \bot$ for all $w \in (G.\lE \cap \lW)\setminus \Init$ and $\tid$ \emph{s.t.} $G.\lE_\tid \not = \emptyset$.
\end{lemma}
\begin{proof}
Fix $\tid$ and $w$.
If thread $\tid$ has a finite number of events then
either they all are enumerated before $w$, or the $\lPO$-latest one among them is in $G.\lE_{\tid} \cap \safepoints(w)$.
Now, we consider the case of infinite number of events in $\tid$.
To prove the lemma, we need to show that $G.\lE_{\tid} \cap \safepoints(w)$ is not empty.

Since $w \in G.\lE \setminus \Init$, there exists $i$ \emph{s.t.} $w = e_i$.
Let's pick $e_{j_0}$ for some $j_0$ \emph{s.t.} $e_{j_0} \in G.\lE_\tid$ and $j_0 \ge i$.
If $e_{j_0} \in \safepoints(w)$, then the proof is completed.
Otherwise, there exists some $k_0$ \emph{s.t.}
$$e_{k_0} \in \codom{[e_{j_0}] \seq G.\lHBRA^?} \land \lLOC(e_{k_0}) = \lLOC(w) \land \tmap(e_{k_0}) < \tmap(w)$$
We know that $k_0$ is bigger than $j_0$ since $\tup{e_{j_0}, e_{k_0}} \in G.\lHBRA^?$ and enumeration $\set{e_i}_{i \in \N}$
respects $G.\lHBRA$.
Then, we can pick $j_1 > k_0$ \emph{s.t.} $e_{j_1} \in G.\lE_\tid$. Again, we have either found an element of $\safepoints(w)$, or
we could pick $k_1 > j_1$.
By iterating the process, we either find an element of $\safepoints(w)$, or construct an infinite sequence $\set{e_{k_n}}_{n \in \N}$ \emph{s.t.}
$$\lLOC(e_{k_n}) = \lLOC(w) \land \tmap(e_{k_n}) < \tmap(w)$$
All elements of the sequence are either read or write events since they all operate on the same location as the write event $w$.
That is, $\set{e_{k_n}}_{n \in \N}$ has an infinite subsequence containing either only read events or only write events.
In the first case, there is an infinite number of $G.\lFR$-predecessors of $w$, in the second case---$G.\lMO$-predecessors of $w$.
In both cases, it contradicts fairness of $G$.
\end{proof}

Now, we construct $\set{q_i = \tup{\m_i, \Tview_i}}_{i \in \N}$.
As we mentioned earlier, we want $q_i$ to simulate $G_i$:
(i) there should be a message in $\m_i$ for each write event in $G_i$ and
(ii) $\Tview_i(\tid)$ has to represent $\tmap$-timestamps of write events from $\dom{G.\lHBRA\seq[G_i.\lE_{\tid}]}$ for each thread $\tid$.
However, all components of $q_i$ have to account for message propagation steps (which we determine by $\tslot$) also.

We define an auxiliary function $\settoview : \powerset{G.\lE} \to \View$ which assigns views to
a set of write events from $G.\lE$:
$$
\settoview(E) \defeq \bigsqcup\set{[\lLOC(w):\tmap(w)] \st w \in \lW \cap E}
$$
Then, we define a function $\vmap : G.\lE \to \View$ which assigns views representing $\lHBRA$ paths,
a function $\vmapprop : G.\lE\setminus\Init \to \View$ which assigns views representing observed propagated messages
according to $\tslot$,
and a combination of the functions $\vmapfull : G.\lE\setminus\Init \to \View$:
$$
\begin{array}{l c l}
\vmap(e)     & \defeq & \settoview(\dom{G.\lHBRA^?\seq[e]}) \\
\vmapprop(e_i) & \defeq & \settoview \; \set{w \st \tslot(\lTID(e_i), w) < i} \\
\vmapfull(e) & \defeq & \vmap(e) \sqcup \vmapprop(e) \\
\end{array}
$$
All three functions are monotone on $G.\lHBRA$ paths:
$$\forall \tup{e,e'} \in G.\lHBRA, d \in \set{-,\proplab,\mathsf{full}}. \; \vmap_d(e) \sqsubseteq \vmap_d(e')$$
The function $\etom : G.\lE \cap \lW \to \Msg$ constructs a message from a write event:
$$
\etom(e) \defeq \msg{\lLOC(e)}{\lVAL(e)}{\tmap(e)}{\vmapfull(e)}
$$
Now, we can construct all components of $q_i$:
$$
\begin{array}{l c l}
\m_i     & \defeq & \memoryinit \cup \bigcup \set{\etom(e_j) \st j < i \land e_j \in \lW} \\
\Tview'_i & \defeq & \lambda \tid. \; \bigsqcup \set{\vmapfull(e_j) \st j < i
  \land e_j \in \lE_{\tid}} \sqcup \settoview \set{w \st \tslot(\tid, w) < i - 1} \\
\Tview_i & \defeq & \lambda \tid. \; \Tview'_i(\tid) \sqcup \settoview \set{w \st \tslot(\tid, w) < i} \\
q_i & \defeq & \tup{\m_i, \Tview_i}
\end{array}
$$
Next, we need to prove that
$$
q_i \astep{\tstep{\lTID(e_{i})}{\lLAB(e_{i})}}_\RA \tup{\m_{i+1}, \Tview'_{i+1}} \astep{\proplab(\cdot,\cdot)}^*_\RA q_{i+1}
$$
holds for all $i \in \N$.

Fix $i \in \N$. For brevity, we denote $e_{i}$ by $e$ and $\lTID(e_{i})$ by $\tid$.
First, we show that $q_i \astep{\tstep{\tid}{\lLAB(e)}}_\RA \tup{\m_{i+1}, \Tview'_{i+1}}$ holds.
Consider possibilities for $\lLAB(e)$:
\begin{itemize}

\item $\lLAB(e) = \lR(\loc, \val)$.
      In this case, $\m_{i+1} = \m_i$
      and $\Tview'_{i+1} = \Tview_i[\tid \mapsto \Tview_i(\tid) \sqcup \vmapfull(e)]$ by construction.
      Since $e \in G.\lE \cap \lR$ and enumeration $\set{e_i}_{i \in \N}$ respects $G.\lRF$, 
      there exists $k < i$ \emph{s.t.} $\tup{e_k, e} \in G.\lRF$, $\tmap(e_k) = \tmap(e)$, $\lVAL(e_k)=\val$,
      and $\etom(e_k) \in \m_i$.
      
      For the transition to hold,
      we need to show that $\Tview(\tau)(\loc) \le \tmap(e_k)$, \ie
      $$
      (\bigsqcup \set{\vmapfull(e_j) \st j < i \land e_j \in \lE_{\tid}} \sqcup \settoview \set{w \st \tslot(\tid, w) < i})(\loc) \le \tmap(e_k)
      $$
      It could be split to three statements to show:
      \begin{enumerate}
      \item $(\bigsqcup \set{\vmap(e_j) \st j < i \land e_j \in \lE_{\tid}})(\loc) \le \tmap(e_k)$. \\
      Suppose it does not hold. Then, there exists some $e_j$ and $w \in \lW$ \emph{s.t.}
      $\lLOC(w) = \loc$, $\tmap(e_k) < \tmap(w)$, $\tup{w, e_j} \in G.\lHBRA^?$, and $\tup{e_j,e} \in G.\lPO$.
      That is, there is a cycle $[w]\seq G.\lHBRA^?\seq[e_j]\seq G.\lPO\seq[e]\seq G.\lRF^{-1}\seq[e_k]\seq\lMO\seq[w]$.
      Since $[w]\seq G.\lHBRA^?\seq[e_j]\seq G.\lPO\seq[e]\suq G.\lHBRA\rst{loc}$ and $G.\lRF^{-1}\seq[e_k]\seq G.\lMO\suq G.\lFR$,
      it contradicts the \RA consistency predicate.

      \item $(\bigsqcup \set{\vmapprop(e_j) \st j < i \land e_j \in \lE_{\tid}})(\loc) \le \tmap(e_k)$. \\
      Suppose it does not hold. Then, there exists some $j$, $p$, and $w \in \lW$ \emph{s.t.} $\lLOC(w) = \lLOC$, $\tmap(e_k) < \tmap(w)$,
      $p = \tslot(\tid, w)$, $p < j$, and $\set{e_p, e_j} \suq \lE_{\tid}$. That is, $\tup{e_p, e} \in G.\lPO \suq G.\lHBRA^?$.
      The fact that $\tmap(e) = \tmap(e_k) < \tmap(w)$ contradicts definition of $\tslot$.

      \item $(\settoview \set{w \st \tslot(\tid, w) < i})(\loc) \le \tmap(e_k)$. \\
      Suppose it does not hold. Then, there exists some $w \in \lW$ \emph{s.t.} $\lLOC(w) = \lLOC$, $\tmap(e_k) < \tmap(w)$, and $\tslot(\tid, w) < i$.
      It contradicts definition of $\tslot$.
      \end{enumerate}

\item $\lLAB(e) = \lW(\loc, \val)$.
      In this case, $\m_{i+1} = \m_i \cup \set{\etom(e)}$ and
      $\Tview'_{i+1} = \Tview_i[\tid \mapsto \Tview_i(\tid) \sqcup \vmapfull(e)]$
      by construction.
      By $G.\lHBRA$-monotonicity of $\vmapfull$, $\vmapfull(e) = \Tview_{i}(\tid)[\loc \mapsto \tmap(e)]$ and,
      consequently, $\Tview_i(\tid) \sqcup \vmapfull(e) = \vmapfull(e)$.
      
      Absence of a message to $\loc$ with the same timestamp $\tmap(e)$ in $\m_i$ follows from the construction of $\m_i$ and properties of $\tmap$.
      Also, $\Tview_i(\tid)(\loc) < \tmap(e)$ holds by the same reason as in the read case.
      
\item $\lLAB(e) = \lU(\loc, \valr, \valw)$.
      This case is similar to the read and write ones.
\end{itemize}
Now, we need to construct $\tup{\m_{i+1}, \Tview'_{i+1}} \astep{\proplab(\cdot,\cdot)}^*_\RA q_{i+1}$.
For that, we take a set of pairs $$X = \set{\tup{\tau, w} \st \exists \tau, w. \; i = \tslot(\tau, w)}$$
Since $\tslot(\tau,e_k) \ge k$ for all $k\in\N$ and $\tid \in \Tid$, $|X|$ is smaller than $|\Tid| * i$,
\ie it is finite. Thus, there is a finite number of message propagation steps which lead to $q_{i+1}$ from
$\tup{\m_{i+1}, \Tview'_{i+1}}$.
Note that the number of steps may be smaller than the size of $X$,
\eg if $\set{\tup{\tid, w}, \tup{\tid, w'}} \suq X$ and $\lLOC(w) = \lLOC(w')$---in this case, it is enough to
make a propagation step only for a message to the location with the biggest timestamp.
We construct a run $\run$ of the RA model inducing $\beh$ from $\set{q_i}_{i \in \N}$.

Lastly, we need to show that $\run$ is fair.
For that, it is enough to show that there do not exist $\tid \in \Tid$, $\msga \in \Msg$, and $k \in \N$ \emph{s.t.}
for all $j > k$ there exists $q'$ \emph{s.t.} $q_j \astep{\proplab(\tid,\msga)}_\RA q'$.
Suppose there are such $\tid$, $\msga$, and $k$. It means that $\msga \in \M_j$ for all $j > k$ and there is $w \in \set{e_i}_{i \le k} \cap (G.\lE \cap \lW) \setminus \Init$.
\emph{s.t.} $\etom(w) = \msga$ and $\lTID(w) = \tid$.
By~\cref{lem:tslot-wf}, $\tslot(\tid,w)$ is defined.
That is, $\tmap(w) \le \Tview_j(\tid)(\loc)$ for $j > \tslot(\tid,w)$.
It contradicts that for all $j > k$ there exists $q'$ \emph{s.t.} $q_j \astep{\proplab(\tid,\msga)}_\RA q'$. \qedhere

\end{proof}

\begin{lemma}\label{thm:scoh-equiv-forward}
$\Behmf{\M_\SCOH}\suq \Beh{\G_\SCOH^\fair}$.
\end{lemma}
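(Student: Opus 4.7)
My plan is to mirror the structure of the proof of \cref{thm:ra-equiv-forward}, since $\M_\SCOH$ differs from $\M_\RA$ only in that reads do not incorporate the message's view into the thread view. Given a memory-fair trace $\trace$ and run $\run$ of $\M_\SCOH$ inducing $\beh$, I define $\toevents{\cdot}$, $E$, and $\evorder$ exactly as before, then introduce $\smap$, $\vmap$, and $\tmap$ by reading off thread views and timestamps from the states of $\run$. I then set $G.\lRF \defeq [\lW]\seq =_{\tmap}\rst{\lLOC} \seq [\lR]$ and $G.\lMO \defeq [\lW] \seq <_{\tmap}\rst{\lLOC} \seq [\lW]$, so that $G.\lFR = [\lR]\seq <_{\tmap}\rst{\lLOC} \seq [\lW]$ by computation.

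Well-formedness of $G$ as an execution graph goes through verbatim from the RA proof, since it relies only on the shape of read/write/RMW transitions and the uniqueness of $\tup{\loc,\ts}$ per message, properties which $\M_\SCOH$ inherits. Fairness of $G$ also transfers unchanged: prefix-finiteness of $G.\lMO$ follows from $|\dom{G.\lMO ; [w]}| \le \tmap(w) \in \N$, and prefix-finiteness of $G.\lFR$ reduces to the fact that the \SCOH silent propagation steps (which are defined identically to those of $\M_\RA$) are continuously enabled for any message not yet observed by a given thread, contradicting memory-fairness of $\run$ if some $w$ had infinitely many $\lFR$-predecessors. Finally, $\beta(G)=\beta$ holds by construction of $\toevents{\cdot}$.

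The only genuinely new work is verifying $\G_\SCOH$-consistency, which requires irreflexivity of \emph{both} $G.\lHBRA = (G.\lPO \cup G.\lRF)^+$ and $G.\lSC_\lLOC = (G.\lPO\rst{\lLOC} \cup G.\lRF \cup G.\lMO \cup G.\lFR)^+$. Irreflexivity of $G.\lHBRA$ is immediate from $G.\lPO \cup G.\lRF \suq {\evorder}$ (the same argument as in the RA case, which did not use the view-lifting part of the read transition). The main obstacle is $G.\lSC_\lLOC$: under \RA I had $G.\lHBRA \suq \le_\vmap$, which fails here because \SCOH reads do not merge the message view. I will instead argue directly on timestamps: show $G.\lPO\rst{\lLOC} \cup G.\lRF \suq \le_\tmap$ and $G.\lMO \cup G.\lFR \suq <_\tmap$. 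The first inclusion for $G.\lRF$ is immediate from $\lRF \suq {=_\tmap}$; for $G.\lPO\rst{\lLOC}$, I use that for two events of the same thread accessing the same location $\loc$, the thread's component $\Tview(\tid)(\loc)$ is monotone along $\run$ (since the \SCOH read rule updates $\Tview(\tid)(\loc)$ to $\ts$ whenever the thread reads a message of $\loc$ with timestamp $\ts$), hence $\tmap$ is non-decreasing along $G.\lPO\rst{\lLOC}$. The second inclusion holds by construction.

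With these two inclusions in hand, any cycle in $G.\lPO\rst{\lLOC} \cup G.\lRF \cup G.\lMO \cup G.\lFR$ must have total $\tmap$-change zero, so cannot contain any $G.\lMO$ or $G.\lFR$ edge; but then it is a cycle in $G.\lPO\rst{\lLOC} \cup G.\lRF \suq {\evorder}$, which is impossible since $\evorder$ is a strict partial order. This completes $G \in \G_\SCOH$ and hence $\beh \in \Beh{\G_\SCOH^\fair}$.
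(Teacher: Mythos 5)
Your proposal is correct and follows essentially the same route as the paper: reuse the RA construction of $G$ verbatim (well-formedness, fairness via the continuously-enabled propagation steps, and acyclicity of $(G.\lPO\cup G.\lRF)^+$ all carry over), and replace the $\le_{\vmap}$ argument by the inclusions $G.\lPO\rst{\lLOC}\cup G.\lRF\suq{}\le_{\tmap}$ and $G.\lMO\cup G.\lFR\suq{}<_{\tmap}$ to rule out $\lSC_\lLOC$ cycles. Your justification of $G.\lPO\rst{\lLOC}\suq{}\le_{\tmap}$ via monotonicity of $\Tview(\tid)(\loc)$ along the run is exactly the ``similar reason'' the paper leaves implicit.
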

\begin{proof}
  The proof follows the same pattern as the proof of~\cref{thm:ra-equiv-forward}.
  The only difference is in showing that the constructed (in the same way) execution graph $G$ is \SCOH-consistent, \ie $G \in \G_\SCOH$,
  instead of $G \in \G_\RA$.
  
  That is, we need to show that $G.\lHBRA$ and $G.\lPO\rst{loc} \cup \lRF \cup \lMO \cup \lFR$ are acyclic.
  $G.\lHBRA$ is acyclic for the same reason as it was in~\cref{thm:ra-equiv-forward}.

  Acyclicity of $G.\lHBRA = (G.\lPO \cup \lRF)^+$ means that a $G.\lPO\rst{loc} \cup \lRF \cup \lMO \cup \lFR$ cycle has to contain at least one $\lMO \cup \lFR$ edge.
  Then, it is enough to show that $G.\lPO\rst{loc} \cup \lRF \suq {} \le_{\tmap} {}$ since
  $\lMO \cup \lFR \suq {} <_{\tmap} {}$ ($\lMO$ by definition, and $\lFR$ as being equal to $[\lR]\seq <_{\tmap} \rst{\lLOC} \seq[\lW]$)
  and $<_{\tmap}$ is a strict partial order.
  
  $G.\lPO\rst{loc} \cup \lRF \suq {} \le_{\tmap} {}$ holds for the similar reason as in~\cref{thm:ra-equiv-forward}.
\end{proof}

\begin{lemma}\label{thm:scoh-equiv-backward}
$\Behmf{\M_\SCOH}\supseteq\Beh{\G_\SCOH^\fair}$.
\end{lemma}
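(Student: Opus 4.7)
The plan is to adapt the construction of Lemma~\ref{thm:ra-equiv-backward} almost verbatim. Given $\beh\in\Beh{\G_\SCOH^\fair}$ induced by a fair $\G_\SCOH$-consistent execution graph $G$, I first observe that $G.\lHBRA = (G.\lPO\cup G.\lRF)^+$ is acyclic by SCOH consistency and prefix-finite by \cref{cor:hb-pf}. Enumerate $G.\lE\setminus\Init$ as $\{e_i\}_{i\in\N}$ respecting $G.\lHBRA$, fix a timestamp function $\tmap$ compatible with $G.\lMO$ and $G.\lRF$ as in the RA proof, and schedule message propagation via the same function $\tslot(\tid, w)$. The goal is again to construct a sequence $\{q_i=\tup{\m_i,\Tview_i}\}_{i\in\N}$ of $\M_\SCOH$ states with
$q_i \astep{\tstep{\lTID(e_i)}{\lLAB(e_i)}}_\SCOH \tup{\m_{i+1},\Tview'_{i+1}} \astep{\proplab(\cdot,\cdot)}^*_\SCOH q_{i+1}$, which immediately yields a memory-fair run since \cref{lem:tslot-wf} depends only on prefix-finiteness of $G.\lMO$ and $G.\lFR$.

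The one genuine modification concerns the view function attached to each event. In $\M_\SCOH$ a read does not absorb the full view of the message it consumes, only the single-location timestamp. Accordingly, I redefine
$$\vmap(e) \defeq \settoview\!\set{w\in\lW\st \tup{w,e}\in G.\lPO^? \cup ([\lW]\seq G.\lRF\seq [\lR]\rst{\lLOC}\seq G.\lPO^?)},$$
replacing the $G.\lHBRA^?$-closure used in the RA case by the weaker relation that tracks only same-location $\lRF$-transfers followed by $\lPO$. The auxiliary $\vmapprop$, $\vmapfull$, $\etom$, and the state components $\m_i,\Tview_i$ are built from this $\vmap$ exactly as before. The message put into $\m_i$ for a write event $w$ is $\etom(w)=\msg{\lLOC(w)}{\lVAL(w)}{\tmap(w)}{\vmapfull(w)}$; the stored view is never consulted by $\M_\SCOH$ reads, so its exact content is immaterial for driving the machine, but matching the RA shape keeps the rest of the bookkeeping identical.

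The case analysis proving the transition $q_i \astep{\tstep{\lTID(e_i)}{\lLAB(e_i)}}_\SCOH \tup{\m_{i+1},\Tview'_{i+1}}$ follows the RA proof, and only the read case requires new reasoning. When $\lLAB(e_i)=\rlab{}{\loc}{\val}$ and $e_k$ is the (unique, $k<i$) write with $\tup{e_k,e_i}\in G.\lRF$, I must show $\Tview_i(\tid)(\loc)\le \tmap(e_k)$. Expanding the three summands of $\Tview_i(\tid)(\loc)$: the $\vmap$-summand is bounded by $\tmap(e_k)$ because a violation would yield $w\in\lW_\loc$ with $\tmap(e_k)<\tmap(w)$ and a path $[w]\seq(\lPO^?\cup\lRF\seq\lPO^?\rst{\lLOC})\seq[e_j]\seq\lPO\seq[e_i]\seq\lRF^{-1}\seq[e_k]\seq\lMO\seq[w]$, which is a cycle in $G.\lPO\rst{\lLOC}\cup G.\lRF\cup G.\lMO\cup G.\lFR$, contradicting $G.\lSC_\lLOC$-irreflexivity; the $\vmapprop$-summand and the pending-propagation summand are bounded exactly as in the RA case by the defining property of $\tslot$.

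The main obstacle, and the whole point of weakening $\vmap$, is this read case: reusing the RA $\vmap$ would over-constrain the construction and fail on SCOH-admissible graphs such as \ref{prog:MP} and \ref{prog:SB+RMWs} from \cref{ex:MP,ex:SB+RMWs}. Once the read case is settled via $G.\lSC_\lLOC$-irreflexivity in place of $G.\lRA_\lLOC$-irreflexivity, the construction of the $\astep{\proplab(\cdot,\cdot)}^*_\SCOH$ segment between $\tup{\m_{i+1},\Tview'_{i+1}}$ and $q_{i+1}$, the induction invariants on $\m_i$ and $\Tview_i$, and the final fairness argument (every $\msga\in\m_j$ eventually gets propagated to every thread because $\tslot(\tid,w)$ is defined whenever $G.\lE_\tid\neq\emptyset$, by \cref{lem:tslot-wf}) all carry over verbatim from the proof of \cref{thm:ra-equiv-backward}.
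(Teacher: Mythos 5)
Your proof is correct and follows essentially the same route as the paper's: both reduce the claim to the RA construction of \cref{thm:ra-equiv-backward} and replace the $G.\lHBRA^?$-based view bookkeeping by the weaker relation $G.\lRF^?\seq G.\lPO^?$ (your $\vmap$ relation is exactly this, since $\lRF$ already relates same-location events), discharging the read case via irreflexivity of $G.\lSC_\lLOC$ in place of $G.\lRA_\lLOC$. The only divergence is where the replacement is applied --- the paper swaps the relation inside $\safepoints$, whereas you swap it inside $\vmap$ and keep the RA $\safepoints$ verbatim --- and this is harmless: keeping $G.\lHBRA^?$ in $\safepoints$ only delays propagation, and since \cref{lem:tslot-wf} uses nothing beyond prefix-finiteness of $\lMO$ and $\lFR$ and the enumeration respecting $\lHBRA$, $\tslot$ stays well-defined and the resulting run is still memory-fair.
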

\begin{proof}
  The proof follows the same pattern as the proof of~\cref{thm:scoh-equiv-backward} with
  the main difference being that instead of $G.\lHBRA^?$ in definition of $\safepoints$
  it uses $G.\lRF^?\mathbin{;}G.\lPO^?$ since the read transition of \SCOH does not update thread's view with the read message's view.
\end{proof}

\begin{lemma}
\label{thm:tso-equiv-forward}
$\Behmf{\M_\TSO}\suq \Beh{\G_\TSO^\fair}$.
\end{lemma}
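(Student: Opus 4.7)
My plan follows the template of \cref{thm:sc-equiv-forward,thm:ra-equiv-forward}, adapted to accommodate per-thread store buffers and explicit propagation steps. Starting from a memory-fair run $\run$ of $\M_\TSO$ inducing $\beh=\beh(\trace)$, I define the event map $\toevents{\cdot}$ exactly as in the SC case (turning the $k$-th program-visible transition into an event with serial number equal to the number of prior transitions of the same thread), and let $E \defeq \Init \cup \set{\toevents{k}\st k\in\dom\trace}$. The trace order restricted to $E$ yields a strict partial order $\evorder$. In addition, I record the trace index $\proppoint(w) \in \N$ at which each write $w \in E \cap \lW$ is propagated to memory (for pure writes, this is the index of the corresponding $\proplab(\tid)$ transition; for RMWs, the propagation coincides with execution because the buffer is empty). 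Memory fairness guarantees that $\proppoint(w)$ is defined for every $w \in E \cap \lW$.

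Second, I construct the graph $G=\tup{E,\lRF,\lMO}$. Define $\lMO \defeq [\lW]\seq \set{\tup{w_1,w_2} \st \proppoint(w_1) < \proppoint(w_2)}\rst\lLOC \seq [\lW]$, which is a strict total order on writes of each location because $\proppoint$ is injective (propagation steps happen one at a time). For $\lRF$, I inspect the read transition in $\trace$: if the read $r$ is served by the buffer, I let $\lRF(w,r)$ with $w$ the $\lPO$-maximal write by $\lTID(r)$ to $\lLOC(r)$ still in the buffer at that point; otherwise $r$ reads from memory, and I set $\lRF(w,r)$ where $w$ is the unique write with $\proppoint(w)$ equal to the most recent propagation to $\lLOC(r)$ prior to the read (initialization writes play the role of $w$ when nothing has been propagated yet). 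Value-matching of $\lRF$ and well-formedness of $E$ are immediate from the operational rules.

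Third, I verify fairness. $\lMO$ is prefix-finite since for any $w$, $\set{w' \st w' \lMO w}$ injects into $\set{i < \proppoint(w)}$. For $\lFR$, fix $w\in E\cap\lW$; if $\tup{r,w}\in \lFR$ then $r$ reads some $w'$ with $\proppoint(w') < \proppoint(w)$, so the read transition for $r$ occurred before the propagation index $\proppoint(w)$ (a TSO read only observes the $\lMO$-maximum value propagated so far, or a value from its own buffer; once $w$ is in memory, no thread can later read a $\lMO$-earlier value from memory, and intra-thread buffer forwarding can only happen while $r$'s thread still has its own $\lPO$-earlier write queued). In both cases only finitely many reads can be $\lFR$-before $w$ since only finitely many trace positions precede $\proppoint(w)$.

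The main obstacle is verifying TSO-consistency, i.e.\ irreflexivity of $G.\lHBTSO$ and $G.\lSC_{\lLOC}$. My plan is to define a single ``completion timestamp'' $\tau : E \to \N$, assigning to each write its propagation index and to each read/RMW its trace index of execution, and to show that $\lPPO \cup \lRFE \cup \lMO \cup \lFR \suq {<_\tau}$, which gives irreflexivity of $\lHBTSO$ at once. The inclusions $\lMO,\lFR,\lRFE \suq {<_\tau}$ follow from the constructions above; $\lPPO \suq {<_\tau}$ requires a case analysis on the types of the two endpoints---the excluded write-to-read case is precisely the one where a write's propagation can be deferred past a later read, while the three remaining cases (read-then-anything, write-then-write, anything-then-RMW) are enforced by the TSO rules (FIFO propagation of buffered writes, empty-buffer prerequisite of RMWs, and read-execution preceding subsequent propagations). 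For $\lSC_{\lLOC}$ I must further include $\lPO\rst\lLOC$: the only delicate subcase is write-then-read to the same location in the same thread, and here the TSO read rule (``read the most recent buffer entry for $\loc$ if any, else memory'') guarantees that the read sees either the $\lPO$-latest preceding same-thread write $w$ or some $\lMO$-later write, ruling out an $\lFR$ edge back to $w$ and hence any $\lSC_{\lLOC}$ cycle. Finally, $\beh(G)=\beh$ is immediate from the definition of $\toevents{\cdot}$, completing the proof.
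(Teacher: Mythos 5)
Your construction is essentially the paper's own: your ``completion timestamp'' $\tau$ is exactly the paper's visibility function $vis$ (propagation index for writes, execution index for reads/RMWs), and the definitions of $\lMO$, $\lRF$, the prefix-finiteness arguments, and the strategy of showing the consistency relations are contained in $<_\tau$ all coincide with the paper's proof. The only spot to elaborate when writing this out in full is the $\lSC_{\lLOC}$ cycle argument: since a same-location $\lPO$ edge from a write to a read need not increase $\tau$, you must (as the paper does) collapse each $[\lW]\seq\lPO\rst{\lLOC}\seq[\lR]\seq\lFR\seq[\lW]$ segment into a single $\lMO$ edge using exactly the buffer-forwarding observation you state, rather than arguing edge-by-edge.
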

\begin{proof}
Let $\beh\in \Behmf{\M_\TSO}$. Then, $\beh=\beh(\trace)$ where $\trace$ is a trace induced by a memory-fair run $\run$ of $\M_\TSO$.
We'll construct a \TSO-consistent fair execution graph $G$ inducing $\beh$ \emph{s.t.} $G.\lE$  represents labels of $\trace$.

First we'll relate write and propagation transitions. Since $\trace$ is memory-fair, every buffered write eventually gets propagated. That is, there exists a bijective function $write2prop: \set{i \st \trace(i) = (\_, \lW(\_, \_))} \to \set{i \st \trace(i) = (\_, \proplab)}$ that maps the trace index of write transition to the trace index of transition that propagates it.

\newcommand{\totrans}{{\toevents{}}^{-1}}

We start by constructing the set of graph events $\lE$. It is made of $\Init$ set and trace-induced events constructed with the function $\toevents{}$ like in the \SC case. Note that $\toevents{}$ has an inverse $\totrans$ on the resulting set of non-initializing events. 

The program order on the resulting events set (as defined in \cref{thm:sc-equiv-forward}) under \TSO doesn't necessarily represent the order in which the shared memory is accessed. To establish that order, we introduce the visibility function $vis: \lE \to \mathbb{Z}$. For all $e$ such that $\lTYP(e)\in \set{\lR \cup \lU}$ let $vis(e)=\totrans(e)$, that is, the trace index of the transition corresponding to that event. On the other hand, the visibility of a non-RMW write is determined by its propagation rather than the write transition itself. That is, for all $e$ such that $\lTYP(e)=\lW \land e \notin \Init$ let $vis(e)=write2prop(\totrans(e))$. Also let $vis(e)=-1$ for all $e \in \Init$.

The relations $\lRF$ and $\lMO$ will represent the events' visibility in the graph.

\begin{itemize}
\item The modification order is determined by the order in which events become visible:

  $\forall x\ y.\ \lMO(x, y) \iff x\in \lW \land y\in \lW \land \lLOC(x)=\lLOC(y) \land vis(x) < vis(y)$.
  
\item The $\lRF$ relation formalizes the rule according to which a write to a location is observed: it is either the latest non-propagated write to that location from reading thread's buffer or, if this buffer is empty, the latest propagated write to that location.
  Formally, $\lRF(w, r)$ where both $w$ and $r$ access the location $l$ if either:
  \begin{itemize}
  \item $\lTID(r) = \lTID(w)$ and $\totrans(w)$ is the index of the latest transition of a write to $l$ such that $\totrans(w) < vis(r) < vis(w)$;
  \item the buffer of $\lTID(r)$ is empty by the moment of $vis(r)$th transition and $vis(w)$ is the index of the latest $l$-propagation transition less than $vis(r)$.
  \end{itemize}
\end{itemize}

\begin{lemma} \label{thm:tso-op-dec}
  $G = \tup{E,\lRF,\lMO}$ is a \TSO-consistent fair execution graph $G$ inducing $\beh$ \emph{s.t.} $G.\lE$  represents labels of $\trace$. 
\end{lemma}
\begin{proof}
  The proof of $G$ inducing $\beh$ and having $G.\lE$ representing labels of $\trace$ is the same as in \cref{thm:sc-equiv-forward}.
  The requirements on $G$ being an execution graph hold by the construction of $G.\lE$, $G.\lRF$ and $G.\lMO$.
  Fairness and two conditions of \TSO-consistency are proved in \cref{tso-g-fair}, \cref{thm:scpl} and \cref{thm:hbtso-acyclic} correspondingly. 
\end{proof}

To prove fairness and \TSO-consistency of the aforementioned graph we'll first state some auxillary claims. 

\begin{proposition} \label{lem:w2p-order}
  The index of write transition is less than the index of its propagation: $\forall i_w.\ i_w < write2prop(i_w)$.
\end{proposition}

\begin{proposition} \label{w-prec-prop}
  An event transition is executed at the same moment or before it becomes visible: $\forall e.\ \totrans(e) \leq vis(e)$. 
\end{proposition}
\begin{proof}
  If $e$ is not a write, it becomes visible in the same transition. If $e$ is a write event, it can be propagated only after the write transition itself. 
\end{proof}

\begin{proposition} \label{po-implies-co}
  Program order on writes and RMWs respects the propagation order: $\forall x\ y.\ ([\lW];\lPO;[\lW])(x, y) \implies vis(x) < vis(y)$. 
\end{proposition}
\begin{proof}
  Writes of the same thread are propagated in the FIFO order. Also, $vis$ function is monotone w.r.t. event index for same thread events.
\end{proof}

\begin{proposition} \label{rf-implies-prec}
  $\forall w\ r.\ \lRF(w, r) \implies \totrans(w)< vis(r)$. Moreover, $\lRFE(w, r) \implies vis(w) < vis(r)$. 
\end{proposition}
\begin{proof}
  By construction the read event can only observe a write that has been executed before.
  If the read observes the write from another thread, then it should has been already propagated. 
\end{proof}

\begin{lemma} \label{fr-implies-prec}
  $\lFR(r, w) \implies vis(r) < vis(w)$. 
\end{lemma}
\begin{proof}
  Suppose that there exists such read $r$ that $\lFR(r, w) \land vis(w) < vis(r)$. It reads from some $w'$ so that $\lMO(w', w)$. By the definition of $\lMO$ it follows that $vis(w') < vis(w)$. But $r$ must read at least from the most recent propagated write, that is, at least from $w$.  
\end{proof}

\begin{lemma} \label{tso-g-fair}
  $G$ is fair.
\end{lemma}
\begin{proof}
  $\lMO$ is prefix-finite by construction.
  Suppose, by contradiction, that there exists $w$ for which there are infinitely many reads that read before it.
  Note that there is only a finite amount of read transitions before $vis(w)$ in the trace order.
  That is, there are infinitely many of them after $vis(w)$.
  But, as \cref{fr-implies-prec} shows, there are no such reads. 
\end{proof}

\begin{proposition} \label{po-fr-irreflexive}
  Relation $\lPO ; \lFR$ is irreflexive. 
\end{proposition}
\begin{proof}
  Suppose that $\lPO(w, r)$ and $\lFR(r, w)$. Then there exists a $w'$ such that $\lRF(w', r)$ and $\lMO(w', w)$. 

  Consider the $vis(w)$. By \cref{fr-implies-prec} $vis(r) < vis(w)$. Then the write buffer is not empty by the moment of read so $r$ should have observed $w$ or more recent write from the buffer. If $w'$ is from another thread, it cannot be observed. If $w'$ is from the same thread, because of $\lMO(w', w)$ it's true that $\lPO(w', w)$ and thus $w'$ cannot be observed as well. 
\end{proof}

\begin{proposition} \label{w-fr-co}
  $[\lW];\lPO_{loc};[\lR];\lFR;[\lW] \subseteq \lMO$.
\end{proposition}
\begin{proof}
  Suppose that $w_1$ and $w_2$ are related in the way shown above. Then either $\lMO(w_1, w_2)$ or $\lMO(w_2, w_1)$. Suppose the latter is the case. Then $\lFR(r, w_1)$ by the definition of $\lFR$. That is, there exists a cycle of form $\lPO ; \lFR$. According to the \cref{po-fr-irreflexive} it's not possible. 
\end{proof}

\begin{proposition} \label{po-implies-prec}
  Program order respects precedence relation except for write-read pair: 
  	\[ \forall x\ y.\ (\lPO \setminus (\lWs \times \lRs))(x, y) \implies vis(x) < vis(y). \] 
\end{proposition}
\begin{proof}
  If the first event is not a write, it becomes visible immediately: $vis(x)=\totrans(x)$. Also $\totrans(x) < \totrans(y)$. Finally, by \cref{w-prec-prop} $\totrans(y) \leq vis(y)$. 

  Suppose that the first event is a write. Then the second event is a write or an RMW, so $vis(x) < vis(y)$ by \cref{po-implies-co}.
\end{proof}

\begin{proposition}\label{rels-respect-vis}
  $\forall r \in \set{\lRFE, \lMO, \lFR}\ x\ y.\ r(x, y) \implies vis(x) < vis(y)$. 
\end{proposition}
\begin{proof}
  By the definition of $\lMO$ together with \cref{rf-implies-prec} and \cref{fr-implies-prec}. 
\end{proof}

\begin{lemma} \label{thm:scpl}
  The relation $\lPO_{loc} \cup \lRF \cup \lMO \cup \lFR$ is acyclic. 
\end{lemma}
\begin{proof}
  Suppose, by contradiction, that there exists such cycle. 
  We represent it as $\lPO$ chains (possibly made of one event only) alternated with $\lRFE \cup \lMOE \cup \lFRE$ edges. With that, we'll show that for every $e_i$ -- the first event in the cycle's $i$th $\lPO$ chain -- $vis(e_i) < vis(e_{i+1})$. Since it's a cycle, it'll result in $vis(e_0) < vis(e_0)$. 

Consider an arbitrary $i$th $\lPO$ chain and the corresponding $e_i$ (its $\lPO$-first event). Also consider $r_i$ (the next $\lRFE \cup \lMOE \cup \lFRE$ edge) which relates $t_i$ (the $\lPO$-last event in $i$th chain) and $e_{i+1}$.

If $e_i = t_i$, then $vis(e_i)=vis(t_i)$. Also, by \cref{rels-respect-vis}, $vis(t_i) < vis(e_{i+1})$.

Suppose $e_i \neq t_i$. If $\lTYP(e_i) = \lW \land \lTYP(t_i) = \lR$, then $r_i \in \lFR$. Then by \cref{w-fr-co} $\lMO(e_i, e_{i+1})$ so $vis(e_i) < vis(e_{i+1})$. If, on the other hand, $t_i \in \lW$, then by \cref{w-fr-co} $vis(e_i) < vis(t_i)$ otherwise, if $e_i \in \lR$, \cref{po-implies-prec} applies, so again $vis(e_i) < vis(t_i)$. Finally $vis(t_i) < vis(e_{i+1})$ by \cref{rels-respect-vis}. 

In all cases it's true that $vis(e_i) < vis(e_{i+1})$. 
\end{proof}

\begin{lemma} \label{thm:hbtso-acyclic}
  The relation $\lPPO \cup \lRFE \cup \lMO \cup \lFR$ is acyclic. 
\end{lemma}
\begin{proof}
  The argument from \cref{thm:scpl} applies. By definition of $\lPPO$ we don't consider the case of $\lPO \cap (\lWs \times \lRs)$ --- a special case of $\lPO \cap (\lW \times \lR)$ which was the only one in the proof of \cref{thm:scpl} where the restriction to the same location was needed. 
\end{proof}

This concludes the proof of~\cref{thm:tso-equiv-forward}.
\end{proof}

\begin{lemma}\label{thm:tso-equiv-backward}
$\Behmf{\M_\TSO}\supseteq\Beh{\G_\TSO^\fair}$.
\end{lemma}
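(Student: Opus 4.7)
The plan is to build, from any fair $\G_\TSO$-consistent execution graph $G$ with $G.\lE = \Event(\beh)$, a memory-fair observable trace $\trace$ of $\M_\TSO$ with $\beh(\trace) = \beh$. Applying \cref{cor:hb-pf} and \cref{prop:respect-enum}, I would first fix an enumeration $\nu$ of $G.\lE \setminus \Init$ respecting $G.\lHBTSO$. Since $\lPPO$, $\lMO$, $\lRFE$, and $\lFR$ are all contained in $\lHBTSO$, this $\nu$ orders same-location writes by $\lMO$, each external $\lRF$-source before its reader, each $\lFR$-source read before its target, and all $\lPO$-edges except the pure-write-to-pure-read ones.

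I would then scan $\nu$ from left to right, maintaining for each thread $\tid$ a FIFO buffer of pending writes, and append a short block of $\M_\TSO$-transitions for each event $e = \nu(i)$ with $\tid = \lTID(e)$. For a write event $e$, the block buffers $e$ if it is not already in $\tid$'s buffer (it may have been pre-buffered earlier for some $\lPO$-later read) and then appends one $\proplab(\tid)$ that propagates the head of $\tid$'s buffer. For a read event $e$, the block pre-buffers, in $\lPO$ order, all $\lPO$-prior writes of $\tid$ not yet in the buffer, and then appends the read transition for $e$. For an RMW event $e$, the block drains $\tid$'s buffer completely with $\proplab(\tid)$ transitions and then appends the RMW transition. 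Intra-thread buffering respects $\lPO$, propagation always removes the $\lPO$-oldest entry, and every buffered write eventually receives its propagation (either at its own $\nu$-step or in a subsequent RMW drain), so $\trace$ is memory-fair; the observable projection onto $\tid$ equals $\beh(\tid)$, yielding $\beh(\trace) = \beh$.

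The main obstacle is verifying, at each read and RMW step, that the operational value-check permits $e$ to observe the intended value $\lVALW(w)$ for $w \defeq G.\lRF^{-1}(e)$. A careful accounting of buffer contents shows that at $e$'s step $\tid$'s buffer equals the set of writes of $\tid$ that have been pre-buffered but whose own $\nu$-step still lies ahead of $e$. When $w$ is $\lPO$-internal to $\tid$, the pre-buffer step places $w$ in the buffer and $w$ is the $\lPO$-latest entry for $\lLOC(e)$: any $\lPO$-later buffered $w'$ writing to $\lLOC(e)$ would give $\tup{w, w'} \in G.\lMO$ and thus $\tup{e, w'} \in G.\lFR$, which with $\tup{w', e} \in G.\lPO\rst{\lLOC}$ contradicts irreflexivity of $G.\lSC_\lLOC$. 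When $w$ is external, every $\lPO$-prior same-location write $w'$ of $\tid$ must satisfy $\tup{w', w} \in G.\lMO$ (otherwise a $G.\lSC_\lLOC$ cycle again arises), so $\nu$ places $w'$ before $w$, which itself lies before $e$ by $\lRFE$, meaning $w'$'s own step has already been processed; hence $w'$ is no longer in $\tid$'s buffer at $e$'s moment, and memory holds $\lVALW(w)$ because $w$ was propagated at its own $\nu$-step while any $\lMO$-later $w''$ satisfies $\tup{e, w''} \in G.\lFR \subseteq G.\lHBTSO$ and is therefore still unprocessed. The RMW case reduces to the read case after the buffer drain.
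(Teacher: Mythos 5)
Your construction is essentially the paper's: the same enumeration of $G.\lE\setminus\Init$ respecting $G.\lHBTSO$ (via \cref{cor:hb-pf} and \cref{prop:respect-enum}), the same per-event blocks (pre-buffering $\lPO$-prior writes at reads, one propagation per write's own step, RMWs on an empty buffer), the same memory-fairness argument, and the same use of $\lSC_\lLOC$-irreflexivity and $\lHBTSO$-acyclicity to justify the read values. The only point to tighten is your internal-$\lRF$ sub-case: since neither $\lPO\cap(\lWs\times\lRs)$ nor internal $\lRF$ is respected by $\nu$, the source write $w$ of an internal read may already have been enumerated and propagated before the read's block, so it is no longer in the buffer; that scenario is closed by exactly the argument you give for the external case (no same-location write of $\tid$ remains buffered, and $w$ is the last same-location write propagated to memory), but it needs to be stated.
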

\begin{proof}

\begin{lemma} \label{thm:tso-dec-op}
Let $G$ be a fair \TSO-consistent execution graph of $P$ where the amount of threads is finite. 
Then there exists a fair \TSO execution trace of $P$ with the same behavior. 
\end{lemma}
\begin{proof}
  The proof structure is the same as in \cref{thm:sc-equiv-backward}: we construct an enumeration of graph events that respects an acyclic relation on it, then obtain a run of program according to that enumeration and finally prove that \TSO memory subsystem allows that run's transitions. 

  We'll consider a weaker relation as a base for the enumeration, namely, $\lHBTSO \defeq (\lPPO \cup \lRFE \cup \lMO \cup \lFR)^+$.  
  It is is prefix-finite by \cref{cor:hb-pf}.
  
  By \cref{prop:respect-enum} there exists an enumeration $\set{e_i}_{i\in\N}$ of $G.\lE \setminus \Init$ that respects $\lHBTSO$.

  According to that enumeration we build a run of a concurrent system.
  A corresponding sequence of transitions is obtained with the following algorithm. 
  We start with the initial state.
  Then we enumerate (w.r.t. $\lHBTSO$) graph events and for each event extend the current run prefix with a block of transitions.

\begin{itemize}
\item If the current event is a RMW, the block will consist only of a RMW transition. 
\item If the current event is a non-RMW read, the block will consist of write transitions for all write events that $\lPO$-precede the current event but haven't been enumerated or processed in such way yet.
  Such events may appear because $\lHBTSO$ doesn't respect the program order between non-RMW writes and reads in general.
  Additionally, the last block event will be the read transition itself. 
\item If the current event is a non-RMW write and it hasn't been already included in some read's block the way described above, the current block will include the corresponding write transition.
  Additionally, whether the write has been processed before or not, the block will include the propagation transition. 
\end{itemize}

It remains to show that the resulting labels sequence is actually the sought trace. 
\end{proof}

\begin{lemma} \label{lem:tso-alg}
  The sequence $\mu$ of transitions obtained by the algorithm above is a fair \TSO execution trace. 
\end{lemma}
\begin{proof}
  Suppose that on each step \TSO memory subsystem allows the transition obtained by the algorithm.
  Note that $\lSRC(\mu(0))= \M_\TSO.\linit$ and adjacent states agree.
  That is, $\mu$ is a run of ${\M_\TSO}$.

  Let $\trace = \lTLAB \circ \mu$ be a trace of ${\M_\TSO}$.
  For each $\tid \in \Tid$ the restriction $\trace\rst{\tid}$ is equal to $\trace_\tid$ by construction: even if the enumeration outputs events out of program order, due to the specific form of read blocks they appear in the trace according to $\lPO$. So $\beh(\trace) = \beh(G) = \beh$.
  Also, for each write transition in the run there exists a propagation transition, so silent memory transition labels $(\tid, \proplab)$ are eventually taken.
  That is, $\trace$ is memory-fair.

  It remains to show that on each step the \TSO memory subsystem allows the transition.
  Namely, we need to show that propagation transitions occur on non-empty thread buffer, RMW transitions occur on empty thread buffer and read/RMW labels agree with current buffer and memory contents.
  The first condition holds because propagation occurs only after write transition that puts an item into a buffer, and propagation is the only way to clear a buffer.
  The second condition holds because enumeration respects the program order between writes and RMW, so by the moment of such transition's execution all previous thread's writes will be already propagated.

  To prove the latter condition we'll consider the write event that a given read event reads from in $G$ and show that this write coincides with the one determined by the \TSO memory subsystem\footnote{Here we reason about events in the context of operational semantics. Formally, it only deals with event labels. But every value stored in memory or buffer is placed there due to some transition execution which in turn is determined by some graph event, since the trace under consideration is built from $G$. So we implicitly assume such mapping between memory values and graph events that produce them. }. 
  Let $r$ be the read event and $loc$ the location it reads from, $w_{tr}$ the write that has been observed during the transition and $w_{\lRF}$ the $\lRF$-predecessor of $r$.
  Suppose by contradiction that $w_{tr} \neq w_{\lRF}$. 
  \begin{itemize}
  \item If the thread buffer has non-propagated writes to the location $loc$, the $w_{tr}$ will be the latest non-propagated write to $loc$ of the same thread. 
    \begin{itemize}
    \item If $\lMO(w_{\lRF}, w_{tr})$, then $\lFR(r, w_{tr})$. Therefore, there is an $\lPO ; \lFR$ cycle prohibited by \TSO.
    \item Suppose that $\lMO(w_{tr}, w_{\lRF})$.
      \begin{itemize}
      \item If $w_{\lRF}$ belongs to the same thread, it can only appear after $r$ in the program order, since $w_{tr}$ is the latest write.
        So there is a $\lPO; \lRF$ cycle prohibited by \TSO. 
      \item If $w_{\lRF}$ belongs to the other thread, it has been already been propagated since $\lRFE$ is respected by the enumeration. But it means that $w_{tr}$ is also propagated. 
      \end{itemize}
    \end{itemize}
  \item If all previous writes to $loc$ from the same thread have been already propagated, the $w_{tr}$ will be the $\lMO$-latest (in the enumeration) write to $loc$.
    \begin{itemize}
    \item If $\lMO(w_{\lRF}, w_{tr})$, then $\lFR(r, w_{tr})$.
      Then, since the enumeration respects $\lFR$, $w_{tr}$ will be preceded by $r$ in the trace order and thus cannot be observed during the transition.
    \item Suppose that $\lMO(w_{tr}, w_{\lRF})$.
      Then $w_{\lRF}$ is not propagated yet.
      The only case when $r$ can still read from it is when $w_{\lRF}$ belongs to the same thread and is buffered. 
      But it contradicts the condition on the empty buffer.  \qedhere
    \end{itemize}
  \end{itemize} 
\end{proof}

This concludes the proof of~\cref{thm:tso-equiv-backward}.
\end{proof}

\section{Termination Checking}\label{sec:termination}

Just like termination, non-termination can be proven by considering only the iteration of a spinloop which reads only from $\lMO$-maximal stores.
If in a finite graph some threads are blocked by spinloop iterations which read from only $\lMO$-maximal stores, and all other threads have terminated, then the graph can be extended to a fair infinite graph with infinite spinloops.
\begin{definition}
A thread $\tid$ is terminated in execution graph $G$ if in the (unique) behavior $\beh$ with $G.\lE = \Event(\beh)$, $\mu_\tid(\beh)$ is terminating.
 
An execution graph $G$ is a non-termination witness if there is a thread which is not terminated in $G$ and all threads $\tid$ which are not terminated in $G$ have finite runs $\mu_\tid(\beh)$ where $\beh$ is the (unique) behavior with $G.\lE = \Event(\beh)$, and these runs end in a spinloop iteration reading from $\lMO$-maximal stores, i.e., for all $\tid$ there is a spinloop iteration $[n_\tid,{n_\tid}']$ of $\tid$ in $\beh$ where ${n_\tid}'$ is the length of $\mu_\tid(\beh)$ and
\[ [ \set{ e \in G.\lE^\tid  \st n_\tid \le \lSN(e) \le {n_\tid}'  } ]; G.\lFR = \emptyset \]
\end{definition}

Every non-termination witness proves the existence of a fair graph with infinite spinloops.
\begin{theorem} \label{lem:non-termination}
Let $G$ be a non-termination witness with $G.\lE = \Event(\beh)$ where $\beh$ is a behavior of a deterministic program.
Then there is a fair execution graph $G'$ with $G'.\lE = \Event(\beh')$ where $\beh'$ is a behavior of the program and every thread which is not terminated in $G$ has an infinite spinloop in $\beh'$.
\end{theorem}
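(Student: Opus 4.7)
The plan is to construct $G'$ by extending $\beh$ with infinite repetitions of the final spinloop iteration of each non-terminated thread, keeping $\lMO$ fixed, and extending $\lRF$ so that every new read reads from the same $\lMO$-maximal write as the corresponding read in the original final iteration. Concretely, for each thread $\tid$ not terminated in $G$, let $[n_\tid, n_\tid']$ be its final spinloop iteration in $\beh$; define $\beh'(\tid)$ to agree with $\beh(\tid)$ up to index $n_\tid'$ and then to consist of infinitely many consecutive copies of the label block $\beh(\tid)(n_\tid), \dots, \beh(\tid)(n_\tid')$. For terminated threads set $\beh'(\tid) = \beh(\tid)$.

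The first step is to show that $\beh'(\tid)$ is a legitimate sequential trace of $\prog(\tid)$. By induction on the number of appended copies, one verifies that after each copy the state of $\prog(\tid)$ returns to the loop header. The base case is clause~(2) of \cref{def:spinloop iteration}; for the inductive step, the iteration performs only reads (clause~(1)), which are non-deterministic at the program-LTS level, so determinism of $\prog(\tid)$ combined with the fact that the copied labels agree on type, location and read value with the base iteration forces the same sequence of program transitions to be taken, ending again in the loop header. Hence $\beh'$ is a behavior of the program, and each non-terminated thread has an infinite spinloop in $\beh'$ by construction.

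Next we build $G'$ with $G'.\lE \defeq \Event(\beh')$, $G'.\lMO \defeq G.\lMO$ (since no new writes are created), and $G'.\lRF$ obtained from $G.\lRF$ by, for every copy $k \geq 1$ and every read $e$ in the $k$-th copy of the iteration in thread $\tid$, setting $G'.\lRF^{-1}(e)$ to equal $G.\lRF^{-1}$ applied to the corresponding read in the $0$-th copy (i.e., in $G$'s final iteration). Well-formedness as an execution graph is routine: locations and values match by the choice of $\lRF$-sources, $\lRF^{-1}$ remains functional because every read has a unique image, and $G'.\lMO = G.\lMO$ stays a disjoint union of per-location strict total orders on $G'.\lW = G.\lW$.

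For fairness, $G'.\lMO$ is prefix-finite (in fact finite) because $G$ is. For $G'.\lFR$, the key observation is that by the non-termination witness hypothesis, no event in the final iteration of any non-terminated thread in $G$ has an $\lFR$-successor, i.e., the corresponding reads read from $\lMO$-maximal writes of their location. Since $G'.\lMO = G.\lMO$, those writes remain $\lMO$-maximal in $G'$, and by construction every new read in $G'$ also reads from such an $\lMO$-maximal write; therefore no new $\lFR$ edges are introduced and $G'.\lFR = G.\lFR$, which is finite. The main obstacle is the careful bookkeeping in the inductive argument that $\beh'(\tid)$ is a valid sequential trace---this relies crucially on combining determinism with the matching of read values across copies, which is precisely what the choice of $G'.\lRF$-sources guarantees.
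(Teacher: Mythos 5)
Your proposal is correct and follows essentially the same route as the paper's proof: define $\beh'$ by repeating the final $\lMO$-maximal-reading spinloop iteration of each non-terminated thread infinitely often (justified by determinism), set $G'.\lMO = G.\lMO$, route every copied read to the same $\lMO$-maximal write as its original via the copy-to-original correspondence, and conclude fairness because no new $\lFR$ edges arise. Your inductive argument that $\beh'(\tid)$ is a valid sequential trace is just a more explicit rendering of the paper's one-line appeal to determinism.
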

\begin{proof}
We first define $\beh'$. For each $\tid$ we defined $\beh'(\tid)$ by case analysis on whether $\tid$ is terminated in $G$.
If $\tid$ is terminated, then $\beh'(\tid) = \beh(\tid)$. Otherwise, since $G$ is a non-termination witness, the run $\mu_\tid(\beh)$ of $\tid$ is finite. Let $K_\tid$ be the length of this run. For steps $k < K_\tid$ inside this run, the behavior is unchanged
\[ k < K_\tid \ \implies \ \beh'(\tid,k) = \beh(\tid,k) \]
For later steps, we repeat the last spinloop iteration infinitely often. Let $[n_\tid,{n_\tid}']$ be the final spinloop iteration of $\tid$ in $\beh$ which reads from $\lMO$-maximal stores, and $L_\tid={n_\tid}'-n_\tid+1$ the length of that iteration (i.e., the number of transitions in the iteration). Then the infinite repetition of this spinloop iteration is defined by
\[ k \ge K_\tid \ \implies \ \beh'(\tid,k) = \beh(\tid,K_\tid-L_\tid + (k-K_\tid) \mod L_\tid) \]

Due to determinism, each of these spinloop iterations in the program reads exactly the same values from the same locations and hence traverses exactly the same states. This shows that $\beh'$ is a behavior of the program.

We proceed to define $G'$. The set of events is defined by $G'.\lE =  \Event(\beh')$. Modification order is unchanged ($G'.\lMO = G.\lMO$). For the reads-from relation we let each of the new events read from the $\lMO$-maximal store that the final spinloop iteration read from.
We first define a mapping $l(\tid, k)$ which maps each serial number $k$ of an event in $G'$ to the serial number of the corresponding event in $G$
\[ l(\tid,k) = \begin{cases} k & \tid \ \text{is terminated} \ \lor \ k < K_\tid \\ K_\tid-L_\tid + (k-K_\tid) \mod L_\tid & \text{o.w.} \end{cases} \]
With this definition, we have $\beh'(\tid,k) = \beh(\tid,l(\tid,k))$.
We lift this mapping to events in the obvious way:
\[ l(e) = \ev{l(\lTID(e),\lSN(e))}{\lTID(e)}{\lLAB(e)} \]
Observe that we have $l(e) \in G.\lE$ for all $e \in G'.\lE$. 
Furthermore $G.\lE \subseteq G'.\lE$; hence $l$ is a relation on $G'.\lE$. 
It relates new events to the corresponding old event. 
Its inverse $l^{-1}$ relates old events to the corresponding new events.
This allows defining the reads-from relation of $G'$ as
\[ G'.\lRF \ =  \ G.\lRF;l^{-1} \]

Since $G'.\lMO$ is $G.\lMO$, it remains prefix-finite.
Furthermore, the $\lMO$-maximal stores in $G'$ are exactly the $\lMO$-maximal stores in $G$.
Since all the new reads added read from $\lMO$-maximal stores, they have no $\lFR$-successors. 
Thus $\lFR$ remains also prefix finite.
Therefore $G'$ is fair, and the claim follows.
\end{proof}

This infinite extension of the graph does not in general preserve $\G$-consistency.
For example, consider an (absurd) declarative memory system $\G$ which only admits finite execution graphs. 
We capture declarative memory systems which allow the infinite extension by the following predicate.
\begin{definition}
A declarative memory system $\G$ is infinitely-read-extensible if adding infinitely many reads to $\lMO$-maximal elements preserves $\G$-consistency, i.e., if $G \in \G$ then so is every $G'$ with 
\[ G'.\lE= G.\lE \cup E \]
with $E \subseteq \lR$ satisfying
\[ [E];G.\lFR = \emptyset \] 
\end{definition}
Practically relevant declarative memory systems are all infinitely-read-extensible. This holds in particular for $\G_{\set{\SC, \TSO, \RA}}$.
\begin{lemma} \label{lem:infinitely read extensible}
\[ \G_{\set{\SC, \TSO, \RA}} \quad \text{are infinitely-read-extensible} \]
\end{lemma}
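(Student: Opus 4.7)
The plan is to exploit the uniform structure of the three consistency predicates, each of which requires irreflexivity of some relation $R_X$ built from $G.\lPO$, $G.\lRF$, $G.\lMO$, and $G.\lFR$ (with $X\in\{\SC,\TSO,\RA\}$). The crucial observation is that a new read $e \in E$ has almost no outgoing edges in $G'$ in any of these basic relations: as a read, $e$ is not a source of $G'.\lRF$ or $G'.\lMO$; by the hypothesis (read as $[E];G'.\lFR=\emptyset$, which is the natural use-case where the reads in $E$ read from $\lMO$-maximal writes), $e$ has no outgoing $G'.\lFR$ either; and since the events of $E$ are appended to the tails of their respective threads with serial numbers strictly beyond those already in $G$, every $G'.\lPO$-successor of $e$ again lies in $E$.

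Given this, each of the three cases follows the same schema. Suppose for contradiction that $R_X$ on $G'$ has a cycle. Since $G\in\G_X$, the cycle is not contained in $G.\lE$, so it visits some $e\in E$. Following the cycle forward from $e$, each step is a $G'.\lPO$-step (possibly mediated by a derived relation like $\lPPO$, $\lPO\rst{\lLOC}$, or $\lHBRA\rst{\lLOC}$), and lands in $E$ in the same thread at a strictly larger serial number. Repeating this, the traversal never leaves $E$ and never decreases in serial number, contradicting the finiteness of the cycle. For $\G_\SC$ this is immediate for $\lHBSC$; for $\G_\TSO$ one notes that each outgoing $G'.\lPO$ edge from a (pure) read $e$ lies in $\lPPO$ and, if same-location, in $\lPO\rst{\lLOC}$; for $\G_\RA$ one treats $\lHBRA\rst{\lLOC}$ similarly.

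The main technical step, and the only part that requires care, is showing that the derived relation $G'.\lHBRA\rst{\lLOC}$ in the $\G_\RA$ case does not provide some ``hidden'' way to escape from $E$. This is handled by a small auxiliary lemma: from any $e'\in E$ there are no outgoing $G'.\lRF$ edges (since $e'$ is not a write), so every $(G'.\lPO\cup G'.\lRF)^+$-path starting in $E$ only uses $\lPO$ steps and thus stays within $E$. Once this confinement lemma is in place, the three subproofs differ only in which derived relation one must unfold, and the overall argument reduces to observing that intra-thread $\lPO$ on strictly increasing serial numbers is acyclic.
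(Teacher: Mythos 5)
The paper itself gives no proof of this lemma (it is literally ``Omitted''), so there is no argument to compare yours against; what you have written is a correct proof that fills the gap. The key observation --- that a newly added pure read $e\in E$ has \emph{no} outgoing edges in any of $G'.\lRF$, $G'.\lMO$, $G'.\lFR$ (the last by hypothesis), and that its only outgoing $G'.\lPO$ edges land back in $E$ in the same thread at strictly larger serial numbers --- is exactly what is needed, and the confinement lemma for $(G'.\lPO\cup G'.\lRF)^+$ correctly disposes of the only derived relation ($\lHBRA\rst{\lLOC}$ in the \RA{} case) that could conceivably let a cycle escape $E$. The strictly-increasing-serial-number argument then kills any putative cycle through $E$, and a cycle avoiding $E$ lives in $G$ and contradicts $G$'s consistency, since the restrictions of $G'$'s relations to $G.\lE$ recover $G$'s relations (your confinement lemma is also what guarantees this for $\lHBRA$).

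Three small points worth making explicit if you write this up. First, the paper's side condition $[E]\seq G.\lFR=\emptyset$ is vacuous as literally written (the events of $E$ are not in $G.\lE$); your reading of it as $[E]\seq G'.\lFR=\emptyset$, i.e.\ the new reads read from $\lMO$-maximal writes, is clearly the intended one and is what the application in the non-termination theorem provides. Second, your argument needs $E$ to consist of \emph{pure} reads: the paper's $\lR$ includes $\lU$, and an RMW in $E$ would be a write, hence a source of $\lMO$ and $\lRF$ edges, breaking the ``no outgoing edges'' step. The intended use (spinloop iterations with $\lTYP=\lR$) supplies this, but the lemma as stated should be read with $E\suq\lR\setminus\lW$. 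Third, the fact that the events of $E$ sit at the tails of their threads is not an extra assumption but a consequence of well-formedness of $G'.\lE$ (serial numbers in each thread must form a downward-closed set and must not collide with those of $G.\lE$), so your proof does not silently strengthen the hypotheses.
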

\begin{proof}
Omitted.
\end{proof}

\begin{theorem} \label{lem:non-termination-MM}
If $\G$ is infinitely-read-extensible and $G$ is $\G$-consistent, so is $G'$ (where $G$, $G'$ are the graphs from \cref{lem:non-termination}).
\end{theorem}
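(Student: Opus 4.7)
The plan is to exhibit $G'$ as an instance of the extension operation captured by infinite-read-extensibility and then invoke the property directly. Concretely, let $E = G'.\lE \setminus G.\lE$ denote the fresh events introduced when $\beh'$ repeats each unfinished thread's final spinloop iteration ad infinitum. The argument then has two ingredients: first, that every event in $E$ is a read; second, that every event in $E$ reads from a write which is $\lMO$-maximal in $G$ (equivalently, $G'$), so that $E$ has no outgoing $\lFR$ edges.

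For the first ingredient I would appeal to \cref{def:spinloop iteration}: by definition a spinloop iteration consists exclusively of read steps, hence $E \subseteq \lR$. For the second ingredient I would use the explicit construction of $G'$ in the proof of \cref{lem:non-termination}: the reads-from relation is defined by $G'.\lRF = G.\lRF \seq l^{-1}$, where $l$ maps every new event to the corresponding event of the final spinloop iteration of its thread. By hypothesis on non-termination witnesses, every read of such a final iteration has no $\lFR$-successor in $G$, \ie\ reads from an $\lMO$-maximal write of its location. Since $G'.\lMO = G.\lMO$, the same writes remain $\lMO$-maximal in $G'$, and each new read in $E$ consequently reads from an $\lMO$-maximal write. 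Therefore $[E]\seq G'.\lFR = \emptyset$, which (since $E \cap G.\lE = \emptyset$) is the condition demanded by infinite-read-extensibility.

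With both ingredients in hand, I would simply invoke the assumption that $\G$ is infinitely-read-extensible on $G \in \G$ and the set $E$, obtaining $G' \in \G$. The only subtlety I anticipate is a bookkeeping one: matching the exact form of the extension stipulated in the definition of infinite-read-extensibility against the concrete $G'$ produced by \cref{lem:non-termination} (in particular verifying that $\lMO$ is untouched and that the new $\lRF$-edges target $\lMO$-maximal writes). This is not a deep obstacle, but it is where the proof must be written carefully; once it is done, the result is immediate.
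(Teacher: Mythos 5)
Your proposal is correct and follows essentially the same route as the paper's proof: identify the new events $E = G'.\lE \setminus G.\lE$, show they are reads whose $\lRF$-sources (via $G'.\lRF = G.\lRF \seq l^{-1}$) are the $\lMO$-maximal writes read by the final witness iterations, conclude $[E]\seq G'.\lFR = \emptyset$, and invoke infinite-read-extensibility. The paper merely carries out the last step as an explicit relational computation $[E]\seq G'.\lFR = [E]\seq l\seq G.\lFR = \emptyset$, which is exactly the bookkeeping you flagged.
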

\begin{proof}
By assumption, the events in the final spinloop iterations read from $\lMO$-maximal events. The newly added events
\[ E = G'.\lE - G.\lE \]
are by definition all reading from those same events. Observe first that the newly added events are all mapped by $l$ to events in spinloop iterations
\[ [E];l = [ \set{ e \in G.\lE  \st \exists \tid .\; n_\tid \le \lSN(e) \le {n_\tid}'  } ] \]
Since $G$ is a non-termination witness by assumption, these events read from $\lMO$ maximal stores
\[ [ \set{ e \in G.\lE  \st \exists \tid .\; n_\tid \le \lSN(e) \le {n_\tid}'  } ] ;G.\lFR = \emptyset \]

We conclude by definition of $G'$ that the newly added events $E$ read from $\lMO$-maximal stores:
\begin{align*}
 [E];G'.\lFR \ &=\ [E];G'.\lRF^{-1};G'.\lMO 
 \\ &= \ [E];(G.\lRF;l^{-1})^{-1};G.\lMO 
 \\ &= \ [E];l;G.\lRF^{-1};G.\lMO 
 \\ &= \ [E];l;G.\lFR 
 \\ &= \ [ \set{ e \in G.\lE  \st \exists \tid .\; n_\tid \le \lSN(e) \le {n_\tid}'  } ] ;G.\lFR
 \\ &= \ \emptyset 
\end{align*} 
Since $\G$ is infinitely-read-extensible, the claim follows.
\end{proof}

\begin{lemma}
If $G$ from \cref{lem:non-termination} is $\G_{\set{\SC, \TSO, \RA}}$-consistent, so is $G'$.
\end{lemma}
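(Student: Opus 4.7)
The statement is an immediate corollary of the two results immediately preceding it: \cref{lem:infinitely read extensible} establishes that $\G_{\SC}$, $\G_{\TSO}$, and $\G_{\RA}$ are all infinitely-read-extensible, and \cref{lem:non-termination-MM} guarantees that infinite-read-extensibility is exactly the property needed to lift $\G$-consistency from $G$ to the extension $G'$ constructed in \cref{lem:non-termination}. So the plan is simply to instantiate the latter theorem with each of the three memory systems in turn, using the former lemma as the hypothesis.

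Concretely, I would proceed as follows. First, fix $\G \in \set{\G_\SC, \G_\TSO, \G_\RA}$ and assume $G$ is $\G$-consistent. By \cref{lem:infinitely read extensible}, $\G$ is infinitely-read-extensible. Applying \cref{lem:non-termination-MM} to this $\G$ and to the pair $(G, G')$ from \cref{lem:non-termination} then yields $\G$-consistency of $G'$. Since this argument works uniformly for each of the three choices of $\G$, the conclusion follows.

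There is no real obstacle here: all of the genuine work has been packaged into \cref{lem:infinitely read extensible} (a model-specific check that the three consistency axioms survive the addition of $\lFR$-terminal reads) and into \cref{lem:non-termination-MM} (the generic argument showing that the new events of $G'$ are $\lFR$-terminal, via the computation $[E];G'.\lFR = [E];l;G.\lFR = \emptyset$, which relies on the assumption that $G$ is a non-termination witness). The present lemma is therefore just a two-line combination of these, and the only thing one has to be careful about is that the graph $G'$ produced by \cref{lem:non-termination} has exactly the shape required by the definition of infinite-read-extensibility, namely that $G'.\lE = G.\lE \cup E$ with $E \suq \lR$ and $[E];G.\lFR = \emptyset$ — but this is immediate from the construction of $G'$ (each new event is a read in a repeated spinloop iteration that already reads from $\lMO$-maximal writes in $G$).
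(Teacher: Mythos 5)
Your proposal is correct and matches the paper's own proof, which likewise observes that the lemma follows directly from \cref{lem:non-termination-MM} and \cref{lem:infinitely read extensible}. The extra detail you give about why $G'$ has the shape required by the definition of infinite-read-extensibility is a reasonable elaboration but not a departure from the paper's argument.
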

\begin{proof} 
Follows directly from \cref{lem:non-termination-MM,lem:infinitely read extensible}.
\end{proof}

Using \cref{lem:non-termination-MM,thm:spinloop termination basic}, we can use model checking to automatically decide the termination of spinloops for any client which does not generate infinite modification orders. The model checker must simply generate all finite executions of a program in which spinloops are executed for at most one iteration. 
This enumerates all non-termination witnesses (modulo staying in spinloops longer than necessary). If any non-termination witness exists, then by \cref{lem:non-termination-MM} there is an infinite, fair extension of the graph in which the spinloop never terminates. 
If there is no non-termination witness, then all spinloops must terminate after reading from $\lMO$-maximal stores. Thus by \cref{thm:spinloop termination basic}, all spinloops terminate.

In \cite{vsync} we have implemented this scheme as an extension of GenMC \cite{genmc}, a model checker which can be parametrized with declarative memory systems satisfying certain conditions (MM1--MM4 in \cite{genmc}). 
We applied the extended model checker to several locks. Among others, the MCS lock, waiter-count based locks, and the Linux qspinlock can get stuck in a spinloop when using incorrect barrier combinations.
Many of these barrier combinations ensure safety of the locks.
Since non-termination problems under weak consistency can not be detected by existing model checkers, and these barrier combinations are otherwise correct, existing model checkers may give verification engineers an unwarranted level of confidence for programs that have spinloops.

\end{document}